\documentclass[a4paper]{amsart} 
\usepackage[utf8]{inputenc}
\usepackage[english]{babel}
\usepackage[T1]{fontenc}

\usepackage{latexsym,amsmath,amsfonts,amssymb,amsthm,amstext,textcomp,mathabx,relsize}
\usepackage{mathrsfs,dutchcal,bbm}
\usepackage{enumerate}
\usepackage{hyperref}

\usepackage[capitalize]{cleveref}



\makeatletter
\def\@seccntformat#1{%
	\protect\textup{\protect\@secnumfont
		\ifnum\pdfstrcmp{subsection}{#1}=0 \bfseries\fi
		\ifnum\pdfstrcmp{subsubsection}{#1}=0 \itshape\fi
		\csname the#1\endcsname
		\protect\@secnumpunct
	}%
}
\renewcommand{\@upn}{}
\DeclareRobustCommand{\crefnosort}[1]{%
	\begingroup\@cref@sortfalse\cref{#1}\endgroup
}
\makeatother

%
%
%
\newcommand{\EE}{\mathbb{E}}
\newcommand{\CC}{\mathbb{C}}

\newcommand{\NN}{\mathbb{N}}
\newcommand{\PP}{\mathbb{P}}

\newcommand{\RR}{\mathbb{R}}

%
%
\newcommand{\ii}{\mathrm{i}}
\newcommand{\eul}{\mathrm{e}}
\newcommand{\ev}{\mathrm{ev}}
\newcommand{\supp}{\mathrm{supp}}

\newcommand{\Id}{\mathrm{d}}

%
%
%

\newcommand{\LO}{\mathcal{B}}
\newcommand{\dom}{\mathcal{D}}
\newcommand{\fdom}{\mathcal{Q}}
\newcommand{\Fock}{\mathcal{F}}
\newcommand{\UV}{\Lambda}
\newcommand{\ad}{a^\dagger}
\newcommand{\id}{\mathbbm{1}}

\newcommand{\expv}[1]{\epsilon(#1)}
\newcommand{\p}{\mathrm{p}}
\newcommand{\bos}{\mathrm{b}}
\newcommand{\ren}{\mathrm{ren}}
%
%
\newcommand{\ve}{\varepsilon}
\newcommand{\vp}{\varphi}

\newcommand{\vt}{\vartheta}

\newcommand{\vo}{\varpi}
%
%
\newcommand{\wt}[1]{\widetilde{#1}}
\newcommand{\ol}[1]{\overline{#1}} 

\newcommand{\wh}[1]{\widehat{#1}}  
 
%
%

\newcommand{\fr}[1]{\mathfrak{#1}}
\newcommand{\mc}[1]{\mathcal{#1}}
\newcommand{\scr}[1]{\mathscr{#1}}
%
%
\renewcommand{\Im}{\mathrm{Im}}
\renewcommand{\Re}{\mathrm{Re}}
\renewcommand{\le}{\leqslant}
\renewcommand{\ge}{\geqslant}
%
%
%
%
\theoremstyle{plain}
\newtheorem{thm}{Theorem}[section]

\newtheorem{lem}[thm]{Lemma}
\newtheorem{cor}[thm]{Corollary}
\newtheorem{prop}[thm]{Proposition}
\theoremstyle{definition}
\newtheorem{defn}[thm]{Definition}

\theoremstyle{remark}
 
\newtheorem{rem}[thm]{Remark}
\numberwithin{equation}{section}
%
\crefname{equation}{}{}
\Crefname{equation}{}{}
\crefname{enumi}{}{}
\Crefname{enumi}{}{}
\crefname{lem}{Lemma}{Lemmas}
\Crefname{lem}{Lemma}{Lemmas}
\crefname{thm}{Theorem}{Theorems}
\Crefname{thm}{Theorem}{Theorems}
\crefname{prop}{Proposition}{Propositions}
\Crefname{prop}{Proposition}{Propositions}
\crefname{defn}{Definition}{Definitions}
\Crefname{defn}{Definition}{Definitions}

\title[Feynman--Kac formula for the relativistic Nelson model]{Feynman--Kac formula and asymptotic behavior of the minimal energy 
for the relativistic Nelson model in two spatial dimensions}
\author{Benjamin Hinrichs}
\address{Benjamin Hinrichs, Friedrich-Schiller-Universit\"at Jena, Institut f\"ur Mathematik, Ernst-Abbe-Platz 2, 07743 Jena, Germany // 
	Present Address: Universit\"at Paderborn, Institut f\"ur Mathematik, Warburger Str. 100, 33098 Paderborn, Germany}
\email{benjamin.hinrichs@math.upb.de}

\author{Oliver Matte}
\address{Oliver Matte, Aalborg Universitet, Institut for Matematiske Fag, Skjernvej 4a, 9220 Aalborg, Denmark}
\email{oliver@math.aau.dk}
%
\usepackage{xcolor}\usepackage{cancel}

\begin{document}

\begin{abstract} 
	\noindent   We consider the renormalized relativistic Nelson model in two spatial dimensions
	for a finite number of spinless, relativistic quantum mechanical matter particles in interaction with
	a massive scalar quantized radiation field. We find a Feynman--Kac formula for the corresponding semigroup
	and discuss some implications such as ergodicity and weighted $L^p$ to $L^q$ bounds, for external potentials 
	that are Kato decomposable in the suitable relativistic sense.
	Furthermore, our analysis entails upper and lower bounds on the minimal energy for all values
	of the involved physical parameters when the Pauli principle for the matter particles is ignored.
	In the translation invariant case (no external potential)
	these bounds permit to compute the leading asymptotics of the minimal energy in the three
	regimes where the number of matter particles goes to infinity, the coupling constant for
	the matter-radiation interaction goes to infinity and the boson mass goes to zero.
\end{abstract}

\maketitle

\section{Introduction}

\subsection{General introduction}

\noindent
The Nelson model, describing a fixed number of spinless non-relativistic quantum particles linearly coupled to a field of spinless bosons, is an important testing ground for various mathematical aspects of the ultraviolet problem. Originally proposed by Nelson \cite{Nelson.1964,Nelson.1964c} as an example of an explicitly ultraviolet-renormalizable theory, it is the subject of ongoing research up to date. 

We investigate an adaption of the original Nelson model in which the particles have a relativistic dispersion relation, from now on referred to as the {\em relativistic Nelson model}, and focus on its probabilistic analysis. From a physical point of view, 
such an ultraviolet-renormalizable relativistic model is of special interest, because one expects the particles to obey the relativistic dispersion relation if the system can also describe very high energies. Due to a stronger ultraviolet divergence in three space dimensions, we restrict our attention to the case of two space dimensions. We assume the bosons to have a mass, for otherwise the model would be unstable in two dimensions.

The renormalization problem for the relativistic Nelson model has been treated in the articles \cite{Sloan.1974} and \cite{Gross.1973,DeckertPizzo.2014} in two and three spatial dimensions, respectively. Recently, Schmidt \cite{Schmidt.2019} renormalized the two-dimensional model using the technique of interior boundary conditions, which allowed him to prove stronger convergence results and to give an explicit expression for the renormalized operator. A related model, in which a fermionic scalar field is linearly coupled to a bosonic scalar field has recently been renormalized via a resummation scheme technique in \cite{AlvarezMoller.2022}, where the physical example can also only be treated in two space dimensions. 

Motivated by the recently revived interest in the model, we derive a Feynman--Kac formula for the renormalized relativistic Nelson Hamiltonian and discuss some of its applications. Similar formulas for the {\em non-relativistic} model have been derived in \cite{Nelson.1964c,GubinelliHiroshimaLorinczi.2014,MatteMoller.2018}. The technique we use is built upon that of \cite{MatteMoller.2018}. Explicitly, we derive an expression for the Feynman--Kac semigroup with ultraviolet cutoff and a cutoff dependent energy counter term, which still is mathematically meaningful when the cutoff is dropped. We then prove that the generator of this semigroup without cutoff is the norm resolvent limit of ultraviolet regularized Nelson Hamiltonians with added energy counter terms as the cutoff goes to infinity. In particular, the generator of our limiting semigroup agrees with the renormalized Hamiltonians constructed in \cite{Sloan.1974,Schmidt.2019}.
Compared to \cite{MatteMoller.2018}, we need to replace the Brownian motion as generator of the particle dynamics by a suitable L\'evy process.
Further, in comparison to \cite{Sloan.1974,Schmidt.2019}, the probabilistic technique easily allows us to treat a broad class of external potentials.

Along similar lines to \cite{MatteMoller.2018}, our
probabilistic analysis also entails a variety of lower bounds on the minimal energy of the relativistic Nelson model, with only a minor amount of extra work. Combined with more well-known trial function arguments, it further yields upper bounds,
which asymptotically match with the lower ones in the translation invariant case (no external potential). More precisely, we are able to compute the leading asymptotics of the minimal energy in the three regimes where the number of matter particles goes to infinity, the coupling constant for the matter-radiation interaction goes to infinity and the boson mass goes to zero. Here, we are always discussing the minimal energy in our model without any symmetry restrictions imposed on the matter particles (no Pauli principle).

Further results of our analysis are weighted $L^p$ to $L^q$ bounds on the semigroup, which also is shown to map
bounded sets in $L^p$ onto equicontinuous sets of functions. As observed in \cite{Matte.2016,MatteMoller.2018}
the general structure of our Feynman--Kac formula easily entails ergodicity of the semigroup, 
a result that has been obtained in \cite{Sloan.1974b} for the fiber Hamiltonian at total momentum zero
in the translation invariant case but seems to be new for the full Hamiltonian. 
A well-known implication of our continuity 
and ergodicity results is the continuity, non-degeneracy and strict positivity (for the appropriate phase factor) of ground
state eigenvectors (if any). The weighted $L^p$ to $L^q$ bounds can be used to turn $L^2$-bounds on the
exponential localization of ground state eigenvectors into pointwise decay estimates; 
see~\cite{HiroshimaMatte.2019}
for such results in the original Nelson model. A Markov property satisfied by our Feynman--Kac integrands
and the ergodicity are the crucial prerequisites for the construction of path measures associated with ground states; 
see again~\cite{HiroshimaMatte.2019}.

\subsection*{Structure and Notation}

The article is structured as follows: In the remainder of this introduction we define the relativistic 
Nelson model studied here and take a first glance at our Feynman--Kac formula, as its thorough presentation is 
somewhat lengthy. Precise statements of our results on the Feynman--Kac formula, ergodicity,
as well as bounds on and asymptotics of the minimal energy can be found in \cref{secmainres}.
In \cref{sec:FK} we prove the Feynman--Kac formula, relying on a number of properties of our Feynman--Kac semigroups
as well as an integral equation satisfied by the Fock space operator part of the Feynman--Kac integrands with ultraviolet cutoff.
It is not difficult to verify the latter integral equation, whence this is done first in \cref{sec:WUVinteq}.
Establishing all required results on the semigroups is the objective of the successional three sections:
In \cref{sec:basicproc} we provide technical results on the two main contributions to the 
 analogue of Feynman's complex action in our model, an effective interaction potential and a martingale part.
 The full complex action is studied in \cref{sec:complex}; the bounds derived therein will eventually imply our lower
 bounds on the minimal energy.
 In \cref{sec:FKInt} we then treat the full integrands in the Feynman--Kac formula and prove
 Markov, semigroup and continuity properties as well as weighted $L^p$ to $L^q$ bounds.
The final \cref{sec:upbound} is devoted to proving upper bounds on the minimal energy. 
The main text is followed by four appendices respectively dealing with the relativistic Kato class,
basic integral processes attaining values in the one-boson Hilbert space,
 a corollary of an exponential tail estimate for L\'{e}vy type stochastic integrals
 due to Applebaum and Siakalli \cite{Applebaum.2009,Siakalli.2019} and bounds on the free relativistic semigroup.

Let us fix some general notational conventions:
\begin{itemize}
	\item The characteristic function of a set $M\subset \RR^n$, $n\in\NN$, is denoted by $\chi_M$.
	\item The two-dimensional open ball of radius $r>0$ about $0$ is denoted as 
	\begin{align}\label{2Dball}
	B_r \coloneq \{y\in\RR^2|\,|y|<r\};\qquad B_0\coloneq\emptyset.
	\end{align}
	\item The open unit ball about the origin in $\RR^{2N}$ is denoted as 
	\begin{align}\label{2NDball}
	B_1' \coloneq \{x\in\RR^{2N}|\,|x|<1\}.
	\end{align}
	\item For $a,b\in\RR$, we write $a\wedge b = \min \{a,b\}$ and $a\vee b = \max\{a,b\}$.
	\item The set of bounded operators on some Banach space $\mc{X}$ is denoted by $\LO(\mc{X})$.
	\item The Borel $\sigma$-algebra of some topological space $S$ is denoted by $\fr{B}(S)$.
	\item The symbol $\dom(\cdot)$ denotes domains of definition. If $A$ is  a selfadjoint operator in some Hilbert space,
then 	$\dom(A)$ is equipped with the graph norm of $A$.
\end{itemize}

\subsection{The relativistic Nelson model in two spatial dimensions}

\noindent
Let us now define the model studied in this paper. For brevity we shall freely use some standard notation
for operators acting in the bosonic Fock space $\Fock$, which is modeled over the one-boson space $L^2(\RR^2)$; the definitions of $\Fock$ and all relevant operators acting in it are recalled in \cref{subsec:Fock} below.

In the translation invariant case our model is determined by four parameters:
The number $N\in\NN$ and mass $m_{\p}\ge0$ of the matter particles, the boson mass $m_{\bos}>0$ 
and the coupling constant $g\in\RR\setminus\{0\}$ for the matter-radiation interaction. The dispersion relations for
a single matter particle and a single boson are given, respectively, by
\begin{align}\label{defpsi}
\psi(\eta)&\coloneq(|\eta|^2+m_{\p}^2)^{1/2}-m_{\p},\quad\eta\in\RR^2,
\\\label{defomega}
\omega(k)&\coloneq (|k|^2+m_{\bos}^2)^{1/2},\quad k\in\RR^2.
\end{align}
The coupling function for the matter-radiation interaction is given by
\begin{align}\label{defv}
v&\coloneq g\omega^{-1/2}.
\end{align}
We shall use the capital letter $K_i$ to denote multiplication by $k_i$, i.e.,
\begin{align}\label{eq:defK}
(K_if)(k)&\coloneq k_if(k),\quad k\in \RR^2,\quad\text{for every $f:\RR^2\to\CC$ and $i\in\{1,2\}$.}
\end{align}
Since $v\notin L^2(\RR^2)$, the bosonic field operators $\vp(\eul^{-\ii K\cdot y} v)$, $y\in\RR^2$, are ill-defined.
This makes the energy renormalization procedure necessary that we employ in the construction of the
Hamiltonian $H$ hereinafter: 

Let us write
\begin{align}\label{defxxj}
x&=(x_1,\ldots,x_N),\quad\text{with $x_1,\ldots,x_N\in\RR^2$,}
\end{align}
and introduce the ultraviolet cutoff coupling functions
\begin{align}\label{defvUV}
v_{\UV}\coloneq \chi_{B_{\UV}}v,
\quad \UV\in[0,\infty).
\end{align}
Since $v_{\UV}\in L^2(\RR^2)$ for all $\UV\in[0,\infty)$, 
the field operators $\vp(\eul^{-\ii K\cdot x_j}v_\UV)$ are now well-defined.
Furthermore, they are infinitesimally bounded, uniformly in $x$, with respect to 
the radiation field energy $\Id\Gamma(\omega)$.

Finally, we add an external potential 
\begin{align*}
V&=V_+-V_-:\RR^{2N}\longrightarrow \RR,\quad\text{with $V_+\coloneq V\vee0$.}
\end{align*} 
Throughout this article $V$ is assumed to be Kato decomposable with respect to the L\'{e}vy process
describing the trajectories of the $N$ matter particles. 
The latter notion will be explained in the beginning of \cref{subsec:FK}. For the moment
it is sufficient to know that our assumptions on $V$ imply its local integrability and infinitesimal form boundedness
of its negative part $V_-$  with respect to the Hamiltonian for the free matter particles 
$\sum_{j=1}^N \psi(-\ii\nabla_{x_j})$. 

We can now define the regularized relativistic Nelson Hamiltonian with ultraviolet cutoff $\UV\in[0,\infty)$, denoted $H_{\UV}$.
It acts in the Hilbert space $L^2(\RR^{2N},\Fock)$. Analogously to \eqref{defxxj} we write
\begin{align}\label{defxixij}
\xi&=(\xi_1,\ldots,\xi_N),\quad \text{with $\xi_1,\ldots,\xi_N\in\RR^2$,}
\end{align}
and denote by $\wh{\Psi}$ the $\Fock$-valued Fourier transform of $\Psi\in L^2(\RR^{2N},\Fock)$. 
Then the form domain $\fdom(H_{\UV})$ of $H_{\UV}$ is the set of all $\Psi\in L^2(\RR^{2N},\Fock)$ such that
$\Psi(x)$ is in the form domain of the radiation field energy $\Id\Gamma(\omega)$ for a.e. $x$ and such that
\begin{align*}
\mathfrak{q}[\Psi]&\coloneq  \sum_{j=1}^N\int_{\RR^{2N}}
\psi(\xi_j)\|\wh{\Psi}(\xi)\|_{\Fock}^2\Id\xi+\int_{\RR^{2N}} \big(V_+(x)\|\Psi(x)\|_{\Fock}^2
+\|\Id\Gamma(\omega)^{1/2}\Psi(x)\|^2_{\Fock}\big)\Id x
\end{align*}
is finite. By the above considerations the quadratic form
$\mathfrak{h}_{\UV}:\fdom(H_{\UV})\to\RR$ given by
\begin{align}\nonumber
\mathfrak{h}_{\UV}[\Psi]&\coloneq  \mathfrak{q}[\Psi]-\int_{\RR^{2N}} V_-(x)\|\Psi(x)\|_{\Fock}^2\Id x
\\\label{def:HUV}
&\quad
+\sum_{j=1}^N\int_{\RR^{2N}}\langle\Psi(x)|\vp(\eul^{-\ii K\cdot x_j}v_\UV)\Psi(x)\rangle_{\Fock}\Id x
+NE_{\UV}^{\ren}\|\Psi\|^2,
\end{align}
for all $\Psi\in \fdom(H_{\UV})$, is closed and lower semibounded.
In its definition we already added the energy counter term $NE_{\UV}^{\ren}$ with
\begin{align}\label{def:EUVren}
E_{\UV}^{\ren}&\coloneq \int_{B_{\UV}}\frac{g^2}{\omega(k)(\omega(k)+\psi(k))}\Id k,\quad \UV\in[0,\infty).
\end{align}
By definition $H_{\UV}$ is the unique selfadjoint operator in $L^2(\RR^{2N},\Fock)$ representing $\mathfrak{h}_{\UV}$.

The renormalized relativistic Nelson Hamiltion in two space dimensions is now given by
\begin{align}\label{reslimHUV}
H&\coloneq \underset{\UV\to\infty}{\textrm{norm-resolvent-lim}} \ H_\UV.
\end{align}
As alluded to above, existence of
the norm resolvent limit has been established in \cite{Sloan.1974,Schmidt.2019}, at least for the case $V=0$.

Our Feynman--Kac formula for $H$ (\cref{thm:FK})
can be stated as follows: For all  $t>0$ and $\Psi\in L^2(\RR^{2N},\Fock)$, 
there is a unique continuous representative of $e^{-tH}\Psi$ satisfying
\begin{align*}
	\begin{aligned}
			(e^{-tH}\Psi)(x) &= 
			\EE\left[e^{u_t^N(x)-\int_0^tV(X_s^x)\Id s}F_{t/2}(-U_t^{N,-}(x))F_{t/2}(-U_t^{N,+}(x))^*\Psi(X_t^x)\right],
	\end{aligned}
\end{align*}
for all $x\in\RR^{2N}$.
Here $X$ is a vector of $N$ independent identically distributed $\RR^2$-valued L\'evy processes with L\'{e}vy symbol $-\psi$,
and $X^x=x+X$. The real-valued continuous and adapted stochastic process $u^N(x)$ is the complex action and 
$U^{N,\pm}(x)$ are continuous adapted stochastic processes taking values in $L^2(\RR^2)$. 
Furthermore, for fixed $h\in L^2(\RR^2)$, $F_t(h)$ is an operator on $\Fock$ 
explicitly given as an exponential series of creation operators $\ad(h)$
controlled by the semigroup at time $t$
of the free radiation field; see \cref{defFtg}. Its adjoint $F_t(h)^*$ contains annihilation terms.
The processes $u^N(x)$ and $U^{N,\pm}(x)$ are given by explicit formulas involving (stochastic) integrals
that are mathematically meaningful directly without any cutoff. Nevertheless, 
we will actually re-prove the existence of \eqref{reslimHUV}, mainly to verify that $H$ generates the
semigroup defined by the right hand side of the Feynman--Kac formula. Here we shall
make use of our assumption that $V$ be Kato decomposable in a relativistic $N$-particle sense. 

If the boson mass was zero, there would still exist a canonical selfadjoint realization of $H_{\UV}$ for suitable $V$, 
whose spectrum was, however, unbounded from below; see, e.g., Lemma~\ref{lemesaNC} and Corollary~\ref{corubGauss1}
below. Hence, an energy renormalization is impossible
for zero boson mass. This is why we assume $m_{\bos}$ to be strictly positive, while $m_{\p}$ is only
required to be non-negative.

\section{Main results}\label{secmainres}

\noindent
In this section we present our main results in detail. To this end we first recall some necessary Fock space calculus
in \cref{subsec:Fock} as well as relevant facts on the L\'{e}vy process $X$ 
and related probabilistic objects in \cref{ssecLevyX}.
In \cref{subsec:FK} we will step by step introduce all processes appearing in our Feynman--Kac formula
and state the latter in \cref{thm:FK}. Applications of the Feynman--Kac formula are described in 
\cref{subsecerg} (ergodicity), \cref{ssectwosided} (two-sided bounds on the minimal energy) 
and \cref{ssecasymp} (asymptotics of the minimal energy).

\subsection{Important operators on Fock space}\label{subsec:Fock}
Here we shall briefly recall the definitions from Fock space theory necessary for an understanding of this article. For a more detailed introduction, we refer the reader to \cite{Arai.2018,Parthasarathy.1992}.

The bosonic Fock space over $L^2(\RR^2)$ is the direct sum of Hilbert spaces
\begin{align}\label{defFock}
\Fock&\coloneq \CC\oplus\bigoplus_{n=1}^\infty L_{\mathrm{sym}}^2(\RR^{2n}),
\end{align}
with $L_{\mathrm{sym}}^2(\RR^{2n})$ denoting the closed subspace in $L^2(\RR^{2n})$ of all its elements
$\phi_n$ satisfying $\phi_n(k_{\pi(1)},\ldots,k_{\pi(n)})=\phi_n(k_1,\ldots,k_n)$,
a.e. for every permutation $\pi$ of $\{1,\ldots,n\}$; here $k_1,\ldots,k_n\in\RR^2$. 

The first operator in $\Fock$ we
introduce is the selfadjoint radiation field energy denoted~$\Id\Gamma(\omega)$. Each subspace in \eqref{defFock}
is reducing for $\Id\Gamma(\omega)$. Its action on $\phi=(\phi_n)_{n=0}^\infty\in\dom(\Id\Gamma(\omega))$ is given by
\begin{align}\label{defdGammaomega}
(\Id\Gamma(\omega)\phi)_0=0\quad\text{and}\quad
(\Id\Gamma(\omega)\phi)_n(k_1,\ldots,k_n)&=\sum_{j=1}^n\omega(k_j)\phi_n(k_1,\ldots,k_n),
\end{align}
a.e. for every $n\in\NN$. Here $\phi\in\Fock$ is in the domain of $\Id\Gamma(\omega)$, if and only if the
expressions on the right hand sides in \eqref{defdGammaomega} define a new vector in $\Fock$.

Next, we introduce the annihilation operator, $a(f)$, associated with $f\in L^2(\RR^2)$. 
It is the closed operator whose action on $\phi=(\phi_n)_{n=0}^\infty\in\dom(a(f))$ reads
\begin{align}\label{defaf}
(a(f)\phi)_n(k_1,\ldots,k_n)&=(n+1)^{1/2}\int_{\RR^2}\ol{f(k)}\phi_{n+1}(k,k_1,\ldots,k_n)\Id k,
\end{align}
a.e. for every $n\in\NN$, and $(a(f)\phi)_0=\langle f|\phi_1\rangle$. 
Again $\phi\in\Fock$ is in $\dom(a(f))$, if and only if the previous right hand sides yield a vector in $\Fock$.
Finally, the creation operator, $\ad(f)$, and the selfadjoint field operator, $\vp(f)$, associated with $f$ are given by
\begin{align*}
\ad(f)&\coloneq a(f)^*,\qquad \vp(f)\coloneq(\ad(f)+a(f))^{**}.
\end{align*}

As is well-known, $a(f)$, $\ad(f)$ and $\vp(f)$ are relatively $\Id\Gamma(\omega)^{1/2}$-bounded
for all $f\in L^2(\RR^2)$, because $m_{\bos}>0$; see, e.g., \cite[Theorem~5.16]{Arai.2018}.
More generally, the $n$-th powers of these operators are relatively $\Id\Gamma(\omega)^{n/2}$-bounded
for all $n\in\NN$. 
Employing suitable relative bounds we can in fact verify operator norm convergence of the following series, 
that we shall encounter in our Feynman--Kac formula, 
\begin{align}\label{defFtg}
F_t(h)&\coloneq \sum_{n=0}^\infty \frac{1}{n!}\ad(h)^n\eul^{-t\Id\Gamma(\omega)},\quad t>0,\,h\in L^2(\RR^2).
\end{align}
These series have been introduced in \cite{GueneysuMatteMoller.2017} and details can be found in Appendix~6 of that paper.
For instance,
 \begin{align}\label{normFt}
\|F_t(h)\|&\le \mc{S}(\|h\|_t),\quad t>0,\,h\in L^2(\RR^2),
\end{align}
where the time-dependent norm $\|\cdot\|_t$ on $L^2(\RR^2)$ is given by
\begin{align}\label{deftnorm}
\|h\|_t^2&\coloneq\int_{\RR^2}\bigg(1+\frac{1}{t\omega(k)}\bigg)|h(k)|^2\Id k,
\end{align}
and
\begin{align}\label{def:S}
\mc{S}(z)&\coloneq \sum_{n=1}^\infty\frac{2^nz^n}{(n!)^{1/2}},\quad z\in\CC.
\end{align}

\subsection{L\'{e}vy processes for particle paths}\label{ssecLevyX}

\noindent
Next, we introduce in detail the L\'{e}vy process $X$ describing (fictitious) paths of the $N$ relativistic matter particles.
Furthermore, we collect some remarks on associated Poisson point processes and stochastic integrals which are
necessary to understand the stochastic calculus applied later on. The intention is to clarify notation and make this
article more accessible for readers who might not be well-acquainted with L\'{e}vy processes. Relevant textbooks
are, e.g., \cite{Applebaum.2009,IkedaWatanabe.1981,Metivier.1982}.

Throughout the article, let us assume that $(\Omega,\fr{F},(\fr{F}_t)_{t\ge0},\PP)$ is a filtered probability space satisfying the 
usual hypotheses of completeness and right-continuity, i.e., $(\Omega,\fr{F},\PP)$ is complete, $\fr F_0$ contains all sets of $\PP$-measure zero and $\fr F_t = \bigcap_{s> t} \fr F_s$ for all $t\ge0$. 
The letter $\EE$ denotes expectations with respect to $\PP$.

\subsubsection{L\'{e}vy processes for relativistic particles}

\noindent
For every $j\in\{1,\ldots,N\}$,
we let $X_j=(X_{j,t})_{t\ge0}$ denote an $(\fr{F}_t)_{t\ge0}$-L\'{e}vy process 
associated with the L\'{e}vy symbol $-\psi$ given by \eqref{defpsi}. 
Hence, each $X_j$ is an adapted stochastically continuous $\RR^2$-valued process with stationary increments
such that $X_{j,0}=0$, $\PP$-a.s., $X_{j,t}-X_{j,s}$ and $\fr{F}_s$ are independent for all $t>s\ge0$
and the law of $X_{j,t}$ has the density $\rho_{m_{\p},t}\coloneq(2\pi)^{-1}(\eul^{-t\psi})^{\vee}$ for all $t>0$.
Here ${}^\vee$ denotes inverse Fourier transformation.
The well-known explicit expressions for $\rho_{m_{\p},t}$ are recalled (in arbitrary dimension) in \cref{lawXd0,forrhomp}.
It might be helpful for some readers to mention that each $X_j$
is an example of an inverse Gaussian process; see, e.g., \cite[Example~1.3.32]{Applebaum.2009}.

The processes $X_1, \ldots, X_N$ model the individual relativistic matter particles and
we assume these processes to be independent. We gather them in the $\RR^{2N}$-valued process
$X\coloneq(X_1,\ldots,X_N)$ which again is $(\fr{F}_t)_{t\ge0}$-L\'{e}vy.
In this article we stick to the convention that Banach space-valued stochastic processes with time horizon $[0,\infty)$ 
are called c\`{a}dl\`{a}g, iff {\em all} their paths are right-continuous on $[0,\infty)$ and have left-sided limits 
at every point in $(0,\infty)$. Since we are working under the usual hypotheses, we may and shall
assume that $X$ is c\`{a}dl\`{a}g in this sense. Then $X_{t-}$ stands for the pointwise defined
left limit $\lim_{s\uparrow t}X$ for all $t>0$. 

For every $x\in\RR^{2N}$, we further abbreviate $X^x\coloneq x+X$ and $X_j^{x}\coloneq x_j+X_j$, so that for instance
$X_{j,t-}^x=x_j+\lim_{s\uparrow t}X_{j,s}$ for $t>0$; recall \eqref{defxxj}.

\subsubsection{L\'{e}vy measures}

\noindent
To be able to work with the pure jump processes $X$ and $X_j$ we have to know their L\'{e}vy measures. Recall that a L\'evy measure is a Borel measure $\lambda$ on $\RR^d$ such that $\lambda(\{0\})=0$ and $1\wedge |\cdot|\in L^2(\lambda)$. In our case they describe the distribution of the jumps of $X$ and $X_j$, respectively; see \cite[\textsection 1.2.4]{Applebaum.2009} for a detailed introduction.

The L\'evy measure of $X_j$, which does not depend on $j$, is denoted by $\nu:\fr{B}(\RR^2)\to[0,\infty]$. 
As is well known, $\nu$ has the density $\rho$ 
with respect to the two-dimensional Lebesgue-Borel measure, where
\begin{align}\label{def:nu}
	\forall y\in\RR^2\setminus\{0\}:\quad
	\rho(y)&\coloneq
	\begin{cases}
		\displaystyle \frac{m_{\p}^{3/2}}{2^{1/2}(\pi|y|)^{3/2}}K_{3/2}(m_{\p}|y|),&\text{if $m_{\p}>0$},
		\\[1em]
		\displaystyle \frac{1}{2\pi|y|^3},&\text{if $m_{\p}=0$,}
	\end{cases}
\end{align}
and, say, $\rho(0)=0$.
Here, $K_{3/2}$ is a modified Bessel function of the third kind; see \cref{app:Bessel} for more details.
In fact we can directly obtain $\rho(y)$ from the law $\rho_{m_p,t}(y)$ of $X_j$ by calculating the limit $\lim_{t\downarrow 0}\frac 1t\rho_{m_p,t}(y)$ for $y\in\RR^2\setminus\{0\}$.
Thus, $\Id\nu(y)=\rho(y)\Id y$ and, by \cref{bdBesselK}, 
\begin{align}\label{intcondnu}
\int_{\RR^2}|y|^{1+\ve}\wedge|y|^{1-\ve}\Id\nu(y)<\infty,\quad \ve>0;\qquad 
\int_{B_1(0)}|y|\Id\nu(y)=\infty.
\end{align}
The former integrability property reveals that $\nu$ is indeed a L\'evy measure. 
It is the one associated with each process $X_j$, as the L\'{e}vy--Khintchine formula
for the logarithm of the characteristic function of $X_{j}$ reads
\begin{align}\label{LKpsi}
-\psi(\eta)&=-\psi(-\eta)=\int_{\RR^2}(\eul^{-\ii \eta\cdot y}-1+\ii\chi_{B_1}(y)\eta\cdot y)\Id\nu(y),\quad \eta\in\RR^2,
\end{align}
with the two-dimensional ball $B_1$ as defined in \cref{2Dball}; see, e.g., \cite[\textsection 1.2.6]{Applebaum.2009}.
This equality also reveals that $X_j$ is a pure jump process, since the L\'{e}vy--Khintchine formula for the characteristic function directly corresponds to  the L\'evy--It\^o decomposition of the L\'evy process, cf. \cite[\textsection 1.2.4]{Applebaum.2009} for details.

Since $X_1,\ldots,X_N$ are independent L\'{e}vy processes 
with common L\'{e}vy symbol $-\psi$, the symbol of the 
$2N$-dimensional process $X$, call it $-\psi_X$, satisfies
\begin{align*}
\eul^{-t\psi_X(\xi)}&=\EE[\eul^{\ii(\xi_1\cdot X_{1,t}+\dots+\xi_N\cdot X_{N,t})}]=\prod_{j=1}^N\EE[\eul^{\ii\xi_j\cdot X_{j,t}}]
=\prod_{j=1}^N\eul^{-t\psi(\xi_j)},\quad\xi\in\RR^{2N},
\end{align*}
for every $t\ge0$, where we use the notation \eqref{defxixij}. Thus,
\begin{align}\label{LKPsiX}
\psi_X(\xi)&=\sum_{j=1}^N\psi(\xi_j)
=-\int_{\RR^{2N}}(\eul^{-\ii\xi\cdot z}-1+\ii\chi_{B_1'}(z)\xi\cdot z)\Id \nu_X(z),\quad\xi\in\RR^{2N}.
\end{align}
Here $B_1'$ is the $2N$-dimensional open unit ball defined in \cref{2NDball}, and the L\'{e}vy measure 
$\nu_X:\fr{B}(\RR^{2N})\to[0,\infty]$ of $X$ is
\begin{align}\label{eq:nuX}
\nu_X&\coloneq\sum_{j=1}^N\delta_0\otimes\dots\otimes\delta_0\otimes\underbrace{\nu}_{\text{$j$'th factor}}
\otimes\delta_0\otimes\dots\otimes\delta_0,
\end{align}
with $\delta_0$ denoting the Dirac measure on the Borel sets of $\RR^2$ concentrated in $0$.
To read off the formula for $\nu_X$, we made use of \eqref{LKpsi} and observed that $\chi_{B_1'}(z)=\chi_{B_1}(z_\ell)$
whenever $z_j=0$ for all $j\not=\ell$.

\subsubsection{Infinitesimal generator}

\noindent
For every bounded $f\in C^2(\RR^{2N})$ with bounded first and second order derivatives, we set
\begin{align}\label{def:psiXnabla}
(\psi_X(-\ii\nabla)f)(x)&\coloneq -\int_{\RR^{2N}}(f(x+z)-f(x)-\chi_{B_1'}(z)\nabla f(x)\cdot z)\Id\nu_X(z),
\end{align}
for every $x\in\RR^{2N}$.
Here the integrals on the right hand side exist due to the assumptions on $f$, Taylor's formula, \cref{intcondnu,eq:nuX}.

For instance, assume that $f=\check{g}$ is the inverse Fourier transform of some Borel measurable function 
$g:\RR^{2N}\to\CC$ such that $\RR^{2N}\ni\xi\mapsto(1+|\xi|^2)g(\xi)$ is integrable. 
Then, in view of  \eqref{LKPsiX} and Fubini's theorem, the above definition of $\psi_X(-\ii\nabla)$ 
agrees with the usual one based on Fourier transformation, i.e.,
\begin{align}\label{psiXFourier}
(\psi_X(-\ii\nabla)f)(x)&= \frac{1}{(2\pi)^N}\int_{\RR^{2N}}\eul^{\ii \xi\cdot x}\psi_X(\xi)g(\xi)\Id\xi,\quad x\in\RR^{2N}.
\end{align}
Let us for orientation mention that
functions of the form $f=\check{g}$ with $g$ as above also belong to the domain of definition of
the infinitesimal generator of the Feller semigroup associated with $X$; applying this generator to $f$ yields
$-\psi_X(-\ii\nabla)f$.

\subsubsection{Associated Poisson point processes of jumps and stochastic integrals}\label{ssecPPP}

\noindent
Twice in this article, we shall make crucial use of It\^{o}'s formula for $X$:
in our derivation of the Feynman--Kac formula with ultraviolet cutoff and in order to find
a new formula for the complex action that still makes sense when the cutoff is dropped.
Therefore, we shall now briefly explain the point processes and corresponding stochastic integrals
appearing there. It\^{o}'s formula itself will be discussed when needed in \cref{ssecItoformula}.
Note that the complex action without cutoff will contain a martingale part given as a stochastic integral.

We denote the jumps of $X$ by $\Delta X_t(\gamma)\coloneq X_t(\gamma)-X_{t-}(\gamma)$, $t>0$, and notice that
the set of non-zero jump times
$D(\gamma)\coloneq \{t>0|\,\Delta X_t(\gamma)\not=0\}$ is at most countable for every $\gamma\in\Omega$,
since $X$ is c\`{a}dl\`{a}g.
Then the Poisson point process $N_X$ associated with $X$ is the random measure given by
\begin{align*}
N_X(A)&\coloneq \sum_{t\in D}\chi_A(t,\Delta X_t),\quad A\in\fr{B}([0,\infty)\times\RR^{2N}).
\end{align*}
When reading some formulas below,
it is helpful to keep in mind that $N_X((0,t]\times(\RR^{2N}\setminus B_1'))<\infty$, $\PP$-a.s., for every $t>0$.
At every fixed elementary event, $N_X$ is an at most countable sum of Dirac measures and in particular
\begin{align*}
\int_{(0,t]\times\RR^{2N}}Z_s(z)\Id N_X(s,z)&=\sum_{s\in D\cap(0,t]}Z_s(\Delta X_s),
\end{align*}
for every product measurable integrand $Z:[0,\infty)\times\RR^{2N}\times\Omega\to[0,\infty]$.
Here and below, $Z_s(z)\coloneq Z(s,z,\cdot)$.

For every $B\in\fr{B}(\RR^{2N})$ with $\nu_X(B)<\infty$, 
$(N_X((0,t]\times B))_{t\ge0}$ is an $(\fr{F}_t)_{t\ge0}$-Poisson process with intensity $\nu_X(B)$,
i.e., satisfying $\EE[N_X((0,t]\times B)] = t\nu_X(B)$. We remark that this is the precise way in which the L\'evy measure $\nu_X$ describes the distribution of the jumps of $X$.
Further it implies that
the process $(\wt{N}_X((0,t]\times B))_{t\ge0}$ given by
\begin{align}\label{wtNXNXnu}
\wt{N}_X((0,t]\times B)&\coloneq N_X((0,t]\times B)-t\nu_X(B),\quad t\ge0,
\end{align}
is a martingale. There exists a theory of stochastic integration extending the relation \cref{wtNXNXnu}; see, e.g., \cite[\textsection2~Sec.~3]{IkedaWatanabe.1981}.
As corresponding integrands we only encounter elements of $\scr{H}_2$ in this article, where, for $p\in\{1,2\}$,
$\scr{H}_p$ denotes the set of predictable maps
$Z:[0,\infty)\times\RR^{2N}\times\Omega\to\RR$ such that $|Z|^p$ is
integrable over $[0,t]\times\RR^{2N}\times\Omega$ with respect to
$\Id s\otimes\nu_X\otimes\PP$ for all $t\ge0$. 
For every $Z\in\scr{H}_2$ the stochastic integral process $(\int_{(0,t]\times\RR^{2N}}Z_s(z)\Id\wt{N}_X(s,z))_{t\ge0}$
is a c\`{a}dl\`{a}g $L^2$-martingale satisfying It\^{o}'s isometry
\begin{align}\label{ItoIso}
\EE\bigg[\bigg(\int_{(0,t]\times\RR^{2N}}Z_s(z)\Id\wt{N}_X(s,z)\bigg)^2\bigg]&=
\EE\bigg[\int_0^t\int_{\RR^{2N}}Z_s(z)^2\Id\nu_X(z)\,\Id s\bigg],
\end{align}
for all $t\ge0$.
In the case $Z\in\scr{H}_1\cap\scr{H}_2$ we $\PP$-a.s. know that, for all $t\ge0$,
$Z$ is integrable over $(0,t]\times\RR^{2N}$ with respect to $N_X$ and
\begin{align}\nonumber
&\int_{(0,t]\times\RR^{2N}}Z_s(z)\Id\wt{N}_X(s,z)
\\\label{wtNXintH1}
&=\int_{(0,t]\times\RR^{2N}}Z_s(z)\Id N_X(s,z)
-\int_0^t\int_{\RR^{2N}}Z_s(z)\Id\nu_X(z)\,\Id s.
\end{align}

The above integrals show up in the L\'{e}vy--It\^{o} decomposition of $X$,
\begin{align}\label{LevyItoX}
X_t&=\int_{(0,t]\times(\RR^{2N}\setminus B_1')}z\Id N_X(s,z)
+\int_{(0,t]\times B_1'}z\Id \wt{N}_X(s,z),\quad t\ge0.
\end{align}
Here, the first member on the right hand side is the $\PP$-a.s. finite sum of all jumps of $X$
of size larger than $1$. The second member is referred to as a compensated sum of all remaining, small jumps;
notice that, by \cref{intcondnu,eq:nuX}, its integrand is in $\scr{H}_2$ but not in $\scr{H}_1$.

\subsection{The Feynman--Kac formula}\label{subsec:FK}

\noindent
We can now move to the presentation of our Feynman--Kac formula. We start by properly introducing the class
of external potentials considered in this article:

\subsubsection{Relativistic Kato class for $N$ particles}\label{sssecKato}

\noindent
Following \cite{CarmonaMastersSimon.1990}
we introduce the Kato class corresponding to the process $X$ as
\begin{align}\label{def:Kato}
	\mc K_X \coloneq \left\{f:\RR^{2N}\to [0,\infty) \ \mbox{ Borel meas.}\,\middle| \
	\lim_{t\downarrow0}\sup_{x\in\RR^{2N}}\EE\left[ \int_0^t f(X^x_s)\Id s\right] = 0 \right\}.
\end{align}
As indicated earlier the external potential $V$ is always assumed to be Kato decomposable in the sense of $\mc K_X$, i.e., 
$\chi_CV_+,V_-\in \mc K_X$ for all compact $C\subset\RR^{2N}$. As a consequence, 
$V$ is locally integrable and $V_-$ is infinitesimally form 
bounded with respect to the free particle Hamiltonian $\sum_{j=1}^N \psi(-\ii\nabla_{x_j})$; 
cf. \cite[Theorem~III.1 and the paragraph following its proof]{CarmonaMastersSimon.1990}. We also know that 
\begin{align}\label{defNVx}
\scr{N}_V(x)&\coloneq
\big\{\gamma\in\Omega\,\big|\: V(X^x_\bullet(\gamma)):[0,\infty)\to\RR\;\text{is not locally integrable}\big\}
\end{align}
is a $\PP$-zero set for every $x\in\RR^{2N}$. Once and for all we fix the convention that the integral paths
$\int_0^\bullet V(X_s^x)\Id s$ equal $0$ on $\scr{N}_V(x)$.
In \cref{app:Kato} we discuss some further properties of Kato decomposable potentials to be used later on.

For example, $V$ is Kato decomposable in the sense of $\mc K_X$, if it has the form
\begin{align*}
V(x) &= \sum_{j=1}^N V_{\{j\}}(x_j)+\sum_{1\le i<j\le N}V_{\{i,j\}}(x_i-x_j),\quad x\in\RR^{2N},
\end{align*} 
where $\chi_C(V_{A})_+,(V_{A})_-\in \mc K_{X_1}$ 
for all compact $C\subset\RR^2$ and all subsets $A\subset\{1,\ldots,N\}$ of cardinality one or two.
Here $\mc K_{X_1}$ is defined as in \cref{def:Kato} with $\RR^{2N}$ replaced by $\RR^2$ and $X$ replaced by $X_1$.
This follows from the independence of $X_1,\ldots,X_N$ and since, for $i<j$ and $s\ge0$, the laws of 
$X_{i,s}$ and $X_{j,s}$ have the same, rotationally symmetric probability density $\rho_{m_{\p},s}$ satisfying
$\rho_{m_{\p},s}*\rho_{m_{\p},s}=\rho_{m_{\p},2s}$. In view of \cref{correlLpLinfty}, we know that
$L^p(\RR^2)\cap L^{2p}(\RR^2)\subset \mc K_{X_1}$ for every $p\in(1,\infty]$.
If $m_{\p}=0$, the corollary implies $L^p(\RR^2)\subset \mc K_{X_1}$ for all $p\in (2,\infty]$.

\subsubsection{The processes appearing in the creation/annihilation terms}
Next, we discuss the two processes to be substituted into the creation term $F_{t/2}(h)$ 
and the annihilation term $F_{t/2}(h)^*$ in our Feynman--Kac formula. The bound
\begin{align}\label{eq:bounde-somv}
	\|\eul^{-s\omega}v\|\le|g|(\pi/s)^{1/2}
	, \quad s>0,
\end{align}
and Bochner's theorem ensure existence of the following $L^2(\RR^2)$-valued Bochner--Lebesgue integrals,
\begin{align}\label{defUplusminus}
U_t^-[\gamma]&\coloneq\int_0^t\eul^{-s\omega-\ii K\cdot\gamma_s}v\Id s,\quad
U_t^+[\gamma]\coloneq\int_0^t\eul^{-(t-s)\omega-\ii K\cdot\gamma_s}v\Id s,\quad t\ge0.
\end{align}
Here $\gamma:[0,\infty)\to\RR^2$ is assumed to be Borel measurable, $\gamma_s\coloneq\gamma(s)$ as usual,
and we used the notation \eqref{eq:defK}. The regularizing effect of the damping terms $\eul^{-r\omega}$, $r\ge0$,
in \cref{defUplusminus} is illustrated by computing the integrals when $\gamma$ is zero, 
which gives $(1-\eul^{-t\omega})v/\omega\in L^2(\RR^2)$.
With this we define
\begin{align}
U_t^{N,\pm}(x)&\coloneq \sum_{j=1}^N\eul^{-\ii K\cdot x_j}U_t^{\pm}[X_{j,\bullet}],\quad x\in\RR^{2N},\,t\ge0.
\label{def:UtN}
\end{align}
For each choice of the sign the $L^2(\RR^2)$-valued process $(U_t^{N,\pm}(x))_{t\ge0}$ 
is adapted and all its paths are continuous; see \cref{remUpmcont,lem:Upmadapted} for details.

\subsubsection{The complex action}\label{sseccomplexactioninfty}

\noindent
We shall next introduce the analogue of Feynman's complex action in our Feynman--Kac formula, 
which is the sum of three contributions coming from the radiation field.

Let us abbreviate
\begin{align}\label{def:beta}
\beta&\coloneq(\omega+\psi)^{-1}v=g\omega^{-1/2}(\omega+\psi)^{-1}\in L^2(\RR^2).
\end{align}
Then the simplest of the three contributions to the complex action is
\begin{align}\label{defcinfty}
c^N_{t}(x)&\coloneq \sum_{\ell=1}^N
\langle U_{t}^{N,+}(x)|\eul^{-\ii K\cdot X_{\ell,t}^x}\beta\rangle,
\quad x\in\RR^{2N},\,t\ge0.
\end{align}
This expression is in fact real-valued and uniformly bounded in $x$ and $t$ and on $\Omega$; see \cref{rem:bdcN}.
The process $(c_t^N(x))_{t\ge0}$ is manifestly adapted and c\`{a}dl\`{a}g.

Secondly, the radiation field mediates an effective, translation and rotation invariant pair interaction potential 
between the matter particles that we call $w:\RR^2\to\RR$. The potential $w$ comes into the picture as the distributional 
Fourier transform of $v\beta$ which is an element of $L^p(\RR^2)$ for every $p>1$.
More precisely, $w$ denotes a representative of that Fourier transform that is continuous on $\RR^2\setminus\{0\}$
and given by the following formulas:
With $J_0$ denoting the Bessel function of the first kind and order $0$, cf. \cref{app:Bessel}, we set
\begin{align}\label{defvt}
\vt(r,s)&\coloneq \frac{2\pi rJ_0(rs)}{\omega(r)(\omega(r)+\psi(r))},
\quad r,s\ge0,
\end{align}
where we from now on introduce the slightly abusive notation
\begin{align*}
	\omega(r)\coloneq (r^2+m_\bos^2)^{1/2}, \quad \psi(r) \coloneq (r^2+m_\p^2)^{1/2}-m_\p, \quad r\ge 0.
\end{align*}
Then $w$ is given by the following well-defined Lebesgue integrals
\begin{align}\label{def:wy}
w(y)&\coloneq g^2\int_0^{\infty} \vt(r,|y|)\Id r, \quad y\in\RR^2\setminus\{0\},
\end{align}
and we complement this by the (immaterial) convention $w(0)\coloneq0$.
For $N\ge2$, the total effective interaction term appearing in the complex action is
\begin{align}\label{def:wtN}
w_{t}^N(x)&\coloneq \sum_{1\le j<\ell\le N}\int_0^t2w(X^x_{j,s}-X^x_{\ell,s})\Id s,\quad x\in\RR^{2N},\,t\ge0.
\end{align}
For $N=1$, there is no effective interaction and we set $w_t^1(x)\coloneq0$. 
As we shall see in \cref{remdefw}, $w\in L^p(\RR^2)$ for all $p\in[2,\infty)$, whence $w^N(x)$ is a well-defined
path integral process according to the conventions and examples in \cref{sssecKato}.

Most work is needed to obtain control on the third contribution to the complex action, 
namely the real-valued c\`{a}dl\`{a}g square-integrable martingale $(m_t^N(x))_{t\ge0}$ with
\begin{align}\label{def:martingale}
m^N_{t}(x)&\coloneq\int_{(0,t]\times\RR^{2N}}\sum_{j=1}^N\langle U_s^{N,+}(x)|
\eul^{-\ii K\cdot X_{j,s-}^x}(\eul^{-\ii K\cdot z_j}-1)\beta\rangle\Id\wt{N}_X(s,z).
\end{align}
Here  $z_j$ is the $j$'th two-dimensional component of $z$ similarly as in \eqref{defxxj}.
As we shall see in \cref{lemexpmomentm}, the integrand in \cref{def:martingale} belongs to the
space $\scr{H}_2$ introduced in \cref{ssecPPP} and in particular $m_t^N(x)$ is a well-defined
isometric stochastic integral.

Combined, our formula for the complex action reads
\begin{align}\label{def:action}
u_t^N(x)&\coloneq w_t^N(x)-c_t^N(x)+m_t^N(x),\quad x\in\RR^{2N},\, t\ge0.
\end{align}
As it turns out in \cref{cor:pathsactioncont}, the paths of $u^N(x)$ are $\PP$-a.s. continuous, although the last
two members of the right hand side of \cref{def:action} are merely c\`{a}dl\`{a}g.

\subsubsection{Feynman--Kac integrand and formula}

\noindent
We are now in a position to state our Feynman--Kac formula where we abbreviate
\begin{align}\label{def:Wt}
W_t(x)&\coloneq \eul^{u_t^N(x)-\int_0^tV(X^x_s)\Id s}F_{t/2}(-U_t^{N,+}(x))F_{t/2}(-U_t^{N,-}(x))^*.
\end{align}
For every $t>0$ the $\LO(\Fock)$-valued function $W_t(x)$ on $\Omega$ is $\fr{F}_t$-measurable and separably valued
and its operator norm $\|W_t(x)\|$ has moments of all orders; see \cref{rem:Wadapt} and \cref{remmomentbdW}. 

\begin{thm}\label{thm:FK}
Let $\Psi\in L^2(\RR^{2N},\Fock)$ and $t>0$. Then $\eul^{-tH}\Psi$ has a unique continuous representative
which for every $x\in\RR^{2N}$ is given by
\begin{align}\label{eq:FK}
(\eul^{-tH}\Psi)(x)&=\EE\left[W_{t}(x)^*\Psi(X_t^x)\right].
\end{align}
\end{thm}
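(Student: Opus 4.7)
My plan is to reduce to the ultraviolet-regularized case and then remove the cutoff. For finite $\UV$ the coupling function $v_\UV$ lies in $L^2(\RR^2)$, so the field operators $\vp(\eul^{-\ii K\cdot x_j}v_\UV)$ are infinitesimally $\Id\Gamma(\omega)$-form bounded and $H_\UV$ is a standard Fock-space Schr\"odinger-type operator, modulo an additive shift by $NE_\UV^{\ren}$. The first step is to establish a cutoff Feynman--Kac formula
\begin{align*}
(\eul^{-tH_\UV}\Psi)(x) &= \EE\bigl[W_t^\UV(x)^*\Psi(X_t^x)\bigr],
\end{align*}
where $W_t^\UV(x)$ is defined as $W_t(x)$ but with $v_\UV$ replacing $v$ throughout the construction of $U_t^{N,\pm}$ and of the three contributions $w_t^N$, $c_t^N$, $m_t^N$ to the complex action, and with the counter term $-NtE_\UV^{\ren}$ absorbed into the exponent. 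Following the strategy of \cite{MatteMoller.2018} in the non-relativistic setting, I would derive this by applying It\^o's formula for the L\'evy process $X$ to the semimartingale $s\mapsto W_s^\UV(x)^*(\eul^{-(t-s)H_\UV}\Psi)(X_s^x)$ on a suitable dense subspace; once one checks that the finite-variation part vanishes, evaluating at the endpoints $s=0$ and $s=t$ and taking expectation yields the identity.

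The next step is to pass $\UV\to\infty$ in both sides. On the probabilistic side, $U_t^{N,\pm,\UV}(x)\to U_t^{N,\pm}(x)$ in $L^2(\RR^2)$ by dominated convergence using \eqref{eq:bounde-somv}. For the complex action, the pair-interaction term $w_t^{N,\UV}\to w_t^N$ and the shift term $c_t^{N,\UV}\to c_t^N$ follow from the integrability of $w$ and from the uniform bound recorded in \cref{rem:bdcN}, while convergence of the stochastic integral $m_t^{N,\UV}\to m_t^N$ in $L^2(\PP)$ is obtained from It\^o's isometry for the compensated jump measure $\wt N_X$. The crucial algebraic input is that the divergent counter term $NtE_\UV^{\ren}$ exactly cancels the $\UV$-dependent self-energy contribution produced by normal-ordering the field operators, so that the limit in \eqref{def:action} is finite; this is the heart of the renormalization procedure. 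Combined with operator-norm bounds on $F_t(h)$ of the type developed in \cite{GueneysuMatteMoller.2017}, these ingredients give $W_t^\UV(x)\to W_t(x)$ in $L^p(\PP;\LO(\Fock))$ locally uniformly in $x$, and hence $\EE[W_t^\UV(x)^*\Psi(X_t^x)]\to\EE[W_t(x)^*\Psi(X_t^x)]$.

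Combining this convergence with the norm-resolvent convergence stated in \eqref{reslimHUV} identifies $\eul^{-tH}\Psi$ in the $L^2(\RR^{2N},\Fock)$-sense with the right-hand side of \eqref{eq:FK}. Alternatively, and more in line with the stated intention of ``re-proving the existence of \eqref{reslimHUV}'', one can verify directly that the right-hand side of \eqref{eq:FK} defines a strongly continuous semigroup of bounded operators on $L^2(\RR^{2N},\Fock)$, use the semigroup and Markov properties proved in \cref{sec:FKInt} to read off the generator, and compare it with $H_\UV$ through the previous step. Uniqueness of a continuous representative is then automatic, and existence follows from the joint continuity in $x$ of $\EE[W_t(x)^*\Psi(X_t^x)]$, which is a consequence of the continuity and weighted $L^p$-to-$L^q$ bounds for the integrand announced in \cref{sec:FKInt}.

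The main obstacle I anticipate is the convergence $m_t^{N,\UV}\to m_t^N$ together with its precise matching against $NtE_\UV^{\ren}$. In contrast with the non-relativistic Brownian setting of \cite{MatteMoller.2018}, the martingale here originates from the jumps of the inverse Gaussian L\'evy process $X$; its compensator must be computed explicitly in terms of the L\'evy symbol $-\psi$, and the exact form of the renormalization cancellation must be tracked at the level of stochastic integrals against $\wt N_X$ rather than via It\^o's formula for continuous semimartingales. Controlling this term uniformly in $\UV$ and in $x$, in a norm strong enough to support the creation/annihilation expansion defining $F_{t/2}$, is where I expect the bulk of the technical work of \cref{sec:complex,sec:FKInt,sec:FK} to be concentrated.
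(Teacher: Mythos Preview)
Your outline is essentially correct and matches the paper's strategy: prove a Feynman--Kac formula for finite $\UV$, show the probabilistic side converges as $\UV\to\infty$, and identify the limit with $\eul^{-tH}$ via the semigroup machinery of \cref{sec:FKInt}. Two points of comparison are worth noting. First, for the cutoff formula the paper does not apply It\^o's formula to $s\mapsto W_s^\UV(x)^*(\eul^{-(t-s)H_\UV}\Psi)(X_s^x)$; instead it derives a pathwise integral equation for $\langle\phi_1|W_{\UV,t}[x,\gamma]\phi_2\rangle$ on exponential vectors (\cref{lemODEW1}), combines it with It\^o's formula for $f(X_t^x)$ via the product rule, and then identifies the generator of $(T_{\UV,t})_{t\ge0}$ with $H_\UV$ on a core. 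Your time-reversal approach is a legitimate alternative, but it requires sufficient regularity of $x\mapsto(\eul^{-(t-s)H_\UV}\Psi)(x)$ in the Fock-space-valued, jump-process setting, which is not entirely free. Second, your description of the renormalization cancellation as arising from ``normal-ordering the field operators'' is slightly off: in the paper the counter term $NtE_\UV^{\ren}$ is built into the pathwise definition \eqref{defuUV} of the cutoff complex action, and the cancellation becomes visible only after the It\^o rewriting of \cref{lem:actionito}, which converts the divergent diagonal ($j=\ell$) part of the time integral into the bounded terms $c^N$ and $m^N$. The martingale convergence you flag as the main obstacle is indeed handled by the moment bounds of \cref{lemexpmomentm}, and the exponential moment bounds on $u^N$ (\cref{thmexpmonentu1}) are what make the passage to the limit in \eqref{convu2} and hence \cref{prop:Tconv} work.
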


\begin{proof}
The continuity of the right hand side of \eqref{eq:FK} in $x\in\RR^{2N}$ follows from the 
stronger continuity result \cref{thm:cont}.
Equation \eqref{eq:FK} itself is established at the end of \cref{sec:FK.ren}. 
As a byproduct we shall in fact obtain a new proof of the existence of
the norm resolvent limit \crefnosort{reslimHUV}.
\end{proof}

In the following subsections we present some applications of \cref{thm:FK}.

\subsection{Ergodicity}\label{subsecerg}

Let $\mc{U}:\Fock\to L^2(\mc{g})\coloneq L^2(\mc{Q},\mc{g})$ be a unitary map from the
Fock space to a $\mc{Q}$-space, so that $\mc{g}$ is a probability measure on the set $\mc{Q}$
and all field operators $\vp(h)$ with
\begin{align}\label{defmcRR}
h\in\mc{h}_{\RR}&\coloneq\{f\in L^2(\RR^{2})|\,\ol{f}(k)=f(-k),\,\text{a.e. $k$}\}
\end{align}
become Gaussian random variables on $\mc{Q}$ after conjugation with $\mc{U}$. Confer, e.g., \cite{Parthasarathy.1992} for constructions of $\mc{Q}$-space. 
We denote by the same symbol the constant direct integral of $\mc U$, i.e., $(\mc{U}\Psi)(x)=\mc{U}\Psi(x)$, 
a.e. $x\in\RR^{2N}$, $\Psi\in L^2(\RR^{2N},\Fock)$.
Of course we have a canonical notion of positivity for functions in
$L^2(\RR^{2N},L^2(\mc{g}))=L^2(\RR^{2N}\times\mc{Q},\Id x\otimes\mc{g})$, that we shall employ in the following theorem.
It is a consequence of \cref{thm:FK} and its proof is a word by word  transcription of the proof of \cite[Theorem~8.3]{MatteMoller.2018}. The crucial point is that the operators
$\mc{U}F_{t/2}(h)F_{t/2}(h)^*\mc{U}^*$ with $t>0$ and $h\in \mc{h}_{\RR}$ improve positivity on $L^2(\mc{g})$ as observed in
\cite{Matte.2016,MatteMoller.2018}, and $U^{N,\pm}_{t}(x)$ are in fact $\mc{h}_{\RR}$-valued in view of \eqref{defUplusminus}.

\begin{thm}
Let $t>0$. Then $\mc{U}\eul^{-tH}\mc{U}^*$ improves positivity.
\end{thm}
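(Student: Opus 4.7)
The strategy is to transcribe the argument of \cite[Theorem~8.3]{MatteMoller.2018}, so the plan is to identify its ingredients and verify that each has a valid analogue in the present relativistic setting.

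First I would fix two non-negative functions $\Phi,\Psi\in L^2(\RR^{2N}\times\mc{Q},\Id x\otimes\mc{g})$ with $\|\Phi\|,\|\Psi\|>0$ and expand the pairing $\langle\Phi|\mc{U}\eul^{-tH}\mc{U}^*\Psi\rangle$ by means of the Feynman--Kac formula of \cref{thm:FK}. Taking adjoints in \eqref{def:Wt} and applying Fubini turns this pairing into an integral over $\RR^{2N}\times\Omega$ of
\[
\eul^{u_t^N(x)-\int_0^tV(X_s^x)\Id s}\bigl\langle\Phi(x)\bigm|\mc{U}F_{t/2}(-U_t^{N,-}(x))F_{t/2}(-U_t^{N,+}(x))^*\mc{U}^*\Psi(X_t^x)\bigr\rangle_{L^2(\mc{g})}.
\]

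Next I would verify the two structural facts on which the argument rests. The scalar prefactor is real-valued, since $u_t^N(x)$ is real (cf.~\crefnosort{def:action}) and so is $V$. The random vectors $-U_t^{N,\pm}(x)$ take values in $\mc{h}_{\RR}$: the bare coupling $v=g\omega^{-1/2}$ is real and even in $k$, and the multiplier $\eul^{-\ii K\cdot y}$ intertwines $k\mapsto -k$ with complex conjugation, so each integrand appearing in \crefnosort{defUplusminus} satisfies $\ol{f}(k)=f(-k)$; this property survives Bochner integration, multiplication by $\eul^{-\ii K\cdot x_j}$, and summation in \cref{def:UtN}. The central ingredient, recorded in \cite{Matte.2016,MatteMoller.2018}, is that operators of the form $\mc{U}F_{t/2}(h_1)F_{t/2}(h_2)^*\mc{U}^*$ with $h_1,h_2\in\mc{h}_{\RR}$ are positivity improving on $L^2(\mc{g})$. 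Applied to $h_1=-U_t^{N,-}(x)$ and $h_2=-U_t^{N,+}(x)$, it renders the pointwise pairing above strictly positive on the event that both $\Phi(x)$ and $\Psi(X_t^x(\gamma))$ are non-zero.

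Finally I would promote this pointwise positivity to strict positivity of the full integral. Since each $X_j$ is an inverse Gaussian process, its transition density at any $t>0$ is strictly positive on $\RR^2$, hence the joint law of $X_t^x$ has a strictly positive density on $\RR^{2N}$ for every $x$; consequently the $\Id x\otimes\PP$-measure of $\{\Phi(x)\neq 0\}\cap\{\Psi(X_t^x)\neq 0\}$ is positive whenever $\Phi$ and $\Psi$ are non-trivial. The main obstacle I anticipate is the measurability and integrability bookkeeping needed to justify Fubini and to apply the cited positivity improvement result in the vector-valued, pathwise form required here; however, the properties of the processes $U_t^{N,\pm}(x)$ and $u_t^N(x)$ established earlier in the paper are precisely those used by Matte and M\"oller in the non-relativistic case, so the transcription should go through without substantial modification.
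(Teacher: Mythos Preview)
Your proposal is correct and follows essentially the same route as the paper: both rely on the Feynman--Kac formula \cref{thm:FK}, the fact that $U_t^{N,\pm}(x)$ are $\mc{h}_{\RR}$-valued, and the positivity-improving property of $\mc{U}F_{t/2}(h_1)F_{t/2}(h_2)^*\mc{U}^*$ from \cite{Matte.2016,MatteMoller.2018}, declaring the rest a transcription of \cite[Theorem~8.3]{MatteMoller.2018}. You have in fact spelled out more of the transcription (reality of the prefactor, strict positivity of the transition density) than the paper does.
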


	Sloan \cite{Sloan.1974} has shown an analogous result for the fiber Hamiltonian at total momentum zero in the
	renormalized translation-invariant relativistic Nelson model.
	In the non-relativistic renormalized Nelson model the ergodicity of fiber Hamiltonians attached to arbitrary
	total momenta -- with respect to a different positive cone in $\Fock$ -- has been shown in
	\cite{Lampart.2020,Miyao.2018}.

A standard conclusion (see, e.g., \cite[Corollaries~8.5\,\&\,8.6]{MatteMoller.2018}) is the following:

\begin{cor}\label{corforspecu}\ 
	\begin{enumerate}
		\item[{\rm(i)}] Let $\Psi\in L^2(\RR^{2N},\Fock)\setminus\{0\}$ be such that $\mc{U}\Psi$ is non-negative. Then
		\begin{align*}
			\inf\sigma(H)&=-\lim_{t\to\infty}\frac{1}{t}\ln\langle\Psi|\eul^{-tH}\Psi\rangle.
		\end{align*}
		\item[{\rm{(ii)}}]
		Let $f\in L^2(\RR^{2N})\setminus\{0\}$ be non-negative. Then
		\begin{align*}
			\inf\sigma(H)&=-\lim_{t\to\infty}\frac{1}{t}\ln\bigg(\int_{\RR^{2N}}f(x)
			\EE[\eul^{u^N_t(x)-\int_0^tV(X^x_s)\Id s}f(X_t^x)]\Id x\bigg).
		\end{align*}
	\end{enumerate}
\end{cor}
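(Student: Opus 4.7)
For part (i), the plan is to combine the spectral theorem with the ergodicity result of the preceding theorem, following closely the argument of \cite[Corollaries~8.5--8.6]{MatteMoller.2018}. Set $E_0\coloneq\inf\sigma(H)$ and let $\mu_\Psi$ denote the scalar spectral measure of $H$ at $\Psi$, so that $F(t)\coloneq\langle\Psi|\eul^{-tH}\Psi\rangle=\int_{[E_0,\infty)}\eul^{-t\lambda}\Id\mu_\Psi(\lambda)$ is the Laplace transform of a positive measure. The standard asymptotics of Laplace transforms give $-t^{-1}\ln F(t)\to\inf\supp(\mu_\Psi)$ as $t\to\infty$. The inequality $\inf\supp(\mu_\Psi)\ge E_0$ is immediate from $F(t)\le\eul^{-tE_0}\|\Psi\|^2$; the reverse inequality reduces, via the bound $F(t)\ge\eul^{-t(E_0+\epsilon)}\|E_{[E_0,E_0+\epsilon)}\Psi\|^2$ followed by $\epsilon\downarrow0$, to the claim that $E_{[E_0,E_0+\epsilon)}\Psi\neq0$ for every $\epsilon>0$.

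This non-vanishing of the low-spectrum projections is the main obstacle and the one place where ergodicity is genuinely used. I will argue by contradiction: assuming $P\Psi=0$ for the non-zero projection $P\coloneq E_{[E_0,E_0+\epsilon)}$, the commutation $[P,\eul^{-tH}]=0$ forces $\eul^{-tH}\Psi\in\Ran(P)^\perp$ for all $t>0$. On the other hand, since $\mc{U}\Psi$ is non-negative and non-zero, the preceding theorem implies that $\mc{U}\eul^{-tH}\Psi$ is strictly positive almost everywhere on $\RR^{2N}\times\mc{Q}$. Transferring to the $\mc{Q}$-space picture and exploiting the fact that $H$ corresponds to a real operator there (so that its spectral projections preserve reality), I pick a non-zero real $\phi\in\Ran(P)$ and decompose $\mc{U}\phi=(\mc{U}\phi)_+-(\mc{U}\phi)_-$; the orthogonality $\langle\phi,\eul^{-tH}\Psi\rangle=0$ combined with the strict positivity of $\mc{U}\eul^{-tH}\Psi$ and a further application of the positivity-improving semigroup to $(\mc{U}\phi)_\pm$ then yields the desired contradiction, exactly along the lines of \cite[Cor.~8.5]{MatteMoller.2018}.

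Part (ii) will then follow by applying part (i) to $\Psi\coloneq f\otimes\Omega_0$, where $\Omega_0=(1,0,0,\ldots)\in\Fock$ denotes the Fock vacuum. Standard $\mc{Q}$-space constructions give $\mc{U}\Omega_0\equiv1$, so $\mc{U}\Psi(x,q)=f(x)\ge0$ is non-negative and non-zero, and part (i) applies. It then remains to identify the matrix element with the displayed expression. Plugging $\Psi=f\otimes\Omega_0$ into the Feynman--Kac formula of \cref{thm:FK} yields
\begin{align*}
\langle\Psi|\eul^{-tH}\Psi\rangle=\int_{\RR^{2N}}f(x)\,\EE\!\left[\langle\Omega_0|W_t(x)^*\Omega_0\rangle_{\Fock}\,f(X_t^x)\right]\Id x,
\end{align*}
and a short Fock-space computation, using $a(h)^n\Omega_0=0$ for $n\ge1$, $\Omega_0\in\Ker(\Id\Gamma(\omega))$, and the real-valuedness of $u_t^N(x)$, gives $F_{t/2}(h)^*\Omega_0=\Omega_0$, $\langle\Omega_0|F_{t/2}(h)\Omega_0\rangle_{\Fock}=1$, and hence $\langle\Omega_0|W_t(x)^*\Omega_0\rangle_{\Fock}=\eul^{u_t^N(x)-\int_0^tV(X_s^x)\Id s}$. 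Substituting this back produces the expression in the statement.
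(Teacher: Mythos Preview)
Your proposal is correct and follows the same approach as the paper, which gives no proof at all and simply cites \cite[Corollaries~8.5\,\&\,8.6]{MatteMoller.2018}. You actually supply more detail than the paper: the spectral-theoretic reduction of (i) to the non-vanishing $E_{[E_0,E_0+\epsilon)}\Psi\neq0$, and the Fock-space computation $\langle\Omega_0|W_t(x)^*\Omega_0\rangle_{\Fock}=\eul^{u_t^N(x)-\int_0^tV(X_s^x)\Id s}$ that connects (ii) to (i) via $\Psi=f\otimes\Omega_0$, are both correct and are exactly how the cited reference proceeds.
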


\subsection{Two-sided bounds on the minimal energy}\label{ssectwosided}

\noindent
With the aid of \cref{corforspecu} and exponential moment bounds on the complex action, 
we obtain the estimates on the minimal energy presented in the following. We only consider the
translation invariant case. Non-zero external potentials $V$ can be incorporated by using form bounds
relative to $H$ and variational arguments.

\subsubsection{Lower bounds}

\noindent
The first lower bound in the next theorem seems to be reasonably good for small
coupling constants $g$. In this regime it provides a lower bound proportional to $-g^4N^3$,
which is the behavior observed for all values of $g$ and $N$ in the non-relativistic Nelson
model \cite{MatteMoller.2018}. (The minimal energy depends on the choice of the energy counter terms, of course.
In \cite{MatteMoller.2018} the authors work in 3D, set $m_{\p}=1$ and use \eqref{def:EUVren} 
with $|k|^2/2$ put in place of $\psi(k)$.)
The second and third bounds in the next theorem show the correct behavior for large $g$ and/or $N$
as we shall see in \cref{ssecasymp}.

\begin{thm}\label{thm:lowbound}
Consider the translation invariant case $V=0$.
There exist constants $b,b'>0$ such that, for all coupling constants~$g$ and particle numbers~$N$,
\begin{align*}
\inf\sigma(H)&\ge -\frac{bg^4N^3}{m_{\bos}}\cdot\eul^{8\pi g^2N/m_{\bos}}
-\frac{b'g^4N^2(N-1)}{m_{\bos}}\bigg(1+\frac{m_{\p}}{m_{\bos}}\bigg).
\end{align*}
Furthermore, there exist constants $c,c'>0$ such that
\begin{align*}
\inf\sigma(H)&\ge -\bigg\{\pi g^2(2N^2-N)\ln\bigg(\frac{g^2N}{m_{\bos}}\bigg)\bigg\}-cg^2N^2-c'(N-1)m_{\p},
\ \ \text{if $g^2N\ge m_{\bos}$.}
\end{align*}
For $m_{\p}>0$, the term in curly brackets $\{\cdots\}$ can be replaced by
\begin{align*}
2\pi g^2N(N-1)\ln\bigg(\frac{g^2N}{m_{\bos}}\bigg)
+\pi g^2N\ln(1\vee(g^2N))+\frac{\pi g^2N}{m_{\p}}[(g^2N)\wedge1].
\end{align*}
\end{thm}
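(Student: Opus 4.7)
The plan is to combine \cref{corforspecu}(ii) with exponential moment bounds on the complex action $u_t^N(x) = w_t^N(x) - c_t^N(x) + m_t^N(x)$. Choosing any non-negative $f \in L^2(\RR^{2N})\setminus\{0\}$ with compact support $K$, the trivial pointwise bound
\begin{equation*}
\int f(x)\, \EE\big[e^{u_t^N(x)} f(X_t^x)\big]\, \Id x \le \|f\|_1 \|f\|_\infty \sup_{x \in K} \EE\big[e^{u_t^N(x)}\big]
\end{equation*}
combined with \cref{corforspecu}(ii) reduces the theorem to producing, for each claimed lower bound $-A_N$, an estimate $\EE[e^{u_t^N(x)}] \le C_N(x)\, e^{t A_N}$ uniform in $x \in K$.

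The next step is to decompose $u_t^N$ and handle its three pieces separately. The term $c_t^N(x)$ is uniformly bounded on $\Omega \times \RR^{2N}$ (cf.~\cref{rem:bdcN}), so $e^{-c_t^N(x)}$ only contributes a multiplicative constant depending on $g, N, m_\bos$. The martingale $m_t^N(x)$ is controlled via the Applebaum--Siakalli exponential tail estimate recalled in the appendix, yielding $\EE[e^{p m_t^N(x)}]^{1/p} \le C\, e^{t B_{N,p}}$ for suitable $p > 1$, with $B_{N,p}$ explicit in terms of the predictable quadratic variation of the jump integrand in \eqref{def:martingale}. H\"older's inequality with conjugate exponents $p, q$ then reduces matters to bounding $\EE[e^{q w_t^N(x)}]$, where the pair interaction $w_t^N(x)$ dominates the asymptotics.

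For the pair interaction I would invoke a Khasminskii-type argument. Since $w \in L^p(\RR^2)$ for all $p \in [2,\infty)$ and the transition density of the two-dimensional inverse Gaussian process is bounded on bounded time intervals, one obtains $\sup_x \EE\big[\int_0^t 2|w|(Y_s^x)\, \Id s\big] \le C(g, m_\bos, m_\p)\, t$ for each difference process $Y = X_j - X_\ell$, with the constant explicit from the Fourier representation of $w$. A H\"older factorization over the $N(N-1)/2$ pairs together with iteration of the Khasminskii bound produces the first lower bound: the exponential factor $\eul^{8\pi g^2 N/m_\bos}$ reflects the cumulative Kato norm of all pair interactions, while the prefactor $g^4 N^3/m_\bos$ arises by combining the number of pairs with an $O(g^4/m_\bos)$ per-pair contribution; the additive term $b' g^4 N^2(N-1) m_\bos^{-1}(1+m_\p/m_\bos)$ comes from the $m_\p$-dependent correction in the per-pair bound. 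For the second bound, valid when $g^2 N \ge m_\bos$, I would sharpen the analysis by splitting $w = w_{\rm sing} + w_{\rm reg}$, where $w_{\rm sing}$ captures the leading logarithmic singularity $w(y) \sim -\pi g^2 \ln|y|$ near the origin. Controlling $w_{\rm sing}$ by a direct spectral argument that exploits the Hardy-type estimate $\psi(p) \ge |p| - m_\p$ yields the sharp coefficient $\pi g^2 (2N^2 - N) \ln(g^2 N/m_\bos)$, while the bounded remainder $w_{\rm reg}$ produces only the lower-order terms $c g^2 N^2$ and $c'(N-1) m_\p$. The refined expression for $m_\p > 0$ requires a further separation of the $N(N-1)$ pair-log contributions from the $N$ single-particle self-energy logs originating from the regularity of $\psi$ at the origin.

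The main obstacle will be the sharp quantitative tracking of constants, in particular the exponent $8\pi$ in the first bound and the logarithmic coefficients $\pi(2N^2 - N)$ and $2\pi N(N-1)$ in the second. Since $w$ has a logarithmic singularity at the origin and decays only exponentially at infinity, neither a pointwise $L^\infty$ bound nor a pure $L^p$ bound alone suffices; one has to combine the $L^p$-integrability of $w$ with the smoothing of the L\'evy transition densities and the explicit Fourier representation of $w$, and carefully optimize both the Khasminskii iteration and the singular/regular splitting to extract the correct numerical constants.
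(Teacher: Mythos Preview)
Your overall architecture for the first bound is close to the paper's, but you have the attribution of the key terms reversed. In the paper (\cref{thmexpmonentu1}), the exponential factor $\eul^{8\pi g^2N/m_{\bos}}$ and the prefactor $g^4N^3/m_{\bos}$ both originate from the \emph{martingale} contribution $m_t^N$ via the Applebaum--Siakalli estimate of \cref{lemexpmomentm}(iii), not from the pair interaction. The Khasminskii-type bound on $w_t^N$ (\cref{lemexpmomentw}) produces only the additive term $b'g^4N^2(N-1)m_{\bos}^{-1}(1+m_{\p}/m_{\bos})$, with no nested exponential. So your H\"older splitting and use of both lemmas is right, but your reading of which piece delivers which term is backwards.

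For the second bound your proposed mechanism misses the paper's key idea. The paper does \emph{not} split $w$ into singular and regular parts or appeal to any Hardy-type inequality. Instead (\cref{thmexpmonentu2}), it introduces an infrared cutoff $\sigma$ in momentum space with $\omega(\sigma)=g^2N$ and splits the complex action itself as $u_{0,\infty}^N = u_{0,\sigma}^N[x,X_\bullet] + u_{\sigma,\infty}^N$. The low-momentum piece $u_{0,\sigma}^N[x,X_\bullet]$ is given by the \emph{pathwise} formula \eqref{defuUV}, which is well-defined for finite $\sigma$ and, crucially, already contains the renormalization energy $-tNE_{0,\sigma}^{\ren}$. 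This piece is then bounded \emph{deterministically} by $tg^2N^2\int_{B_\sigma}\omega^{-2}\,\Id k - tNE_{0,\sigma}^{\ren}$, and the algebra of this difference is exactly what produces the coefficients $\pi g^2(2N^2-N)$ and, for $m_{\p}>0$, the refined split into $2\pi g^2N(N-1)$ plus the self-energy terms. The high-momentum remainder $u_{\sigma,\infty}^N$ is then controlled by the same martingale and pair-interaction lemmas as before, but now evaluated at $\omega(\sigma)=g^2N$ so that they contribute only the lower-order terms $cg^2N^2$ and $c'(N-1)m_{\p}$. Your singular/regular splitting of $w$ alone cannot see the cancellation with $E^{\ren}$ that is responsible for the sharp logarithmic coefficients; that cancellation lives inside the pathwise expression for $u_{0,\sigma}^N$, not in $w$.
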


\begin{proof}
To prove these bounds we choose a bounded, integrable, non-negative and non-zero $f:\RR^{2N}\to\RR$ in
\cref{corforspecu}(ii) and estimate $f(X_t^x)\le\|f\|_\infty$. After that we apply
the exponential moment bounds on the complex phase
in \cref{thmexpmonentu1,thmexpmonentu2} with $p=1$. Since the parameter $\alpha>1$ 
appearing in these theorems is arbitrary, we obtain the asserted lower bounds in this way.
\end{proof}

\subsubsection{Upper bound}

\noindent
We complement the lower bounds from the previous theorem with an upper bound, which follows from a combination
of \cref{corforspecu} and our exponential moment bounds on the complex action with a more well-known
trial state argument. 

\begin{thm}\label{thm:upbound}
Assume that $V=0$
and let $\theta\in(0,1)$. Then there exists $C(\theta,m_{\p},1\vee m_{\bos})\in(0,\infty)$, depending solely on the
quantities displayed in its argument, as well as a universal constant $c>0$ such that, whenever $g^2N>2m_{\bos}$,
\begin{align*}
\inf\sigma(H)
&\le c(N-1)m_{\p}+C(\theta,m_{\p},1\vee m_{\bos})g^2N^2
\\
&\quad-2\pi \theta g^2N(N-1)\ln\bigg(\frac{g^2N}{m_{\bos}}\bigg)-\pi \theta g^2N\ln(1\vee (g^2N))
\\
&\quad-\chi_{\{0\}}(m_{\p})\pi \theta g^2N\ln\bigg(\frac{1\wedge(g^2N)}{m_{\bos}}\bigg).
\end{align*}
\end{thm}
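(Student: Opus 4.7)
The plan is to combine the variational characterization of $\inf\sigma(H)$ from \cref{corforspecu}(ii) with a reverse H\"older inequality, so as to reduce the required lower bound on the Feynman--Kac path integral to an upper bound of the same type as those controlling the exponential moments of the complex action already used in the proof of \cref{thm:lowbound}. The parameter $\theta\in(0,1)$ enters as the H\"older exponent.

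Taking a non-negative, non-zero, bounded, integrable trial function $f\in L^2(\RR^{2N})$ in \cref{corforspecu}(ii) gives, for every $t>0$,
\[
\inf\sigma(H)\;\le\; -\frac{1}{t}\ln\!\int_{\RR^{2N}}\! f(x)\EE\bigl[\eul^{u^N_t(x)}f(X^x_t)\bigr]\Id x\;+\;\frac{1}{t}\ln\|f\|_2^2.
\]
I would then apply H\"older's inequality on the positive finite measure $\Id\mu(x,\omega)=f(x)f(X^x_t(\omega))\Id x\,\Id\PP(\omega)$, with conjugate exponents $(1/\theta,1/(1-\theta))$, to the trivial factorisation $1=\eul^{\theta u^N_t}\cdot\eul^{-\theta u^N_t}$; after rearrangement the resulting inequality reads
\[
\int f\,\EE[\eul^{u^N_t}f(X^\cdot_t)]\Id x\;\ge\;\Bigl(\int f\,\EE[f(X^\cdot_t)]\Id x\Bigr)^{1/\theta}\Bigl(\int f\,\EE[\eul^{-\theta u^N_t/(1-\theta)}f(X^\cdot_t)]\Id x\Bigr)^{-(1-\theta)/\theta}.
\]

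Choosing $f$ as a product of smooth bumps supported in balls of radius $R\sim(g^2N)^{-1/2}$ about the origin, the first factor on the right hand side is controlled from below, using the bounds on the free relativistic semigroup from the last appendix of the paper, by a quantity whose contribution to $-\tfrac{1}{t}\ln(\,\cdot\,)$ decays as $\tfrac{N\ln t}{\theta t}\to 0$ when $t\to\infty$. For the second factor I would invoke the exponential moment bounds \crefnosort{thmexpmonentu1,thmexpmonentu2} applied to $-u^N_t$ with the parameter choice $p=\theta/(1-\theta)$, thereby obtaining an inequality of the form
\[
\int f\,\EE[\eul^{-\theta u^N_t/(1-\theta)}f(X^\cdot_t)]\Id x\;\le\;\|f\|_\infty\|f\|_1\exp\!\bigl(t\,E_0(\theta,g,N,m_{\bos},m_{\p})\bigr),
\]
whose coefficient $E_0$ inherits from those theorems the very same $N(N-1)$ pair and $N$ self-energy logarithmic structure that governs the lower bounds of \cref{thm:lowbound}. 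Combining all three ingredients and sending $t\to\infty$ then yields $\inf\sigma(H)\le \tfrac{1-\theta}{\theta}E_0(\theta,g,N,m_{\bos},m_{\p})$. A careful bookkeeping of constants -- choosing the auxiliary parameter $\alpha>1$ internal to the exponential moment theorems appropriately and treating the massless case $m_{\p}=0$ separately -- should reproduce each of the stated terms on the right hand side of \cref{thm:upbound}, including the $\chi_{\{0\}}(m_{\p})$ contribution and the $c(N-1)m_{\p}$ threshold term arising from the kinetic energy of the free relativistic particles.

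The hardest part will be producing a sufficiently sharp $E_0$ in the exponential moment estimate for $-u^N_t=-w^N_t+c^N_t-m^N_t$: the pair interaction $w^N_t$ is controlled by pointwise bounds on the Yukawa-like potential $w$ from \crefnosort{def:wy}, and $c^N_t$ is uniformly bounded thanks to \cref{rem:bdcN}, but the negative-exponent control on the jump martingale $m^N_t$ in \crefnosort{def:martingale} requires the Applebaum--Siakalli exponential tail estimates from the middle appendix together with explicit bounds on its quadratic variation. Ensuring that the coefficient of every leading logarithm in $E_0$ is exactly $\theta$ (rather than some strictly smaller number) will be the key technical check that makes this upper bound asymptotically match the lower bound of \cref{thm:lowbound}.
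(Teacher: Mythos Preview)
Your reverse H\"older manipulation is algebraically correct, but the strategy cannot deliver the claimed bound. After rearranging you need
\[
\inf\sigma(H)\le \frac{1-\theta}{\theta}\,\limsup_{t\to\infty}\frac{1}{t}\ln\!\int f(x)\,\EE\bigl[\eul^{-pu^N_t(x)}f(X^x_t)\bigr]\Id x
\]
with $p=\theta/(1-\theta)$, and for this to reproduce the large negative right hand side of the theorem the integral would have to decay exponentially in $t$. It does not: with positive probability the particles separate quickly and stay far apart for most of $[0,t]$, so that $w^N_t/t\to0$, while $c^N_t$ is uniformly bounded and $m^N_t$ is a martingale with fluctuations of order $\sqrt{t}$. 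This alone forces $\liminf_{t\to\infty}\tfrac{1}{t}\ln\EE[\eul^{-pu^N_t(x)}]\ge0$, and your scheme gives only the vacuous estimate $\inf\sigma(H)\le0$. The same conclusion is visible in the proof of \cref{thmexpmonentu2}: the crucial negative contribution there comes from $u^N_{0,\sigma,t}\le t\bigl(g^2N^2\!\int_{B_\sigma}\omega^{-2}-NE^{\ren}_\sigma\bigr)$, and when you flip the sign the renormalization energy enters with the wrong sign, $-u^N_{0,\sigma,t}\le t\bigl(g^2N^2\!\int_{B_\sigma}\omega^{-2}+NE^{\ren}_\sigma\bigr)$, destroying the cancellation that produced the logarithms.

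The paper's argument is genuinely different. It first proves a variational upper bound for the \emph{cutoff} Hamiltonian $H'_{\sigma}+NE^{\ren}_\sigma$ using an explicit Gaussian--coherent trial state (\cref{lemtrialfct,corubGauss2}); this is where the leading logarithmic terms come from. Then it relates $\inf\sigma(H)$ to $\inf\sigma(H_\sigma)$ via a forward H\"older inequality on the Feynman--Kac integral, splitting $u^N_t=u^N_{0,\sigma,t}+u^N_{\sigma,\infty,t}$ and controlling only the high-momentum tail $u^N_{\sigma,\infty,t}$ with the martingale and pair-interaction moment bounds; a rescaling $g\mapsto g/\sqrt{p'}$ closes the argument. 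The path-integral machinery is used only as glue between $H$ and $H_\sigma$, not as a substitute for a trial state.
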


\begin{proof}
The proof of this theorem can be found at the end of \cref{sec:upbound}.
\end{proof}

\subsection{Asymptotics of the minimal energy}\label{ssecasymp}

Combined the upper and lower bounds in \cref{ssectwosided} reveal the asymptotics of the minimal energy in the three regimes where the particle number $N$ and the coupling constant $g$ are large and the boson mass is small.

For convenience we do typically not display the dependence of $H$ and related objects on
the parameters $N$, $m_{\p}$, $m_{\bos}$ and $g$ in our notation. An exception will be
the next theorem and the subsequent remark where the minimal energy of $H$ in the translation invariant case 
is denoted by
\begin{align*}
\mc{E}(g,N,m_{\p},m_{\bos})&\coloneq\inf\sigma(H)\quad\text{when $V=0$.}
\end{align*}

\begin{thm}\label{thm:asymp}
The following relations hold for $V=0$,
\begin{align}
\lim_{N\to\infty}\frac{\mc{E}(g,N,m_{\p},m_{\bos})}{N^2\ln(N)}&=-2\pi g^2,
\\
\lim_{|g|\to\infty}\frac{\mc{E}(g,N,m_{\p},m_{\bos})}{g^2\ln(g^2)}&=-\pi(2N^2-N),
\\\label{asympmb}
\lim_{m_{\bos}\downarrow0}\frac{\mc{E}(g,N,m_{\p},m_{\bos})}{\ln(m_{\bos})}&=2\pi g^2N(N-1),\quad m_{\p}>0,
\\
\lim_{m_{\bos}\downarrow0}\frac{\mc{E}(g,N,0,m_{\bos})}{\ln(m_{\bos})}&=\pi g^2(2N^2-N).
\end{align}
\end{thm}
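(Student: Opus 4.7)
The plan is to establish \cref{thm:asymp} by squeezing $\mc{E}(g,N,m_{\p},m_{\bos})$ between the lower bound of \cref{thm:lowbound} and the upper bound of \cref{thm:upbound}, and then checking that after division by the appropriate normalizing factor ($N^2\ln N$, $g^2\ln(g^2)$, or $\ln m_{\bos}$) both bounds converge to the same limit. In every regime both bounds are eventually applicable, since the constraints $g^2N\ge m_{\bos}$ and $g^2N>\sqrt{2}m_{\bos}$ are ultimately satisfied. The key observation is that in each regime one isolates a single dominant logarithm whose coefficient agrees in the two bounds, while all remaining contributions are of strictly lower order.

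For the first two limits I would treat the $N\to\infty$ and $|g|\to\infty$ cases together. In both, the normalizing factor is positive, so one simply divides the inequalities. For $N\to\infty$ with $g,m_{\p},m_{\bos}$ fixed, the second lower bound in \cref{thm:lowbound} gives leading term $-\pi g^2(2N^2-N)\ln(g^2N/m_{\bos})\sim -2\pi g^2 N^2\ln N$, while the upper bound gives $-2\pi\theta g^2 N(N-1)\ln(g^2N/m_{\bos})-\pi\theta g^2 N\ln(1\vee(g^2N))\sim -2\pi\theta g^2 N^2\ln N$; all remaining contributions are $O(N^2)$, so dividing by $N^2\ln N$ and sending $\theta\uparrow 1$ yields $-2\pi g^2$. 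For $|g|\to\infty$ the two logarithms $\ln(g^2N/m_{\bos})$ and $\ln(1\vee(g^2N))$ are both asymptotic to $\ln(g^2)$, so the upper bound's leading coefficient becomes $-\pi\theta g^2[2N(N-1)+N]=-\pi\theta g^2(2N^2-N)$, which matches the lower bound's coefficient and gives the second limit.

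For the two $m_{\bos}\downarrow 0$ limits one must remember that $\ln m_{\bos}$ is negative, so dividing reverses inequalities: the upper bound on $\mc{E}$ yields a lower bound on $\mc{E}/\ln m_{\bos}$ and vice versa. When $m_{\p}>0$ I would invoke the bracketed alternative of the second lower bound, whose dominant piece $-2\pi g^2N(N-1)\ln(g^2N/m_{\bos})=2\pi g^2N(N-1)\ln m_{\bos}-2\pi g^2N(N-1)\ln(g^2N)$ produces $2\pi g^2N(N-1)$ after division by $\ln m_{\bos}$; the $\pi g^2N[(g^2N)\wedge 1]/m_{\p}$ contribution is then a finite constant independent of $m_{\bos}$. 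On the upper-bound side the indicator $\chi_{\{0\}}(m_{\p})$ kills the third logarithm, the prefactor $C(\theta,m_{\p},1\vee m_{\bos})=C(\theta,m_{\p},1)$ is bounded for small $m_{\bos}$, and the surviving $-2\pi\theta g^2 N(N-1)\ln(g^2N/m_{\bos})$ gives the matching asymptote $2\pi\theta g^2 N(N-1)$, then $\theta\uparrow 1$. For $m_{\p}=0$ I would instead use the original form of the second lower bound with leading term $-\pi g^2(2N^2-N)\ln(g^2N/m_{\bos})$; on the upper bound, the identity $\ln(1\vee(g^2N))+\ln((1\wedge(g^2N))/m_{\bos})=\ln(g^2N/m_{\bos})$ collapses the two remaining logarithms (including the $\chi_{\{0\}}(m_{\p})$-term, now active) into $-\pi\theta g^2(2N^2-N)\ln(g^2N/m_{\bos})$, and division by $\ln m_{\bos}$ followed by $\theta\uparrow 1$ yields $\pi g^2(2N^2-N)$.

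The main obstacle is not conceptual but bookkeeping: in each of the four limits one must verify that every sub-leading contribution in \cref{thm:lowbound,thm:upbound} — the constants $cg^2N^2$, $c'(N-1)m_{\p}$, $C(\theta,m_{\p},1\vee m_{\bos})g^2N^2$, and the $m_{\p}^{-1}$ term — is genuinely $o(1)$ after the relevant normalization, and that in the $m_{\p}=0$ regime the several logarithms on the upper-bound side combine correctly to produce the claimed coefficient $\pi g^2(2N^2-N)$. No analytic input beyond \cref{thm:lowbound,thm:upbound} should be needed.
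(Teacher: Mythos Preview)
Your proposal is correct and follows essentially the same approach as the paper, which simply states that all four limits follow directly from \cref{thm:lowbound,thm:upbound} together with the freedom to let $\theta\uparrow1$. Your write-up supplies exactly the bookkeeping the paper omits, and your observation that $(1\vee(g^2N))\cdot(1\wedge(g^2N))=g^2N$ collapses the upper-bound logarithms in the $m_{\p}=0$ regime is the right way to see the coefficient $\pi g^2(2N^2-N)$ emerge.
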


\begin{proof}
All limit relations follow directly from \cref{thm:lowbound,thm:upbound}, if we take into account that the
parameter $\theta\in(0,1)$ appearing in \cref{thm:upbound} can be chosen arbitrarily close to $1$.
\end{proof}

\begin{rem}
In the case $N=1$ with $m_{\p}>0$, where \eqref{asympmb} does not reveal any leading asymptotic
behavior as $m_{\bos}\downarrow0$, the last statement in \cref{thm:lowbound} implies the uniform lower bound
\begin{align}
\mc{E}(g,1,m_{\p},m_{\bos})\ge -\pi g^2\ln(g^2\vee1)-\frac{\pi g^2}{m_{\p}}[g^2\wedge1]-cg^2.
\end{align}
That is, singular contributions to the minimal energy coming from $m_{\bos}$ have been compensated for
by the renormalization energies $E_{\UV}^{\ren}$ in this case.
\end{rem}


\section{Proof of the Feynman--Kac formula}\label{sec:FK}

\noindent
In this section we present the proof of our Feynman--Kac formula. In doing so we employ some results of 
later sections as starting points. We proceed in this way so that the reader can quickly grasp the general proof
strategy leading to the Feynman--Kac formula and see which ingredients are needed.
The latter are formulated precisely in \cref{ssecdefprelFK} after the necessary notation has been set up.
We shall in particular introduce a Feynman-Kac semigroup
$(T_{\UV,t})_{t\ge 0}$ for every $\UV\in(0,\infty]$. When $\UV<\infty$, the interaction terms in this semigroup are
ultraviolet cutoff at $\UV$. Proving the Feynman--Kac formula for the ultraviolet regularized Hamiltonian $H_{\UV}$
then amounts to identifying $H_{\UV}$ as the generator of $(T_{\UV,t})_{t\ge 0}$.
This is done in \cref{sec:FK-UV}, essentially by combining an integral equation for the Fock space operator 
part of the ultraviolet cutoff Feynman--Kac integrands, stated as a substitution rule in \cref{ssecdefprelFK}, 
with an It\^{o} formula discussed in \cref{ssecItoformula}.
In the final \cref{sec:FK.ren} we establish the Feynman--Kac formula for $H$, exploiting that
$T_{\UV,t}\to T_{\infty,t}$, $\UV\to\infty$, in operator norm for every $t\ge0$
as shown in \cref{ssecconvTUV}.

\subsection{Definitions and main ingredients for the proof}\label{ssecdefprelFK}

\noindent
We start by defining the complex action $\tilde{u}_{\UV,t}^N(x)$ for finite $\UV$ and the 
Fock space operator part $W_{\UV,t}(x)$ of the ultraviolet cutoff Feynman--Kac integrands.
The formula \eqref{defuUV0} for $\tilde{u}_{\UV,t}^N(x)$ is a direct analogue of Feynman's expression for the complex action
in the polaron model. To see the analogy, one has to pull the integrations coming from \eqref{defUplusminus}
out of the scalar product in \eqref{defuUV0} and evaluate the so-obtained new scalar products, which
amounts to computing the Fourier transform of $v_{\UV}^2$. As opposed to the polaron model, we cannot
simply drop the cutoff $\UV$ in \eqref{defuUV0} after these manipulations. 
Before doing so, and to derive useful, $\UV$-uniform bounds
on $\tilde{u}_{\UV,t}^N(x)$, we have to re-write it by means of It\^{o}'s formula. This procedure
results in the definition of the limiting complex action $u_t^N(x)$ we saw in \cref{sseccomplexactioninfty};
see \cref{lem:actionito} below.

\begin{defn}\label{defn:complexphase0UV}
Let $\UV\in(0,\infty)$ and $x\in\RR^{2N}$. Then we define
\begin{align}\label{eq:defUpmUV}
U_{\UV,t}^{N,\pm}(x)&\coloneq \chi_{B_{\UV}}U^{N,\pm}_{t}(x),
\\\label{defuUV0}
\tilde{u}_{\UV,t}^N(x)&\coloneq\sum_{j=1}^N
\int_0^t\langle U^{N,+}_{\UV,s}(x)|\eul^{-\ii K\cdot X^x_{j,s}}v_{\UV}\rangle\Id s-tNE_{\UV}^{\ren},
\end{align}
for all $t\ge0$.
Furthermore, we set $W_{\UV,0}(x)\coloneq\id_{\Fock}$ and, for $t>0$,
\begin{align}\label{defWUVtx0}
W_{\UV,t}(x)&\coloneq
\eul^{\tilde{u}^N_{\UV,t}(x)-\int_0^tV(X^x_s)\Id s}F_{t/2}(-U^{N,+}_{\UV,t}(x))
F_{t/2}(-U^{N,-}_{\UV,t}(x))^*.
\end{align}
\end{defn}

To unify notation we also set
\begin{align*}
W_{\infty,t}(x)\coloneq W_t(x),\quad t\ge0,\,x\in\RR^{2N},
\end{align*}
where $W_t(x)$ is defined in \cref{def:Wt}. We shall shortly make use of the following observations:

\begin{rem}\label{rem:Wadapt}
Let $x\in\RR^{2N}$.
We know from \cite[\textsection17]{GueneysuMatteMoller.2017} that 
$(0,\infty)\times L^2(\RR^2)\ni (s,h)\mapsto F_s(h)\in\LO(\Fock)$ is continuous. 
Together with \cref{defuUV0}, \cref{lem:Upmadapted} and the way we defined
the limiting complex action $u_t^N(x)$ this implies that 
$W_{\UV,t}(x):\Omega\to\LO(\Fock)$ is $\fr{F}_t$-measurable and separably valued for all
$t\ge0$ and $\UV\in(0,\infty]$. In conjunction with \cref{defuUV0,remUpmcont} it
further follows that all paths of $(W_{\UV,t}(x))_{t\ge0}$ with $\UV\in(0,\infty)$ are continuous on
the open half-axis $(0,\infty)$. (In the case $\UV=\infty$, \cref{cor:pathsactioncont} will
imply that $(0,\infty)\ni t\mapsto W_{\infty,t}(x)\in\LO(\Fock)$ is continuous $\PP$-a.s.)
\end{rem}

In the next proposition we state the substitution rule involving
$W_{\UV,t}(x)$ with finite $\UV$ alluded to above. It can be combined with the It\^{o} formula \cref{ItoAfX} below.
For every $\UV\in(0,\infty)$, we abbreviate
\begin{align}\label{defhUVx}
h_{\UV}(x)&\coloneq \Id\Gamma(\omega)+\sum_{j=1}^N\vp(\eul^{-\ii K\cdot x_j}v_\UV)+V(x)+NE_{\UV}^{\ren},
\quad x\in\RR^{2N}.
\end{align}
Thus, each $h_{\UV}(x)$ is a selfadjoint operator in $\Fock$ with domain $\dom(\Id\Gamma(\omega))$ 
containing all terms in $H_{\UV}$ apart from the kinetic energy of the matter particles.

\begin{prop}\label{propODEWUV}
Let $\UV\in(0,\infty)$, $x\in\RR^{2N}$, $\phi_1\in\dom(\Id\Gamma(\omega))$, $\phi_2\in\Fock$ and abbreviate
\begin{align*}
A_{t}&\coloneq\langle\phi_1|W_{\UV,t}(x)\phi_2\rangle_{\Fock},\quad t\ge0.
\end{align*}
Assume that $V$ is bounded.
Then $A$ is adapted and all its paths are absolutely continuous on every compact subinterval of $[0,\infty)$.
Furthermore, $A_0=\langle\phi_1|\phi_2\rangle_{\Fock}$ and, for every 
complex-valued predictable process $(Z_s)_{s\ge0}$ with locally bounded paths,
\begin{align}\label{eq:substWUV}
\int_0^tZ_s\Id A_s&=\int_0^tZ_s\langle h_{\UV}(X_s^x)\phi_1|W_{\UV,s}(x)\phi_2\rangle_{\Fock}\Id s,\quad t\ge0.
\end{align}
\end{prop}

\begin{proof}
Adaptedness of $A$ follows from \cref{rem:Wadapt}. The remaining statements follow directly from \cref{lemODEW1} below.
\end{proof}

In view of \cref{rem:Wadapt}, our next definition is meaningful:
\begin{defn}\label{def:semigroup}
Let $\UV\in(0,\infty]$ and $t\ge0$. Assume that $\Psi:\RR^{2N}\to\Fock$ is measurable 
and such that $\|W_{\UV,t}(x)^*\Psi(X_t^x)\|_{\Fock}$ is $\PP$-integrable for all $x\in\RR^{2N}$.
Then, we define $T_{\UV,t}\Psi:\RR^{2N}\to\Fock$ by
\begin{align*}
(T_{\UV,t}\Psi)(x)&\coloneq \EE\big[W_{\UV,t}(x)^*\Psi(X_t^x)\big],\quad x\in\RR^{2N}.
\end{align*}
\end{defn}

Since $\PP(X_t^x\in\scr{N})=0$ for all zero sets $\scr{N}\in\fr{B}(\RR^{2N})$ and $t>0$,
the maps $T_{\UV,t}$, $t>0$, are also well-defined on equivalence classes of functions. 
We will often and tacitly make use of this. The symbol $T_{\UV,0}$ always stands for the
identity mapping, whether acting on measurable functions or their equivalence classes.

In \cref{sec:FKInt} we shall study the maps $T_{\UV,t}$ when applied to functions in $L^p(\RR^{2N},\Fock)$
with $p\in(1,\infty]$. We summarize all results of that section needed to derive the Feynman--Kac formula in the 
next proposition:

\begin{prop}\label{prop:TUVonL2}
For all $t>0$, $x\in\RR^{2N}$ and $\Psi\in L^2(\RR^{2N},\Fock)$, 
the random variable $\|W_{\UV,t}(x)^*\Psi(X_t^x)\|_{\Fock}$ is $\PP$-integrable and
$T_{\UV,t}\Psi$ is square-integrable.
Seen as a family of operators on $L^2(\RR^{2N},\Fock)$,
$(T_{\UV,t})_{t\ge0}$ is a strongly continuous semigroup of bounded operators for every $\UV\in(0,\infty]$.
Furthermore, there exists $c\in(0,\infty)$ such that
$\|T_{\UV,t}\|\le \eul^{c(1+t)}$ for all $t\ge0$ and $\UV\in(0,\infty]$.
Finally, $T_{\UV,t}\to T_{\infty,t}$ as $\UV\to\infty$ in operator norm for every $t\ge0$.
\end{prop}

\begin{proof}
	This proposition summarizes assertions proven in \cref{sec:FKInt}. More precisely,
	the first integrability statement can be found in \cref{prop:LpLq}(i),
	the stated $L^2$-properties are a special case of \cref{thm:semigroup,lem:strongcont},
	and the limit $\UV\to\infty$ is treated in \cref{prop:Tconv}.
\end{proof}


\subsection{A special case of It\^{o}'s formula}\label{ssecItoformula}

\noindent
To streamline later proofs and clarify where and how the operator $\psi_X(-\ii\nabla)$
emerges in our computations, we note the following special case of the standard It\^{o} formula for $X$.
It will be applied in \cref{sec:FK-UV,lem:actionito}.

\begin{lem}
Assume that $f\in C^2(\RR^{2N})$ is bounded with bounded first and second order derivatives. 
Let $(A_t)_{t\ge0}$ be an adapted c\`{a}dl\`{a}g real-valued process all whose paths have locally finite variation. 
Assume that $\sup_{s\in[0,t]}|A_s|\in L^2(\PP)$ for all $t\ge0$. Finally, let $x\in\RR^{2N}$. Then, $\PP$-a.s.,
\begin{align}\nonumber
A_tf(X_t^x)&=A_0f(x)+\int_0^tf(X_{s-}^x)\Id A_s-\int_0^tA_{s}(\psi_X(-\ii\nabla)f)(X_s^x)\Id s
\\\label{ItoAfX}
&\quad+\int_{(0,t]\times\RR^{2N}}A_{s-}(f(X_{s-}^x+z)-f(X_{s-}^x))\Id \wt{N}_X(s,z),\quad t\ge0.
\end{align}
The process comprised of the stochastic integrals in the second line is a martingale.
\end{lem}

\begin{proof}
Since $f$ is bounded with a bounded derivative,
\begin{align*}
\sup_{s\in[0,t]}|A_{s-}(f(X_{s-}^x+z)-f(X_{s-}^x))|\le (|z|\wedge1)(\|\nabla f\|_\infty\vee2\|f\|_\infty) \sup_{s\in[0,t]}|A_{s}|, 
\end{align*}
for all $z\in\RR^{2N}$ and $t\ge0$.
Thus, $(s,z,\gamma)\mapsto A_{s-}(\gamma)(f(X_{s-}^x(\gamma)+z)-f(X_{s-}^x(\gamma)))$ belongs to $\scr{H}_2$
in view of \cref{intcondnu,eq:nuX}.
In particular, recalling the discussion before \cref{ItoIso}, its $\Id\wt{N}_X$-integral is a martingale. Furthermore,
since $(a,z)\mapsto af(z)$ is in $C^2(\RR^{1+2N})$, the standard textbook version
of It\^{o}'s formula (see, e.g.,  \cite[Theorem~4.4.7]{Applebaum.2009}
or \cite[\textsection2~Theorem~5.1]{IkedaWatanabe.1981}) in conjunction with the
L\'{e}vy--It\^{o} decomposition \eqref{LevyItoX} of $X$ directly implies
\begin{align}\nonumber
A_tf(X_t^x)&=A_0f(x)+\int_0^tf(X_{s-}^x)\Id A_s
\\\nonumber
&\quad+\int_{(0,t]\times(\RR^{2N}\setminus B_1')}A_{s-}(f(X_{s-}^x+z)-f(X_{s-}^x)) \Id N_X(s,z)
\\\nonumber
&\quad+\int_{(0,t]\times B_1'}A_{s-}(f(X_{s-}^x+z)-f(X_{s-}^x))\Id \wt{N}_X(s,z)
\\\label{Itoallg}
&\quad+\int_0^t\int_{B_1'}A_s\big(f(X_{s}^x+z)-f(X_{s}^x)-z\cdot\nabla f(X_s^x)\big)\Id\nu_X(z)\,\Id s,
\end{align}
for all $t\ge0$, $\PP$-a.s.; recall the definition \cref{2NDball} of $B_1'$.
Since $f$ is bounded and $\sup_{s\in[0,t]}|A_s|\in L^1(\PP)$, by assumption, we further see that the map
\begin{align*}
\text{$(s,z,\gamma)\longmapsto
\chi_{\RR^{2N}\setminus B_1'}(z)A_{s-}(\gamma)(f(X_{s-}^x(\gamma)+z)-f(X_{s-}^x(\gamma)))$
belongs to $\scr{H}_1$.}
\end{align*} 
Hence, we can subtract the double integral
\begin{align}\label{doubleint}
\int_0^t\int_{\RR^{2N}\setminus B_1'}A_{s-}(f(X_{s-}^x+z)-f(X_{s-}^x))\Id\nu_X(z)\,\Id s
\end{align} 
in the second line of \cref{Itoallg} and add it to the last one at the same time.
Taking \eqref{wtNXintH1} into account, we then find a $\Id\wt{N}_X$-integral over
$(0,t]\times(\RR^{2N}\setminus B_1')$ in the second line of \cref{Itoallg}, that we can combine with the one over 
$(0,t]\times B_1'$ in the third line. Further, the c\`{a}dl\`{a}g paths of $A$ and $X$ have at most countably many 
discontinuities. Thus, $A_{s-}$ and $X_{s-}^x$ can be replaced by 
$A_s$ and $X_s^x$, respectively, in \cref{doubleint}
without changing the value of the double integral. We thus $\PP$-a.s. arrive at
\begin{align*}
A_tf(X_t^x)&=A_0f(x)+\int_0^tf(X_{s-}^x)\Id A_s
\\
&\quad+\int_{(0,t]\times\RR^{2N}}A_{s-}(f(X_{s-}^x+z)-f(X_{s-}^x))\Id \wt{N}_X(s,z)
\\
&\quad+\int_0^t\int_{\RR^{2N}}A_s\big(f(X_{s}^x+z)-f(X_{s}^x)
-\chi_{B_1'}(z)z\cdot\nabla_zf(X_s^x)\big)\Id\nu_X(z)\,\Id s,
\end{align*}
for all $t\ge0$. On account of \cref{def:psiXnabla} this proves \cref{ItoAfX}.
\end{proof}


\subsection{Feynman--Kac formula with ultraviolet cutoff}\label{sec:FK-UV}

\noindent
We can now prove the Feynman--Kac formula in presence of a finite cutoff.
\begin{thm}\label{thm:FK-UV}
Let $\UV\in(0,\infty)$ and $t\ge0$. Consider $T_{\UV,t}$ as an operator on $L^2(\RR^{2N},\Fock)$.
Then $\eul^{-tH_\UV}=T_{\UV,t}$ and in particular $T_{\UV,t}$ is selfadjoint.
\end{thm}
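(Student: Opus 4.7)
The plan is to identify the generator of the strongly continuous semigroup $(T_{\UV,t})_{t\ge 0}$ on $L^2(\RR^{2N},\Fock)$ (strong continuity and semigroup property being \cref{lem:strongcont,thm:semigroup}) with $-H_\UV$. Denoting this generator by $A_\UV$, if one shows $A_\UV\supset -H_\UV$ on an operator core of $H_\UV$, then surjectivity of $\lambda+H_\UV$ for large $\lambda$ (from selfadjointness of $H_\UV$) together with injectivity of $\lambda-A_\UV$ forces $A_\UV=-H_\UV$, whence $T_{\UV,t}=\eul^{-tH_\UV}$ is selfadjoint.

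I would take as core $\mc D$ the algebraic span of vectors $\Psi(x)=\chi(x)\phi$ with $\chi\in C_c^\infty(\RR^{2N})$ and $\phi$ a finite-particle Fock vector whose modes lie in a dense subset of $\dom(\omega)$; combining standard form-perturbation theory (using infinitesimal $\Id\Gamma(\omega)$-boundedness of $\sum_j\vp(\eul^{-\ii K\cdot x_j}v_\UV)$ uniform in $x$) with a bounded-truncation approximation in $V$, one checks that $\mc D$ is an operator core for $H_\UV$. For $\Psi\in\mc D$ the starting point is the decomposition
\begin{align*}
(T_{\UV,t}\Psi)(x)-\Psi(x)=\EE\big[W_{\UV,t}(x)^*(\Psi(X_t^x)-\Psi(x))\big]+\EE\big[(W_{\UV,t}(x)^*-\id)\Psi(x)\big],
\end{align*}
whose two summands I would analyze separately after dividing by $t$ and passing to the $L^2$-limit as $t\downarrow 0$.

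The first summand's main contribution $t^{-1}(\EE[\Psi(X_t^\cdot)]-\Psi)\to -\sum_j\psi(-\ii\nabla_{x_j})\Psi$ in $L^2$ is the action of the free L\'{e}vy generator, with the $W_{\UV,t}^*-\id$ correction vanishing by combining \cref{Dbei0} with $L^2$-continuity of $\Psi(X_t^\cdot)-\Psi$. For the second summand, \cref{lemODEW1} applied pathwise with $\phi_1=\Psi(x)$, together with selfadjointness of $B_\UV(y)\coloneq \Id\Gamma(\omega)+\sum_j\vp(\eul^{-\ii K\cdot y_j}v_\UV)+V(y)-NE_\UV^{\ren}$ on $\dom(\Id\Gamma(\omega))$, gives
\begin{align*}
\tfrac1t\EE\big[(W_{\UV,t}(x)^*-\id)\Psi(x)\big]=-\tfrac1t\int_0^t\EE\big[W_{\UV,s}(x)^*B_\UV(X_s^x)\Psi(x)\big]\Id s.
\end{align*}
At $s=0$ the integrand equals $B_\UV(x)\Psi(x)=(H_\UV-\sum_j\psi(-\ii\nabla_{x_j}))\Psi(x)\in L^2(\RR^{2N},\Fock)$; the time-averaged $L^2$-limit then follows by handling the three pieces of $B_\UV$ separately, using \cref{Dbei0} for the $\Id\Gamma(\omega)\Psi(x)$ and constant $NE_\UV^{\ren}\Psi(x)$ pieces, relative $\Id\Gamma(\omega)^{1/2}$-bounds on $\vp$ combined with stochastic continuity of $X_s$ at $0$ for the field piece, and the Kato-class defining property \cref{def:Kato} together with the uniform moment bound \cref{momentbdW} for the potential piece.

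Summing both contributions yields $t^{-1}(T_{\UV,t}\Psi-\Psi)\to -H_\UV\Psi$ in $L^2$, so $\mc D\subset\dom(A_\UV)$ and $A_\UV\restr\mc D=-H_\UV\restr\mc D$, and the range condition above forces $A_\UV=-H_\UV$, completing the proof. The main technical obstacle is the potential term: since $V$ is only Kato decomposable, pointwise limits of $V(X_s^x)$ as $s\downarrow 0$ are unavailable, and the required $L^2$-convergence must be extracted from a combination of the Kato class defining property, the uniform moment bounds on $W_{\UV,s}$ from \cref{momentbdW}, and a bounded-truncation approximation argument in $V$ to reduce to the case where all local manipulations are routine.
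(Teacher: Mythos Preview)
Your overall strategy---identify the generator on a core and close up via a resolvent argument---matches the paper's, but the execution differs in two respects, one of which leaves a genuine gap.

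\textbf{Different decomposition.} The paper does not split $(T_{\UV,t}\Psi)(x)-\Psi(x)$ into a ``particle moves'' plus ``Fock part moves'' sum. Instead, for bounded $V$, it applies It\^{o}'s product formula to the pair consisting of $f(X_t^x)$ (via the L\'{e}vy--It\^{o} formula) and $\langle\phi_1|W_{\UV,t}[x,X]\phi_2\rangle$ (via \cref{lemODEW1}). After taking expectations the martingale term drops out and one directly obtains the integrated generator identity $T_{\UV,t}f\phi-f\phi=-\int_0^t T_{\UV,s}(H_\UV f\phi)\,\Id s$, from which $f\phi\in\dom(A_\UV)$ with $A_\UV f\phi=-H_\UV f\phi$ follows immediately. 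Your two-term split can be made to work for bounded $V$, but it requires separately controlling cross-terms like $t^{-1}\EE[(W_{\UV,t}(x)^*-\id)(\Psi(X_t^x)-\Psi(x))]$ and the $s$-dependence of $\vp(\eul^{-\ii K\cdot X_{j,s}^x}v_\UV)$ in the second summand; the It\^{o} product formula packages all of this into a single martingale that vanishes upon taking expectations.

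\textbf{The gap.} Your assertion that $\mc D$ is an operator core for $H_\UV$ for general Kato-decomposable $V$ is not justified and is in general false: Kato decomposability gives only local integrability of $V$, so $V\chi$ need not lie in $L^2(\RR^{2N})$ for $\chi\in C_c^\infty$, and then $\chi\phi\notin\dom(H_\UV)$. Consequently the limit $t^{-1}(T_{\UV,t}\Psi-\Psi)$ cannot exist in $L^2$ on your proposed core. The paper avoids this by proving the identity $T_{\UV,t}=\eul^{-tH_\UV}$ first for bounded $V$ (where \cref{lemesaNC} supplies the core), and then passing to general $V$ by approximating the \emph{semigroups} rather than the generator: truncating $V$ from above and below, one uses monotone convergence theorems for quadratic forms to get strong resolvent convergence of $H_\UV^n\to H_\UV$, and dominated convergence together with \cref{prop:LpLq} for the probabilistic side. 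Your final paragraph gestures at a truncation argument, but it cannot be run at the level of differentiating the semigroup on $\mc D$; it has to be run at the level of the already-established semigroup identity.
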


\begin{proof}
To start with we assume that $V$ is bounded. By \cref{prop:TUVonL2} and the Hille--Yosida theorem,
the semigroup $(T_{\UV,t})_{t\ge0}$ has a closed generator, call it $G$, whose spectrum
is contained in $\{\zeta\in\CC|\,\Re[\zeta]\ge -c\}$ for some $c\ge0$.
We shall show that $H_{\UV}=G$, which is equivalent to $T_{\UV,t}=\eul^{-tH_{\UV}}$, $t\ge0$.

Let $\phi_1\in\dom(\Id\Gamma(\omega))$ and $f\in C_0^\infty(\RR^{2N})$. 
Since $V$ is bounded, we can infer from \cref{def:HUV,psiXFourier,defhUVx} 
that (a representative of) $H_{\UV}f\phi_1$ can be written as
\begin{align}\label{forHUVfphi}
(H_{\UV}f\phi_1)(x)&\coloneq (\psi_X(-\ii\nabla)f)(x)\phi_1+ f(x)h_{\UV}(x)\phi_1,\quad x\in\RR^{2N}.
\end{align}
Let also $\phi_2\in\Fock$ and fix $x\in\RR^{2N}$ for the moment. 
Thanks to \cref{lemODEW1} we know that all paths of the process given by
$A_t\coloneq\langle\phi_1|W_{\UV,t}(x)\phi_2\rangle_{\Fock}$
are absolutely continuous on every compact subinterval of $[0,\infty)$.
Employing \cref{normFt,tnormU} (with $\ve=0$) and a trivial
estimation of $\tilde{u}_{\UV,t}^N(x)$ (see \cref{trivialbdu}) we find some $b\in(0,\infty)$
(depending on $\UV$ and $\|V\|_\infty$ besides other model parameters) such that
$|A_t|\le \eul^{b(1+t)}\|\phi_1\|_{\Fock}\|\phi_2\|_{\Fock}$, $t\ge0$. 
Therefore, the It\^{o} formula \cref{ItoAfX} is available, 
and in conjunction with the substitution rule \cref{eq:substWUV} it $\PP$-a.s. yields
\begin{align}\nonumber
&\langle\phi_1|W_{\UV,t}(x)\phi_2\rangle_{\Fock}\ol{f}(X_t^x)-\langle\phi_1|\phi_2\rangle_{\Fock} \ol{f}(x)
\\\nonumber
&=-\int_0^t\langle\phi_1|W_{\UV,s}(x)\phi_2\rangle_{\Fock}(\psi_X(-\ii\nabla)\ol{f})(X_s^x)\Id s
\\\nonumber
&\quad-\int_0^t\ol{f}(X_{s-}^x)\langle h_{\UV}(X_s^x)\phi_1|W_{\UV,s}(x)\phi_2\rangle_{\Fock}\Id s
\\\nonumber
&\quad-\int_{(0,t]\times\RR^{2N}}\langle 
\phi_1| W_{\UV,s}(x)\phi_2\rangle_{\Fock}(\ol{f}(X_{s-}^x+z)-\ol{f}(X_{s-}^x))\Id \wt{N}_X(s,z)
\\\label{eq:ItoAWUV}
&=-\int_0^t\langle W_{\UV,s}(x)^*(H_{\UV}f\phi_1)(X_s^x)|\phi_2\rangle_{\Fock}\Id s
-\text{[a martingale starting at $0$]}_t,
\end{align}
for all $t\ge0$. Here we used \cref{forHUVfphi} in the second step
as well as that the term in the fourth line defines a martingale, by \cref{lemODEW1}, see the discussion before \cref{ItoIso}.
Upon taking expectations in \cref{eq:ItoAWUV}, the martingale drops out and we find
\begin{align*}
\langle (T_{\UV,t}f\phi_1)(x)|\phi_2\rangle_{\Fock} -\langle f(x)\phi_1|\phi_2\rangle_{\Fock}
&=-\int_0^t\langle (T_{\UV,s}H_{\UV}f\phi_1)(x)|\phi_2\rangle_{\Fock}\Id s,\quad t\ge0.
\end{align*}
After choosing $\phi_2=\Phi(x)$ for every $x$ and some $\Phi\in L^2(\RR^{2N},\Fock)$
and integrating with respect to $x$, 
we can use Fubini's theorem to interchange the order of the $\Id x$- and $\Id s$-integration.
Since $s\mapsto T_{\UV,s}H_{\UV}f\phi_1\in L^2(\RR^{2N},\Fock)$ is continuous and in particular
Bochner-Lebesgue integrable over $[0,t]$, this reveals that
\begin{align}\label{HgenT1}
T_{\UV,t}f\phi_1-f\phi_1&=-\int_0^t T_{\UV,s}H_\UV f\phi_1\Id s,\quad t\ge0.
\end{align}
Again by strong continuity of the semigroup, $T_{\UV,s}H_\UV f\phi_1\to H_\UV f\phi_1$, $s\downarrow0$,
 in $L^2(\RR^{2N},\Fock)$, whence \eqref{HgenT1} implies
 \begin{align*}
\lim_{t\downarrow0} \frac{1}{t}(T_{\UV,t}f\phi_1-f\phi_1)=-H_\UV f\phi_1\quad\text{in $L^2(\RR^{2N},\Fock)$.}
 \end{align*}
This is equivalent to saying that $f\phi_1$ lies in the domain of $G$ with $Gf\phi_1=H_{\UV}f\phi_1$. 
Again since $\UV$ is finite and $V$ bounded, we know that the linear hull of all vectors of the form
$f\phi_1$ with $f$ and $\phi_1$ as above is a core for $H_{\UV}$; see, e.g., \cref{lemesaNC} below. Since $G$ is closed,
we conclude that $H_\UV\subset G$. As noted above, there is a uniform lower bound on the real parts
of all points in the spectrum of $G$. Since $H_{\UV}$ is
lower semibounded, we clearly find some $\zeta\in\CC$ belonging to the resolvent sets of both $G$ and $H_{\UV}$.
Then $H_\UV\subset G$ and the second resolvent identity imply $(G-\zeta)^{-1}=(H_{\UV}-\zeta)^{-1}$, thus $G=H_{\UV}$.

To generalize this result we follow a standard approximation procedure: 
Let $V$ be Kato decomposable and bounded from below. 
Set $V_n \coloneq n\wedge V$ and denote by $T^{n}_{\UV,t}$, $H^{n}_\UV$ and $\mathfrak{h}_{\UV}^n$ 
the corresponding semigroup members, Hamiltonian and quadratic form 
for $n\in\NN$. Let $\Phi\in L^2(\RR^{2N},\Fock)$. By the above result, $\eul^{-tH_{\UV}^n}\Phi=T^n_{\UV,t}\Phi$ 
for all $n\in\NN$.
By dominated convergence and \cref{prop:TUVonL2}, $(T^{n}_{\UV,t}\Phi)(x)$ converges to $(T_{\UV,t}\Phi)(x)$ 
as $n\to\infty$ for all $x\in\RR^{2N}$. Further, $\mathfrak{h}_{\UV}^n\uparrow\mathfrak{h}_{\UV}$ 
on $\dom(\mathfrak{h}_{\UV})$ as $n\to\infty$, which by a convergence theorem for an increasing sequence of
quadratic forms entails strong resolvent convergence of $H^n_{\UV}$ to $H_{\UV}$; see, e.g.,
\cite[Theorems~VIII.20(b) and S.14]{ReedSimon.1980}.
Thus, $\eul^{-tH^{n_\ell}_\UV}\Phi\to \eul^{-tH_\UV}\Phi$, $\ell\to\infty$, a.e. for some strictly increasing sequence 
$(n_\ell)_{\ell\in\NN}$ in $\NN$. This proves the theorem when $V$ is bounded from below.

For general Kato decomposable $V$, we consider $(-n)\vee V$ and argue analogously,
employing a convergence theorem for a decreasing sequence of quadratic forms; see, e.g.,
\cite[Theorems~VIII.20(b) and S.16]{ReedSimon.1980}. (When applying Theorem~S.16 in \cite{ReedSimon.1980}
we take into account that the quadratic forms corresponding to $V$ and to every $(-n)\vee V$ with $n\in\NN$ 
all have the same domain, namely $\fdom(H_{\UV})$ described prior to \cref{def:HUV}.)
\end{proof}


\subsection{Feynman--Kac formula for the renormalized Hamiltonian}\label{sec:FK.ren}

\noindent
After the next corollary we can finally prove the Feynman--Kac formula for $H$.

\begin{cor}\label{cor:semigroup-ren}
When its members are considered as operators on $L^2(\RR^{2N},\Fock)$, 
the semigroup $(T_{\infty,t})_{t\ge0}$ is strongly continuous and every $T_{\infty,t}$ is bounded and selfadjoint 
with $\|T_{\infty,t}\|\le \eul^{c(1+t)}$, $t\ge0$, for some $c>0$.
Furthermore, $H_{\UV}$ converges in norm resolvent sense to the selfadjoint, lower semibounded generator
of $(T_{\infty,t})_{t\ge0}$, as $\UV\to\infty$.	
\end{cor}

\begin{proof}
Apart from the claimed selfadjointness, the first statement just repeats parts of \cref{prop:TUVonL2}.
Given $t\ge0$, selfadjointness of $T_{\infty,t}$ follows, however, from the operator norm convergence
$T_{\UV,t}\to T_{\infty,t}$, $\UV\to\infty$, asserted in \cref{prop:TUVonL2} and the selfadjointness of 
each $T_{\UV,t}=\eul^{-tH_{\UV}}$, $\UV\in(0,\infty)$, established in \cref{thm:FK-UV}.
This implies existence of a selfadjoint, lower semibounded generator of $(T_{\infty,t})_{t\ge0}$.
The equivalence of the operator norm convergence of all semigroup members $T_{\UV,t}\to T_{\infty,t}$
and  norm resolvent convergence of the selfadjoint generators is well-known
and follows directly from the representation of the 
resolvents of the generator as Bochner-Lebesgue integrals over semigroup elements
$(H_{\UV}+\lambda)^{-1} = \int_0^\infty \eul^{-t(H_{\UV}+\lambda)}\Id t$, which holds for any $\lambda< c$ with $c$ as in the first statement.
\end{proof}
\Cref{cor:semigroup-ren} especially provides the
\begin{proof}[\textbf{Proof of the Feynman--Kac formula \cref{eq:FK}}]
\cref{cor:semigroup-ren} extends the known existence results \cite{Sloan.1974,Schmidt.2019}
for the norm resolvent limit $H\coloneq\lim_{\UV\to\infty} H_{\UV}$ to general
Kato decomposable $V$. Furthermore, the corollary ensures that $\eul^{-tH}=T_{\infty,t}$, $t\ge0$, 
which is equivalent to saying that \cref{eq:FK} holds for a.e. $x\in\RR^{2N}$, 
given any $t>0$ and $\Psi\in L^2(\RR^{2N},\Fock)$. 
\end{proof}


\section{Integral equation for UV cutoff Feynman--Kac integrands}\label{sec:WUVinteq}

\noindent
In this section we only consider finite $\UV$. Our aim is to derive a pathwise integral equation for matrix elements of 
$W_{\UV,t}(x)$ as defined in \cref{defWUVtx0}, that will complete the proof of \cref{propODEWUV}.
To this end we need the integral equation for $U^{N,+}_{\UV,t}(x)$ observed in the next lemma;
recall the definitions \cref{defUplusminus,def:UtN,eq:defUpmUV}.

\begin{lem}\label{lemIBPUplus}
Let $\UV\in(0,\infty)$ and $x\in\RR^{2N}$. Abbreviate
\begin{align}\label{def:vUVs}
v_{\UV,s}^N(x)\coloneq\sum_{j=1}^N\eul^{-\ii K\cdot X^x_{j,s}}v_\UV,\quad s\ge0.
\end{align}
Then all paths of $(U^{N,+}_{\UV,t}(x))_{t\ge0}$ are absolutely continuous on every compact subinterval of $[0,\infty)$ and
\begin{align}\label{IBPUplus}
U_{\UV,t}^{N,+}(x)
&=\int_0^t(v_{\UV,s}^N(x)-\omega U_{\UV,s}^{N,+}(x))\Id s,\quad t\ge0.
\end{align}
\end{lem}

\begin{proof}
Let $\gamma:[0,\infty)\to\RR^2$ be c\`{a}dl\`{a}g and $g\in L^2(\RR^2)$. 
In view of \cref{def:UtN} it suffices to show the second relation in
\begin{align}\label{inteqUgamma}
\theta(t)&\coloneq\langle g|\chi_{B_{\UV}}U_{t}^+[\gamma]\rangle
=\int_0^t\langle g|\eul^{-\ii K\cdot\gamma_s}v_{\UV}-\omega \chi_{B_{\UV}} U_{s}^{+}[\gamma]\rangle\Id s,
\end{align}
for all $t\ge0$. Let $n\in\NN$ and $0\le t_0<t_1<\dots<t_n\le t$. Employing \cref{defUplusminus} and 
straightforward estimations, we then find
\begin{align*}
\sum_{i=1}^n|\theta(t_i)-\theta(t_{i-1})|&\le \|g\|\|v_{\UV}\|(1+t\|\omega\chi_{B_\UV}\|_\infty)(t_n-t_0).
\end{align*}
Hence, $\theta$ is absolutely continuous on every compact subinterval of $[0,\infty)$.
Furthermore, $\theta'(t)=\langle g|\eul^{-\ii K\cdot\gamma_t}v_{\UV}-\omega \chi_{B_{\UV}} U_{t}^{+}[\gamma]\rangle$,
whenever $\gamma$ is continuous at $t>0$. Since the set of discontinuities of $\gamma$ is countable, the second 
relation in \cref{inteqUgamma} now follows from the fundamental theorem of calculus for the Lebesgue integral.
\end{proof}

When dealing with $W_{\UV,t}(x)$,
it will be convenient to do the computations for exponential vectors, i.e., Fock space vectors of the form
\begin{align}\label{def:expvec}
	\expv{h} \coloneq (1,h,\ldots,(n!)^{-1/2}h^{\otimes n},\ldots\:)\in \Fock, \quad h\in L^2(\RR^2),
\end{align}
where $h^{\otimes n}(k_1,\ldots,k_n) \coloneq h(k_1)\cdots h(k_n)$, 
a.e. $(k_1,\ldots,k_n)\in\RR^{2N}$ with $k_j\in\RR^2$.
As explained in \cite[Remark~5.2]{MatteMoller.2018},
\begin{align}\label{eq:Fstarexpvec}
F_{t}(g)\expv{h}&=\expv{g+\eul^{-t\omega}h},\quad
F_{t}(g)^*\expv{h}=\eul^{\langle g|h\rangle}\expv{\eul^{-t\omega}h},
\end{align}
for all $t>0$ and $g,h\in L^2(\RR^2)$. Hence,
the action of $W_{\UV,t}(x)$ with $\UV\in(0,\infty)$ on an exponential vector reads
\begin{align}\label{Wexpvec}
W_{\UV,t}(x)\expv{h}
&=\eul^{\tilde{u}_{\UV,t}^N(x)-\int_0^tV(X^x_s)\Id s-\langle U^{N,-}_{\UV,t}(x)|h\rangle}
\expv{\eul^{-t\omega}h-U_{\UV,t}^{N,+}(x)},
\end{align}
for all $t\ge0$ and $x\in\RR^{2N}$. 

\begin{lem}\label{lemODEW1}
Assume that $V$ is bounded and 
let $\UV\in(0,\infty)$, $x\in\RR^{2N}$, $\phi_1\in\dom(\Id\Gamma(\omega))$ and $\phi_2\in\Fock$. Then
\begin{align}\label{eq:ODEW1}
		\langle\phi_1|W_{\UV,t}(x)\phi_2\rangle_{\Fock}-\langle\phi_1|\phi_2\rangle_{\Fock}
		&=-\int_0^t\langle h_{\UV}(X^x_s)\phi_1|
		W_{\UV,s}(x)\phi_2\rangle_{\Fock}\Id s,\quad t\ge0.
\end{align}
\end{lem}

\begin{proof}
To start with recall that the linear hull of all exponential vectors $\expv{h}$ with $h\in\dom(\omega)$
is a core for $\Id\Gamma(\omega)$. We also recall that $\vp(\eul^{-\ii K\cdot y}v_{\UV})$ is relatively
$\Id\Gamma(\omega)$-bounded uniformly in $y\in\RR^2$. 
Finally, $\|U_t^{N,\pm}(x)\|_{t/2}^2\le 6\pi g^2N^2/m_{\bos}$, $t>0$, according to \cref{tnormU},
which together with \cref{normFt} implies that $\sup_{s\in[0,t]}\|W_{\UV,s}(x)\|$ is finite
for every $t\ge0$ and pointwise on $\Omega$. Thus,
by sesquilinearity and dominated convergence, it suffices to prove the assertion for 
$\phi_i=\expv{h_i}$, $i\in\{1,2\}$, with $h_1,h_2\in\dom(\omega)$.
Employing \cref{Wexpvec} and using that $\langle \expv{g}|\expv{h}\rangle_{\Fock} = \eul^{\langle g | h\rangle}$,
for all $g,h\in L^2(\RR^2)$,
we find $\langle\expv{h_1}|W_{\UV,t}(x)\expv{h_2}\rangle_{\Fock}=\eul^{\Theta(t)}$, $t\ge0$, with
\begin{align*}
\Theta(t)&\coloneq 
\tilde{u}^N_{\UV,t}(x) - \int_0^t V(X_s^x)\Id s
-\langle U^{N,-}_{\UV,t}(x)|h_2\rangle
-\langle h_1|U_{\UV,t}^{N,+}(x)\rangle+\langle h_1|\eul^{-t\omega}h_2\rangle
\\
&=\int_0^t\vt(s)\Id s+\langle h_1|h_2\rangle,
\end{align*}
where, in view of \cref{IBPUplus,defUplusminus,defn:complexphase0UV},
\begin{align}\nonumber
	\vt(s)&\coloneq
	\sum_{j=1}^N\langle U_{\UV,s}^{N,+}(x)|
	\eul^{-\ii K\cdot X_{j,s}^x}v_{\UV}\rangle -NE_{\UV}^{\ren}-V(X_s^x)
	\\\nonumber
	&\quad -\sum_{j=1}^N\langle\eul^{-s\omega-\ii K\cdot X_{j,s}^x}v_{\UV}|h_2\rangle  
	-\sum_{j=1}^N\langle h_1|\eul^{-\ii K\cdot X_{j,s}^x}v_{\UV}\rangle
	\\&\quad\label{ODEW2}
	+\langle h_1|\omega(U^{N,+}_{\UV,s}(x)-\eul^{-s\omega}h_2)\rangle.
\end{align}
Since $v(k)=v(-k) = \overline{v(-k)}$, the scalar product in the first expression is real:
\begin{align}\label{ODEW3}
\langle U_{\UV,s}^{N,+}(x)|\eul^{-\ii K\cdot X_{j,s}^x}v_{\UV}\rangle
&=\langle \eul^{-\ii K\cdot X_{j,s}^x}v_{\UV}|U_{\UV,s}^{N,+}(x)\rangle.
\end{align}
Further, since $\vt:[0,\infty)\to\RR$ is locally integrable, $\Theta$ is absolutely continuous on every compact interval
in $[0,\infty)$, and
\begin{align}\label{ODEW4}
\eul^{\Theta(t)}-\eul^{\Theta(0)}
&=\int_0^t\vt(s)\eul^{\Theta(s)}\Id s,\quad t\ge0.
\end{align}
On the other hand, using
\begin{align*}
	&a(g)\expv h = \langle g|h\rangle\expv h, \quad g,h \in L^2(\RR^2),\\
	& \langle \expv g|\Id\Gamma(\omega) \expv h\rangle_{\Fock} 
	= \langle g|\omega h\rangle \eul^{\langle g|h\rangle},\quad g\in L^2(\RR^2),\ h\in \dom(\omega),
\end{align*}
 and taking \cref{Wexpvec} into account, we find
\begin{align*}
\langle \eul^{-\ii K\cdot X_{j,s}^x}v_{\UV}|U_{\UV,s}^{N,+}(x)-\eul^{-s\omega}h_2\rangle\eul^{\Theta(s)} 
& =-\langle\expv{h_1}|a(\eul^{-\ii K\cdot X_{j,s}^x}v_{\UV})W_{\UV,s}(x)\expv{h_2}\rangle_{\Fock},
\\
\langle h_1|\omega(U^{N,+}_{\UV,s}(x)-\eul^{-s\omega}h_2)\rangle\eul^{\Theta(s)}
&=-\langle\expv{h_1}|\Id\Gamma(\omega)W_{\UV,s}(x)\expv{h_2}\rangle_{\Fock},
\\
\langle h_1|\eul^{-\ii K\cdot X_{j,s}^x}v_{\UV}\rangle\eul^{\Theta(s)}
&=\langle\expv{h_1}|\ad(\eul^{-\ii K\cdot X_{j,s}^x}v_{\UV})W_{\UV,s}(x)\expv{h_2}\rangle_{\Fock}.
\end{align*}
Combining the previous three relations with \eqref{ODEW2} and \eqref{ODEW3}, we see that \eqref{ODEW4}
is equivalent to the statement with $\phi_i=\expv{h_i}$.
\end{proof}


\section{Main contributions to the complex action}\label{sec:basicproc}

\noindent
With this \lcnamecref{sec:basicproc} we start our investigation of the Feynman--Kac semigroups,
that eventually will result in a proof of \cref{prop:TUVonL2} and generalizations thereof in \cref{sec:FKInt}.
As mentioned earlier, a crucial step will be to derive a more regular expression for the complex action $\tilde{u}_{\UV,t}^N(x)$
permitting to drop the ultraviolet cutoff. Before we do so in the succeeding \cref{sec:complex}, we shall
study the two main contributions to the new expression for the complex action in this \lcnamecref{sec:basicproc},
namely a martingale contribution and the effective interaction potential mediated by the radiation field.

For technical purposes, we introduce versions of these with infrared and ultraviolet cutoffs, 
usually denoted by $\sigma$ and $\UV$, respectively. The purpose of $\sigma$ is twofold: 
Firstly, when studying convergence properties as the ultraviolet cutoff goes to infinity, 
we encounter difference terms with an effective, large infrared cutoff $\sigma$ and with $\UV=\infty$.
Secondly, to obtain good exponential moment bounds on the complex action, we have to split up some
of its contributions into two parts, comprising boson momenta $|k|<\sigma$ and $\sigma\le|k|<\UV$,
respectively, and treat both parts differently.

\subsection{The martingale contribution to the complex action}\label{subsec:m}

We first want to discuss the martingale introduced in \cref{def:martingale} and certain infrared and/or
ultraviolet cutoff versions of it. These are stochastic
integrals with respect to $\wt{N}_X$ as described in \cref{ssecPPP}.
The integrands we are interested in are given by
\begin{align*}
h_{\sigma,\UV}(x,s,z)&\coloneq \sum_{\ell=1}^N\langle U^{N,+}_{\sigma,\UV,s}(x)|
\eul^{-\ii K\cdot X_{\ell,s-}^x}(\eul^{-\ii K\cdot z_\ell}-1)\beta\rangle,
\quad s\ge0,\,x,z\in\RR^{2N},
\end{align*}
where $0\le\sigma <\UV\le\infty$ and
\begin{align}\label{def:Ucutoff}
U_{\sigma,\UV,s}^{N,\pm}(x)&\coloneq \chi_{B_\UV\setminus B_\sigma}U_{s}^{N,\pm}(x)
=\chi_{B_\sigma^c}U_{\UV,s}^{N,\pm}(x).
\end{align}
As a consequence of \cref{remUpmcont,lem:Upmadapted}, $(U^{N,+}_{\sigma,\UV,s}(x))_{s\ge0}$ is adapted and continuous. 
Thus, we can read off from the above formula that all paths $s\mapsto h_{\sigma,\UV}(x,s,z)$ are left-continuous and,
for fixed $s\ge0$, the map $(z,\omega)\mapsto (h_{\sigma,\UV}(x,s,z))(\omega)$ is 
$\fr{B}(\RR^{2N})\otimes\fr{F}_t$-measurable.
In particular, $h_{\sigma,\UV}(x,\cdot,\cdot)$ is predictable.
Recall that the space of integrands $\scr{H}_2$ has been introduced prior to \cref{ItoIso}.

\begin{lem}\label{lemexpmomentm}
Let $0\le\sigma<\UV\le\infty$, $x\in\RR^{2N}$ and consider
\begin{align}\label{mstochint}
m_{\sigma,\UV,t}^N(x)&\coloneq \int_{(0,t]\times\RR^{2N}}h_{\sigma,\UV}(x,s,z)\Id\wt{N}_X(s,z),\quad t\ge0.
\end{align}
Then the following holds:
\begin{enumerate}
\item[{\rm(i)}] $h_{\sigma,\UV}\in\scr{H}_2$ and in particular the integrals in \eqref{mstochint}
are well-defined isometric stochastic integrals. 
Up to indistinguishability they define a unique c\`{a}dl\`{a}g square-integrable martingale
$m_{\sigma,\UV}^N(x)=(m_{\sigma,\UV,t}^N(x))_{t\ge0}$. 
\item[{\rm(ii)}]
For every $p\ge2$ there exists $c_p>0$, depending solely on $p$, such that
\begin{align*}
\EE\bigg[\sup_{s\in[0,t]}|m_{\sigma,\UV,s}^N(x)|^p\bigg]
&\le c_pg^{2p}\bigg(\frac{N^{3p/2}t^{p/2}}{\omega(\sigma)^{p/2}}
+\frac{N^{p+1}t}{\omega(\sigma)^{p-1}}\bigg),\quad t\ge0.
\end{align*}
\item[{\rm(iii)}]
There exists a universal constant $c>0$ such that
\begin{align*}
\EE\bigg[\sup_{s\in[0,t]}e^{\pm pm_{\sigma,\UV,s}^N(x)}\bigg]
&\le \frac{\alpha}{\alpha-1}\exp\left(\frac{c\alpha p^2g^4N^3t}{\omega(\sigma)}\cdot
\eul^{4\pi\alpha pg^2N/\omega(\sigma)}\right),
\end{align*}
for all $\alpha>1$, $p>0$ and $t\ge0$.
\end{enumerate}
\end{lem}

\begin{proof}
We split the proof into four steps.

\smallskip\noindent
{\em Step~1: Bounds on $h_{\sigma,\UV}$.}
Since $U_s^+[X_{j,\bullet}]$ is,  pointwise on $\Omega$, given by a Bochner-Lebesgue integral in $L^2(\RR^2)$,
we can insert the definitions \cref{def:UtN,def:Ucutoff} into the formula for $h_{\sigma,\UV}$
and switch the order of integration and taking scalar products. Applying Fubini's theorem afterwards we thus get
\begin{align}\label{eq:hstep1}
	\begin{aligned}
		&h_{\sigma,\UV}(x,s,z)\\&
		= g^2\sum_{j,\ell=1}^N\int_{B_{\UV}\setminus B_\sigma}
		\int_0^s\frac{\eul^{-(s-r)\omega(k)+\ii k\cdot (X^x_{j,r}-X^x_{\ell,s-})}
			(\eul^{-\ii k\cdot z_\ell}-1)}{\omega(k)(\omega(k)+\psi(k))}\Id r\,\Id k,
	\end{aligned}
\end{align}
for all $s\ge0$ and $z\in\RR^{2N}$. Under the double integral we can apply the bounds 
$|\eul^{-\ii k\cdot z_\ell}-1|\le 2^{1-a_\ell}|k|^{a_\ell}|z|^{a_\ell}$ as long as $a_\ell\in[0,1)$ to retain integrability.
Thus, we find, for all $s\ge0$, $z\in\RR^{2N}$ and $a_1,\ldots,a_N\in[0,1)$,
\begin{align}\nonumber
|h_{\sigma,\UV}(x,s,z)|&\le 2g^2N\sum_{\ell=1}^N|z_\ell|^{a_\ell}
\int_{B_\sigma^c}\int_0^s\frac{\eul^{-r\omega(k)}|k|^{a_\ell}}{\omega(k)(\omega(k)+\psi(k))}\Id r\,\Id k
\\\nonumber
&\le 2g^2N\sum_{\ell=1}^N|z_\ell|^{a_\ell}\int_{B_\sigma^c}\frac{|k|^{a_\ell}}{\omega(k)^3}\Id k
\\\label{mona0}
&\le 4\pi g^2N\sum_{\ell=1}^N\frac{|z_\ell|^{a_\ell}}{1-a_\ell}\cdot\frac{1}{\omega(\sigma)^{1-a_\ell}},
\end{align}
where we used $|k|^{a_\ell}\le \omega(k)^{a_\ell}$ in the last step.
Given a vector $z\in\RR^{2N}$ and $\ell\in\{1,\ldots,N\}$ such that $z_j=0$ for all $j\in\{1,\ldots,N\}\setminus\{\ell\}$, the sum in \cref{mona0} reduces to one summand when we pick $a_\ell = 0$ and $a_j\in(0,1)$ for $j\not=\ell$. 
Since, by \cref{eq:nuX}, 
at most one two-dimensional component of $z\in\RR^{2N}$ is non-zero for $\nu_X$-almost every $z\in\RR^{2N}$,
this implies
\begin{align}\label{mona1}
\sup_{s\ge0}
|h_{\sigma,\UV}(x,s,z)|&\le\frac{4\pi g^2N}{\omega(\sigma)},\quad
\text{$\nu_X$-a.e. $z\in\RR^{2N}$.}
\end{align}
Given $a\in(0,1)$, we apply \eqref{mona1} whenever $|z_\ell|\omega(\sigma)\ge(1-a)^{1/a}$
for some $\ell\in\{1,\ldots,N\}$ and choose $a_1=\ldots=a_N=a$ in \eqref{mona0} otherwise. 
For any $p>0$ and $\nu_X$-almost every $z\in\RR^{2N}$, this yields
\begin{align}\label{mona2}
	\sup_{s\ge0}|h_{\sigma,\UV}(x,s,z)|^p
	\le \frac{(4\pi g^2N)^p}{\omega(\sigma)^{p}}
	\sum_{\ell=1}^N1\wedge\left(\frac{\omega(\sigma)^{pa}}{(1-a)^{p}}|z_\ell|^{pa}\right).
\end{align}
The Bessel function asymptotics \crefnosort{bdBesselK,def:nu} imply the estimate
\begin{align}\nonumber
&\int_{\RR^2}1\wedge\left(\frac{\omega(\sigma)^{pa}}{(1-a-\ve)^{p}}|z_\ell|^{pa}\right)\Id\nu(z_\ell)
\\&\label{mona17}
\le \int_{\RR^2}\bigg\{1\wedge\left(\frac{\omega(\sigma)^{pa}}{(1-a-\ve)^p}|z_\ell|^{pa}\right)\bigg\}
\frac{c}{|z_\ell|^3}\Id z_\ell
=\frac{2\pi cpa}{pa-1}\cdot\frac{\omega(\sigma)}{(1-a-\ve)^{1/a}},
\end{align}
which applies to all particle masses $m_{\p}\ge0$ and $p,a>0$, $\ve\ge0$ with $pa>1$ and $a+\ve<1$. 
(The parameter $\ve$ is needed only in the proof of \cref{lem:canm}.)
Choosing $\ve=0$ and $a=2/3$, say, we arrive at
\begin{align}\label{mona3}
\int_{0}^t\int_{\RR^{2N}}|h_{\sigma,\UV}(x,s,z)|^p\Id\nu_X(z)\,\Id s&\le 
\frac{c_pg^{2p}N^{p+1}t}{\omega(\sigma)^{p-1}},\quad t\ge0,\,p\ge2.
\end{align}

\smallskip\noindent
{\em Step~2: Proof of {\rm(i)}.} 
By \cref{mona3}, $h_{\sigma,\UV}\in\scr{H}_2$, whence
 \eqref{mstochint} are well-defined isometric stochastic integrals and the corresponding
integral process is a square-integrable martingale. 
Since we are working under the usual hypotheses, the latter has a c\`{a}dl\`{a}g modification that is
unique up to indistinguishability, cf. \cite[Theorem~2.1.7]{Applebaum.2009}.

\smallskip

\noindent
{\em Step~3: Proof of  {\rm(ii)}.}
Let $p\ge2$. According to Kunita's inequality (see, e.g., \cite[Theorem~4.4.23]{Applebaum.2009}),
there is a solely $p$-dependent $c_p>0$ such that
\begin{align*}
\EE\bigg[\sup_{s\in[0,t]}|m_{\sigma,\UV,s}^N(x)|^p\bigg]
\le\ & c_p\EE\bigg[\bigg(\int_0^t\int_{\RR^{2N}}|h_{\sigma,\UV}(x,s,z)|^2\Id\nu_X(z)\,\Id s\bigg)^{p/2}\bigg]
\\
&\quad+\EE\bigg[\int_0^t\int_{\RR^{2N}}|h_{\sigma,\UV}(x,s,z)|^p\Id\nu_X(z)\,\Id s\bigg],
\quad t\ge0.
\end{align*}
Inserting  \eqref{mona3}, we find the bound asserted in (ii).

\smallskip
\noindent
{\em Step~4: Proof of {\rm(iii)}.}
Let $\alpha>1$ and $p>0$. Then an elementary manipulation of the exponential series yields
\begin{align}\label{Taylorexph}
\frac{1}{\alpha}|\eul^{\pm\alpha ph_{\sigma,\UV}(x,s,z)}-1\mp\alpha ph_{\sigma,\UV}(x,s,z)|
&\le \frac{\alpha p^2}{2}|h_{\sigma,\UV}(x,s,z)|^2\eul^{\alpha p|h_{\sigma,\UV}(x,s,z)|}.
\end{align}
Employing \eqref{mona1} in the exponent of the exponential on the previous right hand side and using
\eqref{mona3} with $p=2$ we find, with some universal constant $c>0$,
\begin{align}\nonumber
\frac{1}{\alpha}\int_{0}^t\int_{\RR^{2N}}|\eul^{\pm\alpha ph_{\sigma,\UV}(x,s,z)}&-1\mp\alpha ph_{\sigma,\UV}(x,s,z)|
\Id\nu_X(z)\,\Id s
\\\label{mona4}
&\le \frac{c\alpha p^2g^4N^3t}{\omega(\sigma)}\cdot\eul^{4\pi\alpha pg^2N/\omega(\sigma)},
\quad t\ge0.
\end{align}
The asserted exponential moment bound now follows from \eqref{mona0} (with $a_\ell=0$ for all~$\ell$),
\eqref{mona3} (with $p=2$), \eqref{mona4}, an inequality due to 
Applebaum and Siakalli \cite{Applebaum.2009,Siakalli.2019} and a standard argument based on a suitable 
layer cake representation; see Lemma~\ref{lemexpbdtildeN}.
\end{proof}
The next lemma about continuous dependence on the initial positions of the matter particles is analogous
to \cite[Lemma~4.14]{MatteMoller.2018}. It is proved in a similar, standard fashion with Kunita's
inequality replacing the Burkholder inequalities used in \cite{MatteMoller.2018}.
\begin{lem}\label{lem:canm}
Let $0\le\sigma<\UV\le\infty$. Then,
for every $x\in\RR^{2N}$, the c\`{a}dl\`{a}g modification of $m_{\sigma,\UV}^N(x)$ found in the first part of
Lemma~\ref{lemexpmomentm} can be chosen such that
\begin{enumerate}
\item[{\rm(a)}]
$\RR^{2N}\ni x\mapsto (m_{\sigma,\UV,t}^N(x))(\omega)$ is continuous for all $t\ge0$ and $\omega\in\Omega$.
\item[{\rm(b)}]
For all $t\ge0$, $\omega\in\Omega$ and compact $K\subset\RR^{2N}$,
\begin{align*}
\lim_{s\downarrow t} \sup_{x\in K}|m_{\sigma,\UV,s}^N(x)-m_{\sigma,\UV,t}^N(x)|&=0.
\end{align*}
\end{enumerate} 
\end{lem}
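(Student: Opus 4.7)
My plan is to prove (a) and (b) together by deriving a Kolmogorov-type $L^p$-bound on $\sup_{t\in[0,T]}|m_{\sigma,\UV,t}^N(x)-m_{\sigma,\UV,t}^N(y)|$ via Kunita's inequality applied to the difference of integrands, and then producing a jointly well-behaved modification along the lines of the analogous non-relativistic result \cite[Lemma~4.14]{MatteMoller.2018}.

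The main technical step is a pointwise estimate on the integrand difference. From the explicit expression \eqref{eq:hstep1}, $h_{\sigma,\UV}(x,s,z)$ depends on $x$ only through the phase $\eul^{\ii k\cdot(x_j-x_\ell)}$, whence
\begin{align*}
|h_{\sigma,\UV}(x,s,z)-h_{\sigma,\UV}(y,s,z)|\le g^2\sum_{j,\ell=1}^N\int_{B_\UV\setminus B_\sigma}\int_0^s\frac{|\eul^{\ii k\cdot(x_j-x_\ell)}-\eul^{\ii k\cdot(y_j-y_\ell)}|\,|\eul^{-\ii k\cdot z_\ell}-1|}{\omega(k)(\omega(k)+\psi(k))}\Id r\,\Id k.
\end{align*}
Using the interpolations $|\eul^{\ii k\cdot(x_j-x_\ell)}-\eul^{\ii k\cdot(y_j-y_\ell)}|\le 2|k|^b|x-y|^b$ for $b\in[0,1]$ and $|\eul^{-\ii k\cdot z_\ell}-1|\le 2^{1-a}|k|^a|z_\ell|^a$ for $a\in[0,1)$ with $a+b<1$, combined with the same case distinction on the size of $|z_\ell|\omega(\sigma)$ as in Step~1 of the proof of \cref{lemexpmomentm}, I obtain for any compact $K\subset\RR^{2N}$, any $T>0$, any $b\in(0,1)$ and any $p\ge 2$ satisfying $pa>1$ an estimate
\begin{align*}
\int_0^T\int_{\RR^{2N}}|h_{\sigma,\UV}(x,s,z)-h_{\sigma,\UV}(y,s,z)|^p\,\Id\nu_X(z)\,\Id s\le C|x-y|^{bp},\quad x,y\in K.
\end{align*}
Inserting this into Kunita's inequality exactly as in Step~2 of the proof of \cref{lemexpmomentm} produces $\EE[\sup_{t\in[0,T]}|m_{\sigma,\UV,t}^N(x)-m_{\sigma,\UV,t}^N(y)|^p]\le C'|x-y|^{bp}$ for $x,y\in K$.

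Choosing $p$ so large that $bp>2N$, the Kolmogorov--Chentsov continuity theorem applied to the random field $x\mapsto m_{\sigma,\UV,\cdot}^N(x)$ with values in the complete metric space $(D([0,T];\RR),\|\cdot\|_\infty)$ — carried out along the lines of \cite[Lemma~4.14]{MatteMoller.2018} — yields a modification that is $\|\cdot\|_\infty$-continuous in $x$ for every $\omega$ outside a single null set, on which we simply redefine the process to be zero. A diagonal argument in $T\uparrow\infty$ then establishes (a); claim~(b) follows because on a compact $K$ the continuous image $\{m_{\sigma,\UV,\cdot}^N(x)(\omega):x\in K\}\subset(D([0,T];\RR),\|\cdot\|_\infty)$ is uniformly continuous in $x$, and the right-continuity at $t$ of every single path converts this uniform continuity into the desired uniform right limit. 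The principal obstacle is the $\nu_X$-integrability of $|z_\ell|^{pa}$, which is delicate due to the $|z|^{-3}$-singularity of $\nu$ at the origin and its polynomial decay at infinity when $m_\p=0$; as in \cref{lemexpmomentm}, this is resolved by exploiting via \eqref{eq:nuX} that $\nu_X$ is concentrated on the coordinate planes, so that at most one $z_\ell$ is non-zero $\nu_X$-almost everywhere, allowing the exponents to be tuned component by component.
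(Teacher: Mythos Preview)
Your proposal is correct and follows essentially the same approach as the paper: both start from the explicit expression \eqref{eq:hstep1}, use the interpolation $|\theta_{j,\ell}(k,\tilde{x},x,z)|\le 2^{2-a}|x-\tilde{x}|^{\ve}|z_\ell|^{a}|k|^{\ve+a}$ with $\ve+a<1$, perform the same truncation/case-distinction on $|z_\ell|$ to secure $\nu_X$-integrability, apply Kunita's inequality, and finish with a Kolmogorov-type regularity result (the paper picks $\ve=1/4$, $a=2/3$, $p\ve=2N+1$ and cites the Kolmogorov--Neveu lemma in \cite{Metivier.1982}). The only presentational difference is that the paper invokes Kolmogorov--Neveu to obtain (a) and (b) simultaneously, whereas you view the field as $(D([0,T];\RR),\|\cdot\|_\infty)$-valued and deduce (b) from (a) via compactness of the image on compacta; both routes are valid since $D$ is closed (hence complete) in the sup norm, so uniform limits of c\`adl\`ag paths remain c\`adl\`ag and your finite-cover argument for uniform right-continuity goes through.
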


\begin{proof}
Let $x,\tilde{x}\in\RR^{2N}$. By \cref{eq:hstep1}, 
\begin{align*}
\tilde{h}_{\sigma,\UV}(\tilde{x},x,s,z)&\coloneq h_{\sigma,\UV}(\tilde{x},s,z)-h_{\sigma,\UV}(x,s,z)
\\
&= g^2\sum_{j,\ell=1}^N\int_{B_{\UV}\setminus B_\sigma}
\int_0^s\frac{\eul^{-(s-r)\omega(k)+\ii k\cdot (X_{j,r}-X_{\ell,s-})}
}{\omega(k)(\omega(k)+\psi(k))}\cdot \theta_{j,\ell}(k,\tilde{x},x,z) \Id r\,\Id k,
\end{align*} 
where $\theta_{j,\ell}(k,\tilde{x},x,z)\coloneq (\eul^{\ii k\cdot(\tilde{x}_{j}-\tilde{x}_{\ell})}-\eul^{\ii k\cdot(x_j-x_\ell)})
(\eul^{-\ii k\cdot z_\ell}-1)$. 
We note
\begin{align*}
&|\theta_{j,\ell}(k,\tilde{x},x,z)|\le 2^{2-a_\ell}|x-\tilde{x}|^\ve|z_\ell|^{a_\ell}|k|^{\ve+a_\ell},
\end{align*}
for all $\ve,a_1,\ldots,a_\ell\in[0,1]$. Assuming $\ve+\max\{a_1,\ldots,a_N\}<1$,
replacing $2|k|^{a_\ell}$ by $4|x-\tilde{x}|^\ve|k|^{\ve+a_\ell}$ 
in the second and third members of \eqref{mona0}
and following the derivation of \cref{mona2}, we obtain
\begin{align*}
\sup_{s\ge0}|\tilde{h}_{\sigma,\UV}(\tilde{x},x,s,z)|^p
&\le \frac{(8\pi g^2N)^p|x-\tilde{x}|^{p\ve}}{\omega(\sigma)^{(1-\ve)p}}
 \sum_{\ell=1}^N1\wedge\bigg(\frac{\omega(\sigma)^{pa}}{(1-\ve-a)^p}|z_\ell|^{pa}\bigg),
\end{align*}
for all $a,p,\ve>0$ such that $a+\ve<1$. Here the right hand side is $\nu_X$-integrable provided
that $pa>1$. Again using \cref{mona17} and choosing $a=2/3$, we find
\begin{align*}
\int_{0}^t\int_{\RR^{2N}}|\tilde{h}_{\sigma,\UV}(\tilde{x},x,s,z)|^p\Id\nu_X(z)\,\Id s
&\le \frac{c_{p,\ve}g^{2p}N^{p+1}|x-\tilde{x}|^{p\ve} t}{\omega(\sigma)^{p-p\ve-1}},\quad t\ge0,
\end{align*}
for all $p\ge2$ and $\ve\in(0,1/3)$. (Here and in what follows constants depend only on the quantities 
displayed in their subscripts.)
Combining these remarks with Kunita's inequality, we arrive at
\begin{align*}
&\EE\bigg[\sup_{s\in[0,t]}|m_{\sigma,\UV,s}^N(\tilde{x})-m_{\sigma,\UV,s}^N(x)|^p\bigg]
\\
&=\EE\bigg[\sup_{s\in[0,t]}\bigg|\int_{(0,s]\times\RR^{2N}}\tilde{h}_{\sigma,\UV}(\tilde{x},x,s,z)\Id\wt{N}_X(s,z)\bigg|^p\bigg]
\\
&\le c_p\EE\bigg[\bigg(\int_0^t\int_{\RR^{2N}}|\tilde{h}_{\sigma,\UV}(\tilde{x},x,s,z)|^2\Id\nu_X(z)\,\Id s\bigg)^{p/2}\bigg]
\\
&\quad+\EE\bigg[\int_0^t\int_{\RR^{2N}}|\tilde{h}_{\sigma,\UV}(\tilde{x},x,s,z)|^p\Id\nu_X(z)\,\Id s\bigg]
\\
&\le  c_{m_{\bos},p,\ve,g,N}(t^{p/2}+t)|x-\tilde{x}|^{p\ve},
\end{align*}
for all $t\ge0$, $\ve\in(0,1/3)$ and $p\ge2$. Choosing $\ve=1/4$ and $p$ such that
$p\ve=2N+1$, we see that all assertions follow from the Kolmogorov-Neveu lemma;
see, e.g., \cite[pp. 268/9]{Metivier.1982}.
\end{proof}

\subsection{Effective interaction terms}

Next, we treat the effective interaction introduced in \cref{def:wy,def:wtN} together with variants containing cutoffs.

Since $\chi_{B_\sigma^c}v_\UV\beta$ is integrable for finite $\Lambda>\sigma\ge 0$, we can consider the Fourier integrals
\begin{align}\label{def:wIRUV}
\begin{aligned}
	w_{\sigma,\UV}(y)&\coloneq \langle\eul^{\ii K\cdot y}\chi_{B_\sigma^c}v_\UV|\beta\rangle
	\\
	&=g^2\int_{B_{\UV}\setminus B_\sigma}\frac{\eul^{-\ii k\cdot y}}{\omega(k)(\omega(k)+\psi(k))}\Id k
	=g^2\int_\sigma^{\UV} \vt(r,|y|)\Id r,
\end{aligned}
\end{align}
for $0\le\sigma<\UV<\infty$ and $y\in\RR^2$, where $\vt$ is defined in \eqref{defvt}. 
Since the Bessel function $J_0$ in \eqref{defvt} satisfies the bound \cref{eq:J0bound}, it is clear that
\begin{align*}
\lim_{\UV\to\infty}w_{\sigma,\UV}(y)&=w_{\sigma,\infty}(y)\coloneq g^2\int_\sigma^\infty\vt(r,|y|)\Id r,
\quad y\in\RR^2\setminus\{0\}.
\end{align*}
Setting $w_{\sigma,\infty}(0)\coloneq 0$, say,  $w_{\sigma,\infty}:\RR^2\to\RR$ 
is the up to its value at $0$ unique representative of the 
distributional Fourier transform of $\chi_{B_\sigma^c}v\beta$ that is continuous on $\RR^2\setminus\{0\}$.

\begin{rem}\label{remdefw}
Since $v\beta=g^2/\omega(\omega+\psi)\in L^{\tilde{p}}(\RR^2)$ for every $\tilde{p}>1$, the Hausdorff-Young inequality 
implies
$w_{\sigma,\infty}\in L^{p}(\RR^2)$
and
$\|w_{\sigma,\infty}\|_{p}\le \|\chi_{B_\sigma^c}v\beta\|_{p'}$ for all $p\in[2,\infty)$,
where $p'$ is the exponent conjugated to $p$.
For all $p\in(1,\infty)$, or equivalently $p'\in(1,\infty)$, and $\sigma\ge0$, we further have
\begin{align*}
\|\chi_{B_\sigma^c}v\beta\|_{p'}
&\le g^2\bigg(\pi\int_\sigma^\infty\frac{2r}{(r^2+m_{\bos}^2)^{p'}}\Id r\bigg)^{1/p'}
=\frac{c_pg^2}{(\sigma^2+m_{\bos}^2)^{1/p}},
\end{align*}
for some solely $p$-dependent $c_p\in(0,\infty)$.
Combined, we find the bound
\begin{align}\label{Lpvbetasigma2}
\|w_{\sigma,\UV}\|_p&\le \frac{c_pg^2}{(\sigma^2+m_{\bos}^2)^{1/p}},\quad 0\le\sigma<\UV\le\infty,\,p\in[2,\infty).
\end{align}
\end{rem}
We can now treat a (possibly) cutoff version of the full effective interaction expression defined in \cref{def:wtN}.
For all $0\le\sigma<\UV\le\infty$ and $x\in\RR^{2N}$, it is given by
\begin{align}\nonumber
w^N_{\sigma,\UV,t}(x)&\coloneq \sum_{\substack{j,\ell=1\\ j\not=\ell}}^N\int_0^tw_{\sigma,\UV}(X_{\ell,s}^x-X_{j,s}^x)\Id s
\\\label{def:wNsigmaUV}
&=\sum_{1\le j<\ell\le N}\int_0^t2w_{\sigma,\UV}(X_{\ell,s}^x-X_{j,s}^x)\Id s,\quad \mbox{whenever $N\ge2$.}
\end{align}
Notice that $w_{\sigma,\UV}(-y)=w_{\sigma,\UV}(y)$ for all $y\in\RR^2$. Also notice that we are again using
the conventions introduced in \cref{sssecKato} in the case $\UV=\infty$; i.e., the pathwise integrals in 
\eqref{def:wNsigmaUV} are $0$ by definition, unless they are well-defined for {\em all} $t\ge0$
which is the case with probability $1$.
For $N=1$, we set $w_{\sigma,\UV,t}^1(x)\coloneq 0$. 

The next \lcnamecref{lemexpmomentw} is an exponential moment bound for the effective interaction.
\begin{lem}\label{lemexpmomentw}
There exist universal constants $c,c'>0$ such that
\begin{align*}
&\sup_{x\in\RR^{2N}}\EE\left[\sup_{s\in[0,t]}\eul^{p|w_{\sigma,\UV,s}^N(x)|}\right]
\le c^N\exp\left(tc'p^2g^4N^2(N-1)
\Big(\frac{m_{\p}}{\omega(\sigma)^2}+\frac{1}{\omega(\sigma)}\Big)\right),
\end{align*}
for all $0\le\sigma<\UV\le\infty$, $t\ge0$ and $p\in[1,\infty)$.
\end{lem}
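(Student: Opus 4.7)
The plan is a Khasminskii-type iteration combined with H\"older's inequality to decouple the matter particles. First I would remove the supremum over $s$ by monotonicity: writing $|w_{\sigma,\UV,s}^N(x)|\le\sum_{\ell=1}^NA_{\ell,s}(x)$ with $A_{\ell,s}(x):=\sum_{j\ne\ell}\int_0^s|w_{\sigma,\UV}|(X^x_{\ell,r}-X^x_{j,r})\,\Id r$, each $A_{\ell,s}(x)$ is non-decreasing in $s$, so the supremum over $s\in[0,t]$ may be evaluated at $s=t$. Next I would apply H\"older with $N$ equal weights, $\EE[\prod_\ell\eul^{pA_{\ell,t}(x)}]\le\prod_\ell\EE[\eul^{pNA_{\ell,t}(x)}]^{1/N}$, and condition on the path of $X_\ell$: since the $X_j$ for $j\ne\ell$ remain mutually independent after conditioning, the conditional exponential moment factorises over $j\ne\ell$ into factors of the form $\EE[\eul^{pN\int_0^tV_r(x_j+X_{j,r})\,\Id r}]$ with the random, time-dependent potential $V_r(z):=|w_{\sigma,\UV}|(X^x_{\ell,r}-x_j-z)$, whose spatial $L^{p_0}$-norm equals $\|w_{\sigma,\UV}\|_{p_0}$ by translation invariance.

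For each conditional factor I would run Khasminskii's lemma iteratively along the Markov structure of $X_j$ (a standard tool from the Kato class theory, cf.\ \cref{app:Kato}): if $\Delta t$ satisfies $J(\Delta t):=\|w_{\sigma,\UV}\|_{p_0}\int_0^{\Delta t}\|\rho_{m_{\p},r}\|_{p_0'}\,\Id r\le(2pN)^{-1}$ (which holds uniformly in the conditioning data by Young's convolution inequality), then the conditional factor is bounded by $2^{\lceil t/\Delta t\rceil}$. The density estimate follows from $\|\rho_{m_{\p},r}\|_1=1$ together with the Plancherel-type bound
$$\|\rho_{m_{\p},r}\|_\infty\le(2\pi)^{-2}\int_{\RR^2}\eul^{-r\psi(\xi)}\,\Id\xi=\tfrac{1}{2\pi}(r^{-2}+m_{\p}r^{-1}),$$
obtained by the explicit substitution $v=\sqrt{|\xi|^2+m_{\p}^2}$. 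Riesz--Thorin interpolation yields $\|\rho_{m_{\p},r}\|_{p_0'}\le c(r^{-2/p_0}+m_{\p}^{1/p_0}r^{-1/p_0})$, and combined with \cref{Lpvbetasigma2} one obtains $J(\Delta t)\le cg^2\omega(\sigma)^{-2/p_0}\bigl[(\Delta t)^{1-2/p_0}+m_{\p}^{1/p_0}(\Delta t)^{1-1/p_0}\bigr]$.

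The crux is to apply this bound with \emph{two different} values of $p_0$: with $p_0=4$ (exploiting the first summand) it forces $\Delta t^{-1}\lesssim(pN)^2g^4/\omega(\sigma)$, while with $p_0=2$ (exploiting the second summand) it forces $\Delta t^{-1}\lesssim(pN)^2g^4m_{\p}/\omega(\sigma)^2$. Imposing both conditions simultaneously yields $\Delta t^{-1}\lesssim(pN)^2g^4[m_{\p}/\omega(\sigma)^2+1/\omega(\sigma)]$, so each conditional factor is bounded by a universal constant times $\exp(C(pN)^2tg^4[m_{\p}/\omega(\sigma)^2+1/\omega(\sigma)])$. Multiplying the $N-1$ conditionally independent factors indexed by $j\ne\ell$, taking the $1/N$-th root from the H\"older step, and multiplying over the $N$ particles $\ell$ produces the factor $p^2N^2(N-1)$ in the final exponent, while the $\lceil\cdot\rceil$-rounding contributes a prefactor bounded by $c^N$. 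The main technical obstacle is the density computation and the recognition that \emph{two} values of $p_0$ must be used to capture both summands with their precise powers of $\omega(\sigma)$ and $m_{\p}$; once this is in place, the remaining bookkeeping of exponents is routine.
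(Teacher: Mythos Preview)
Your overall architecture (remove the sup by monotonicity, H\"older over $N$ factors, condition on one particle, then Khasminskii for each of the remaining $N-1$ independent factors) is sound and does lead to the right exponent $p^2g^4N^2(N-1)$. It is, however, different from the paper's route, and the step you flag as ``the crux'' contains a genuine gap.

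\textbf{The gap.} Your density bound $\|\rho_{m_{\p},r}\|_{p_0'}\le c(r^{-2/p_0}+m_{\p}^{1/p_0}r^{-1/p_0})$ is correct, but you cannot ``apply it with two different values of $p_0$'' in the way you describe. For $p_0=2$ the first summand is $r^{-1}$ and $\int_0^{\Delta t}r^{-1}\,\Id r=\infty$, so your formula $J(\Delta t)\le cg^2\omega(\sigma)^{-2/p_0}[(\Delta t)^{1-2/p_0}+\dots]$ is simply false at $p_0=2$; the $p_0=2$ bound gives no information. If instead you use only $p_0=4$, the second summand contributes $m_{\p}^{1/4}(\Delta t)^{3/4}$, which forces $\Delta t^{-1}\gtrsim (pNg^2)^{4/3}m_{\p}^{1/3}\omega(\sigma)^{-2/3}$ rather than $(pN)^2g^4m_{\p}/\omega(\sigma)^2$; a Young-inequality comparison shows this produces an extra term $\sim(N-1)m_{\p}t$ in the exponent that is not in the stated bound. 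The point is that Riesz--Thorin interpolation of the \emph{norm} of $\rho_{m_{\p},r}$ is too crude: the two summands in $\|\rho\|_\infty$ come from different spatial regions of the density, and to exploit them separately you must first split $\rho_{m_{\p},r}=\rho^{(1)}+\rho^{(2)}$ (near/far field) and apply Young with different exponents to each piece. This is exactly what \cref{correlLpLinfty} in \cref{app:lplq} does; invoking that corollary in place of your interpolation step would repair the argument and give $\sup_z\EE[|w_{\sigma,\UV}|(z+X_{j,r})]\le cr^{-1/2}(\|w_{\sigma,\UV}\|_4+m_{\p}^{1/2}\|w_{\sigma,\UV}\|_2)$, from which the claimed $\Delta t$ follows directly.

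\textbf{Comparison with the paper.} The paper avoids both the conditioning and the by-hand Khasminskii. It first uses a combinatorial decoupling inequality (from \cite{MatteMoller.2018,Bley.2018}) that bounds the full expectation by $\max_{j<\ell}\EE[\exp(2pN\int_0^t|w_{\sigma,\UV}|(X_{\ell,s}^x-X_{j,s}^x)\Id s)]^{(N-1)/2}$, reducing everything to a \emph{single} pair. It then observes that $X_\ell-X_j$ is itself a L\'evy process with symbol $-2\psi$ and applies the packaged Carmona--Khasminskii bound \eqref{relLpLinfty3} (which already incorporates the density split) together with \eqref{Lpvbetasigma2}. Your route is more elementary in that it does not rely on the combinatorial decoupling lemma, but it needs the same density split hidden inside \eqref{relLpLinfty3} to get the sharp $m_{\p}/\omega(\sigma)^2$ term.
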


\begin{proof}
Let $t>0$ and $x\in\RR^{2N}$. Exactly as in \cite[Equation (4.26)]{MatteMoller.2018}
(where an observation from \cite{Bley.2018} is used; see also \cite[Appendix B]{MatteMoller.2018}),
we can exploit the independence of the processes $X_1,\ldots,X_N$ to obtain the bound
\begin{align*}
\EE\left[\sup_{s\in[0,t]}\eul^{p|w_{\sigma,\UV,s}^N(x)|}\right]
&\le \max_{1\le j<\ell\le N}\EE\left[\sup_{s\in[0,t]}
\eul^{2pN\int_0^t|w_{\sigma,\UV}(X_{\ell,s}^x-X_{j,s}^x)|\Id s}\right]^{(N-1)/2}.
\end{align*}
Pick two indices $j$ and $\ell$ with $1\le j<\ell\le N$. Since $X_j$ and $X_\ell$ are independent and
since $\psi(-\eta)=\psi(\eta)$ for all $\eta\in\RR^2$, 
we know that the difference $X_\ell-X_j$ is a L\'{e}vy process associated with the L\'{e}vy symbol $-2\psi$.
Therefore, we can apply the bound \eqref{relLpLinfty3} with $a=d=p=2$ to deduce that
\begin{align*}
\EE\left[\sup_{s\in[0,t]}
\eul^{2pN\int_0^t|w_{\sigma,\UV}(X_{\ell,s}^x-X_{j,s}^x)|\Id s}\right]
&\le c\exp\left(c'(m_{\p}^{1/2}\|2pNw_{\sigma,\UV}\|_2+\|2pNw_{\sigma,\UV}\|_4)^2t\right),
\end{align*}
with universal constants $c,c'>0$. Combining the above remarks with
\eqref{Lpvbetasigma2}, we arrive at the asserted inequality.
\end{proof}
We can also bound the $p$'th moments of the effective interaction.
\begin{lem}\label{lempmomentw}
Let $p\in[1,\infty)$. Then there exists $c_p>0$, depending solely on $p$, such that
for all $0\le\sigma<\UV\le\infty$, $t\ge0$ and $x\in\RR^{2N}$,
\begin{align}
&\EE\bigg[\sup_{s\in[0,t]}|w_{\sigma,\UV,s}^N(x)|^p\bigg]
\label{convw0}
\le c_pg^{2p}N^p(N-1)^{p}t^{p-1/2}\bigg(\frac{m_{\p}}{\omega(\sigma)^2}
+\frac{1}{\omega(\sigma)}\bigg)^{1/2}.
\end{align}
\end{lem}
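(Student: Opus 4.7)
Since $w_{\sigma,\UV,s}^1(x) \equiv 0$ the case $N=1$ is trivial, so I will assume $N\ge 2$. The plan is to reduce the problem to a single pair via Minkowski's inequality in $L^p(\Omega)$, apply Jensen's inequality on the time interval, then invoke the free relativistic semigroup bound \eqref{relLpLinfty3}---the same appendix estimate used in the proof of \cref{lemexpmomentw}---with \eqref{Lpvbetasigma2} supplying the final parameter dependence.

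Starting from the pair-sum in \cref{def:wNsigmaUV}, the inequality $\sup_{s\in[0,t]}\int_0^s|\cdot|\Id r \le \int_0^t|\cdot|\Id r$ and Minkowski's inequality in $L^p(\Omega)$ yield
\begin{align*}
\EE\Big[\sup_{s\in[0,t]}|w_{\sigma,\UV,s}^N(x)|^p\Big]^{1/p} \le N(N-1)\max_{1\le j<\ell\le N}\EE\Big[\Big(\int_0^t 2|w_{\sigma,\UV}(Y^{j,\ell,x}_r)|\Id r\Big)^p\Big]^{1/p},
\end{align*}
where $Y^{j,\ell,x}_r \coloneq x_\ell - x_j + X_{\ell,r} - X_{j,r}$. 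By the independence of $X_j$ and $X_\ell$ and the evenness of $\psi$, the centered process $Y^{j,\ell,x}_r - (x_\ell - x_j)$ is a L\'evy process with symbol $-2\psi$. For each pair Jensen's inequality on $[0,t]$ gives $\bigl(\int_0^t|w|\Id r\bigr)^p \le t^{p-1}\int_0^t|w|^p\Id r$, so after taking expectations the single-pair moment reduces to estimating $\int_0^t\EE[|w_{\sigma,\UV}(Y^{j,\ell,x}_r)|^p]\Id r$.

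The key next input is the appendix bound \eqref{relLpLinfty3} on the free relativistic semigroup, applied to the non-negative function $|w_{\sigma,\UV}|^p$ (which lies in $L^2\cap L^4$ thanks to \eqref{Lpvbetasigma2}). It should yield
\begin{align*}
\int_0^t \EE[|w_{\sigma,\UV}(Y^{j,\ell,x}_r)|^p]\Id r \le c\,t^{1/2}\bigl(m_{\p}^{1/2}\|w_{\sigma,\UV}\|_{2p}^p + \|w_{\sigma,\UV}\|_{4p}^p\bigr),
\end{align*}
uniformly in the shift $x_\ell - x_j$. Inserting \eqref{Lpvbetasigma2} gives $\|w_{\sigma,\UV}\|_{2p}^p \le c_p g^{2p}/\omega(\sigma)$ and $\|w_{\sigma,\UV}\|_{4p}^p \le c_p g^{2p}/\omega(\sigma)^{1/2}$, and the elementary inequality $a+b \le \sqrt{2}(a^2+b^2)^{1/2}$ produces $m_{\p}^{1/2}\|w_{\sigma,\UV}\|_{2p}^p + \|w_{\sigma,\UV}\|_{4p}^p \le c'_p g^{2p}\bigl(\tfrac{m_{\p}}{\omega(\sigma)^2} + \tfrac{1}{\omega(\sigma)}\bigr)^{1/2}$. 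Combining Minkowski, Jensen, the semigroup estimate and the spatial bounds, with the $N(N-1)$ prefactor raised to the $p$-th power, gives the claim.

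The main obstacle is verifying the precise form of the semigroup estimate from the appendix: \eqref{relLpLinfty3} must deliver the specific $t^{1/2}$ time factor together with the combination $m_{\p}^{1/2}\|\cdot\|_2+\|\cdot\|_4$ of spatial norms when applied to $|w_{\sigma,\UV}|^p$. Once that is available, the remaining ingredients (Jensen, pair sum, and the $L^p$ bounds from \eqref{Lpvbetasigma2}) are routine.
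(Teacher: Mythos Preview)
Your approach is essentially the paper's: reduce to one pair via Minkowski, use that $X_\ell-X_j$ has L\'evy symbol $-2\psi$, and feed the pointwise semigroup estimate into the time integral together with \eqref{Lpvbetasigma2}. The only structural difference is that the paper uses the generalized Minkowski inequality $\EE[(\int_0^t|w|\,\Id s)^p]^{1/p}\le\int_0^t\EE[|w|^p]^{1/p}\,\Id s$ in place of your Jensen step; both routes give the same $t^{p-1/2}$ factor.

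The one genuine slip is the appendix reference. The bound you need is \eqref{relLpLinfty2} (Corollary~\ref{correlLpLinfty}), which for $d=2$ and the choice of exponent $2$ applied to $f=|w_{\sigma,\UV}|^p$ gives exactly
\[
\EE\big[|w_{\sigma,\UV}(Y^{j,\ell,x}_r)|^p\big]\le c\,r^{-1/2}\big(\|w_{\sigma,\UV}\|_{4p}^p+m_{\p}^{1/2}\|w_{\sigma,\UV}\|_{2p}^p\big),
\]
and integrating $r^{-1/2}$ over $[0,t]$ produces your $t^{1/2}$. The estimate \eqref{relLpLinfty3} you cite is the \emph{exponential} Kac-average bound used in \cref{lemexpmomentw}; it does not deliver the pointwise-in-time inequality you wrote. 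So your flagged ``main obstacle'' is resolved simply by replacing \eqref{relLpLinfty3} with \eqref{relLpLinfty2}; after that the argument is complete.
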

\begin{proof}
First applying the ordinary and afterwards the generalized Minkowski inequality, we find
\begin{align*}
\EE\bigg[\sup_{s\in[0,t]}|w_{\sigma,\UV,s}^N(x)|^p\bigg]^{\frac{1}{p}}
&\le\sum_{1\le j<\ell\le N}
\EE\bigg[\bigg(\int_0^t2|w_{\sigma,\UV}(X_{\ell,s}^x-X_{j,s}^x)|\Id s\bigg)^p\bigg]^{\frac{1}{p}}
\\
&\le 2 \sum_{1\le j<\ell\le N}
\int_0^t\EE\big[|w_{\sigma,\UV}(X_{\ell,s}^x-X_{j,s}^x)|^p\big]^{\frac{1}{p}}\Id s.
\end{align*}
Here $X_{\ell,s}-X_{j,s}$ has the same law as $X_{1,2s}$. Thus, by \eqref{relLpLinfty2} (with $p=d=2$)
and~\eqref{Lpvbetasigma2},
\begin{align*}
\EE\big[|w_{\sigma,\UV}(X_{\ell,s}^x-X_{j,s}^x)|^p\big]
&\le c(\|w_{\sigma,\UV}\|_{4p}^p+m_{\p}^{1/2}\|w_{\sigma,\UV}\|_{2p}^p) s^{-1/2}
\\
&\le \frac{c_pg^{2p}}{s^{1/2}}\bigg(\frac{1}{\omega(\sigma)^{1/2}}
+\frac{m_{\p}^{1/2}}{\omega(\sigma)}\bigg), \quad s>0.
\end{align*}
Putting these remarks together we arrive at \eqref{convw0}.
\end{proof}


\section{Discussion of the complex action}\label{sec:complex}
\noindent
After the preliminary technical discussions of its main contributions in the previous \lcnamecref{sec:basicproc}, 
we now turn to the analysis of the complex action itself. In \cref{ssecdefcomplexaction} we shall rewrite
the formula for the complex action with ultraviolet cutoff, obtaining a more regular expression that stays
meaningful when the cutoff is dropped. \cref{ssecexpmbca,ssecconvca} are devoted to
exponential moment bounds and convergence properties as $\UV\to\infty$, respectively.
In most parts of this section we allow for optional infrared cutoffs for the technical reasons explained at
the beginning of  \cref{sec:basicproc}.

\subsection{Definition of the complex action}\label{ssecdefcomplexaction}

First, we introduce an abbreviation for the complex action with finite $\UV$ 
containing an optional infrared cutoff; recall the earlier notation \cref{defuUV0,def:Ucutoff,def:vUVs}.

\begin{defn}\label{def:action-process}
Let $\UV\in(0,\infty)$, $\sigma\in[0,\UV)$, $t\ge0$ and $x\in\RR^{2N}$. Then we define
\begin{align}\label{defuUV}
\tilde{u}_{\sigma,\UV,t}^N(x)&\coloneq
\int_0^t\langle U^{N,+}_{\sigma,\UV,s}(x)|v_{\UV,s}^N(x)\rangle\Id s-tNE_{\sigma,\UV}^{\ren},
\end{align}
where 
\begin{align}\label{defEren}
E_{\sigma,\UV}^{\ren}&\coloneq \langle\chi_{B_{\UV}\setminus B_\sigma}v|\beta\rangle.
\end{align}
\end{defn}
In view of \cref{defvUV,def:beta} we indeed have  $E_{0,\UV}^{\ren}=E_{\UV}^{\ren}$ with $E_{\UV}^{\ren}$ given by
\cref{def:EUVren}. To reduce cluttering of indices a bit, we shall return to the earlier notation  
$\tilde{u}_{0,\UV,t}^N(x)=\tilde{u}_{\UV,t}^N(x)$ whenever $\sigma=0$.

Using It\^o's formula, we shall rewrite the integral in \cref{defuUV} in \cref{lem:actionito} below. The alternative expression
for the complex action we find is \cref{defcomplexactionsigmaUV} in the next definition. 
As opposed to $\tilde{u}_{\sigma,\UV,t}^N(x)$, the new expression $u_{\sigma,\UV,t}^N(x)$
is meaningful for $\UV=\infty$ as well!

Recall the definition of the martingale contribution \cref{mstochint} and the effective interaction \cref{def:wNsigmaUV}.
For all $\UV\in(0,\infty]$, we also abbreviate
\begin{align*}
\beta_\UV\coloneq\chi_{B_{\UV}}\beta,\qquad
\beta_{\UV,t}^N(x)&\coloneq \sum_{j=1}^N\eul^{-\ii K\cdot X_{j,t}^x}\beta_{\UV},\quad t\ge0,\,x\in\RR^{2N}.
\end{align*}

\begin{defn}\label{defn:complexactioninfty}
Let $x\in\RR^{2N}$.  Then we define
\begin{align}\label{defcomplexactionsigmaUV}
u_{\sigma,\UV,t}^N(x)&\coloneq w^N_{\sigma,\UV,t}(x)-c^N_{\sigma,\UV,t}(x)+m^N_{\sigma,\UV,t}(x),\quad t\ge0,
\end{align}
for all $0\le\sigma<\UV\le\infty$, with
\begin{align}\label{def:cNsigmaUV}
c^N_{\sigma,\UV,t}(x)&\coloneq 
\langle U_{\sigma,\UV,t}^{N,+}(x)|\beta_{\UV,t}^N(x)\rangle.
\end{align}
\end{defn}

Recall that the three contributions to the complex action and the action itself 
for our model without any cutoffs have already been introduced in \eqref{defvt} through \eqref{def:action}. In fact,
\begin{align}\label{defcomplexaction}
u_t^N(x)&= u_{0,\infty,t}^N(x), \ \ \mbox{and analogously with $c$, $m$ or $w$ put in place of $u$.}
\end{align}

\begin{rem}\label{rem:bdcN}
We note for later use that
\begin{align}\label{bdcN}
|c_{\sigma,\UV,t}^N(x)|\le\frac{2\pi g^2N^2}{\omega(\sigma)},\quad \sigma,t\ge0,
\end{align}
for all $x\in\RR^{2N}$, by the last two relations in \eqref{mona0} with $a_\ell=0$.
\end{rem}

\begin{rem}\label{eq:u-decomp}
Let $0\le \sigma <\kappa < \UV\le \infty$ and $x\in\RR^{2N}$. Then
\cref{mstochint,def:wIRUV,def:wNsigmaUV,def:cNsigmaUV} $\PP$-a.s. imply
	\begin{align}
		u_{\sigma,\UV,t}^N(x) = u_{\sigma,\kappa,t}^N(x) + u_{\kappa,\UV,t}^N(x), \quad t\ge0.
	\end{align}
\end{rem}

\begin{lem}\label{lem:actionito}
Let $0\le\sigma<\UV<\infty$ and $x\in\RR^{2N}$. Then, $\PP$-a.s.,
\begin{align}\label{ItoNeu0}
\tilde{u}_{\sigma,\UV,t}^N(x)&=u_{\sigma,\UV,t}^N(x),\quad t\ge0.
\end{align}
\end{lem}

\begin{proof}
We pick an orthonormal basis $\{e_i:i\in\NN\}$ of the real Hilbert space $\mc{h}_{\RR}$ defined
in \eqref{defmcRR} and put $P_n\coloneq\sum_{i=1}^n|e_i\rangle\langle e_i|$ for every $n\in\NN$.
We introduce the projections $P_n$ as we wish to employ the It\^{o} formula \cref{ItoAfX}.
(Alternatively, we could also use an It\^{o} formula for infinite dimensional processes \cite[Theorem~27.2]{Metivier.1982} and
re-write the $\Id X$-integrations and summations over jumps appearing there.)

Let $i\in\{1,\ldots,n\}$.
Recalling that $U_{\sigma,\UV,t}^{N,+}$ is $\mc h_\RR$-valued, cf. \cref{subsecerg}, and in view of \eqref{IBPUplus}, 
the paths of the adapted  process given by $A_{i,t}\coloneq\langle U_{\sigma,\UV,t}^{N,+}(x)|e_i\rangle$
are real-valued and absolutely continuous on every compact subinterval of $[0,\infty)$, and it satisfies
\begin{align}\label{ZsdAis}
\int_{(0,t]}Z_s\Id A_{i,s}&=
\int_0^tZ_s\langle \chi_{B_\sigma^c}v_{\UV,s}^N(x)-\omega U_{\sigma,\UV,s}^{N,+}(x)|e_i\rangle\Id s,\quad t\ge0,
\end{align}
for every real-valued predictable process $(Z_s)_{s\ge0}$ with locally bounded paths.
Furthermore, since $\UV<\infty$, the function $f_i:\RR^{2N}\to\RR$ given by
\begin{align*}
f_i(x)&\coloneq\sum_{j=1}^N\langle e_i|\eul^{-\ii K\cdot x_j}\beta_{\UV}\rangle,\quad x\in\RR^{2N},
\end{align*}
is smooth and bounded with bounded derivatives of any order. 
Thus, $\psi_X(-\ii\nabla)f_i$ is well-defined by \cref{def:psiXnabla}.
Also taking \eqref{eq:nuX} into account, applying Fubini's theorem and using \cref{LKpsi} afterwards, we find
\begin{align*}
\psi_X(-\ii\nabla)f_i(x)&=\sum_{j=1}^N\langle e_i|\eul^{-\ii K\cdot x_j}\psi\beta_{\UV}\rangle,\quad x\in\RR^{2N}.
\end{align*}
Applying the It\^{o} formula \cref{ItoAfX} to each term under the sum over $j$ and 
using $A_{i,0}=0$ and \cref{ZsdAis} with $Z_s=f_i(X_{s-}^x)$, we $\PP$-a.s. obtain
\begin{align*}
A_{i,t}f_i(X_t^x)
&=\sum_{j=1}^N\int_0^t\langle \chi_{B_\sigma^c}v_{\UV,s}^{N}(x)|e_i\rangle
\langle e_i|\eul^{-\ii K\cdot X_{j,s}^x}\beta_{\UV}\rangle\Id s
\\
&\quad
-\sum_{j=1}^N\int_0^t\langle \omega U_{\sigma,\UV,s}^{N,+}(x)|e_i\rangle
\langle e_i|\eul^{-\ii K\cdot X_{j,s}^x}\beta_{\UV}\rangle\Id s
\\
&\quad-\sum_{j=1}^N\int_0^t\langle U_{\sigma,\UV,s}^{N,+}(x)|e_i\rangle
\langle e_i|\eul^{-\ii K\cdot X_{j,s}^x}\psi\beta_{\UV}\rangle\Id s
\\
&\quad +\sum_{j=1}^N\int_{(0,t]\times\RR^{2N}}\langle U_{\sigma,\UV,s}^{N,+}(x)|e_i\rangle
\langle e_i|\eul^{-\ii K\cdot X_{j,s-}^x}(\eul^{-\ii K\cdot z_j}-1)\beta_{\UV}\rangle\Id\wt{N}_X(s,z),
\end{align*}
for all $t\ge0$, where the first two $\Id s$-integrals are the $\Id A_{i,s}$-integral from the It\^o formula in conjunction with \cref{ZsdAis}.
Summing over $i\in\{1,\ldots,n\}$ we $\PP$-a.s. find
\begin{align}\label{ItoNeu1}
\langle P_nU_{\sigma,\UV,t}^{N,+}(x)|\beta_{\UV,t}^N(x)\rangle
&=I_1(n,t)+I_2(n,t)+I_3(n,t)+I_4(n,t), 
\end{align}
for all $t\ge0$ and $n\in\NN$, with 
\begin{align*}
I_1(n,t)&\coloneq\int_0^t\langle P_n\chi_{B_\sigma^c}v_{\UV,s}^N(x)|\beta_{\UV,s}^{N}(x)\rangle\Id s,
\\
I_2(n,t)&\coloneq
-\int_0^t\langle P_n\omega U_{\sigma,\UV,s}^{N,+}(x)|\beta_{\UV,s}^N(x)\rangle\Id s,
\\
I_3(n,t)&\coloneq-\int_0^t\langle P_nU_{\sigma,\UV,s}^{N,+}(x)|\psi\beta_{\UV,s}^N(x)\rangle\Id s,
\\
I_4(n,t)&\coloneq\int_{(0,t]\times\RR^{2N}}h_{n}(s,z)\Id \wt{N}_X(s,z),
\end{align*}
where the integrand of $I_4(n,t)$ is given by
\begin{align*}
h_{n}(s,z)&\coloneq \sum_{j=1}^N
\langle P_nU_{\sigma,\UV,s}^{N,+}(x)|\eul^{-\ii K\cdot X_{j,s-}^x}(\eul^{-\ii K\cdot z_j}-1)\beta_{\UV}\rangle,
\quad s\ge0,\,z\in\RR^{2N}.
\end{align*}
Next, we pass to the limit $n\to\infty$ for fixed $t\ge0$. Since $P_n\to\id$ strongly,
the expression on the left hand side of \eqref{ItoNeu1} converges pointwise on $\Omega$ to $c_{\sigma,\UV,t}^N(x)$.
By dominated convergence, \cref{def:wIRUV,def:wNsigmaUV,defEren} we further have, again pointwise on $\Omega$,
\begin{align*}
I_1(n,t)\xrightarrow{\;\;n\to\infty\;\;}&w_{\sigma,\UV,t}^N(x)+tNE_{\sigma,\UV}^{\ren},
\\
I_2(n,t)+I_3(n,t)\xrightarrow{\;\;n\to\infty\;\;}
&-\int_0^t\langle U_{\sigma,\UV,s}^{N,+}(x)|(\omega+\psi)\beta_{\UV,s}^N(x)\rangle\Id s
\\
&=-\tilde{u}_{\sigma,\UV,t}^N(x)-tNE_{\sigma,\UV}^{\ren}.
\end{align*}
Here we used \cref{defuUV} and $(\omega+\psi)\beta_{\UV}=v_{\UV}$ in the last step.
Finally, recall that $h_{\sigma,\UV}(x,\cdot)$, the integrand of the martingale
$m_{\sigma,\UV}^N(x)$, has been introduced prior to \cref{lemexpmomentm}.
Clearly, $h_{n}(s,z)\to h_{\sigma,\UV}(x,s,z)$, $n\to\infty$, pointwise on $\Omega$
and for all $s\ge0$ and $z\in\RR^{2N}$. Since $\UV$ is finite, we can employ
some $(g,N,m_{\bos})$-dependent constant (coming from \cref{tnormU} with $\ve=0$)
times $(|z|\|K\beta_{\UV}\|)\wedge\|2\beta\|$
as square-integrable dominating function to conclude that
$h_n\to h_{\sigma,\UV}(x,\cdot)$ in $L^2((0,t]\times\RR^{2N}\times\Omega,\Id s\otimes\nu_X\otimes\PP)$.
By means of It\^{o}'s isometry \cref{ItoIso} we deduce that
$I_4(n,t)\to m_{\sigma,\UV,t}^N(x)$, $n\to\infty$, in $L^2(\PP)$ and in particular $\PP$-a.s.
along a suitable subsequence.

All these remarks prove $\PP$-a.s. an identity that is obviously equivalent to the one in \eqref{ItoNeu0}
for fixed $t\ge0$. Since the processes on both sides in \eqref{ItoNeu0} are c\`{a}dl\`{a}g, they
consequently are indistinguishable.
\end{proof}

\subsection{Exponential moment bounds}\label{ssecexpmbca}

\noindent
Exponential moment bounds on the complex action
are the essential quantitative ingredient for the proofs of our lower bounds on the minimal energy.
The first one, stated in the next theorem, is good for small $g^2N$ and obviously very bad for large $g^2N$.

\begin{thm}\label{thmexpmonentu1}
There exist universal constants $b,c,c'>0$ such that, for all $\alpha>1$, $p>0$, $0<\UV\le\infty$,
$t\ge0$ and $x\in\RR^{2N}$,
\begin{align*}
&\EE\bigg[\sup_{s\in[0,t]}e^{pu_{0,\UV,s}^N(x)}\bigg]
\le b^N\left(\frac{\alpha}{\alpha-1}\right)^{1/2}\eul^{2\pi pN^2g^2/m_{\bos}}
\\
&\quad\cdot
\exp\left(tc'\cdot\frac{p^2g^4N^2(N-1)}{m_{\bos}}
\Big(1+\frac{m_{\p}}{m_{\bos}}\Big)
+tc\cdot\frac{\alpha p^2g^4N^3}{m_{\bos}}\cdot
\eul^{8\pi\alpha pg^2N/m_{\bos}}\right).
\end{align*}
\end{thm}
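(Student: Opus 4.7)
The plan is to exploit the decomposition from \cref{defcomplexactionsigmaUV},
\[ u^N_{0,\UV,t}(x) = w^N_{0,\UV,t}(x) - c^N_{0,\UV,t}(x) + m^N_{0,\UV,t}(x), \]
and bound each of the three contributions using the exponential-moment results already at hand. The elementary observation $\sup_s e^{A_s+B_s} \le (\sup_s e^{A_s})(\sup_s e^{B_s})$ ensures that the supremum in $s$ commutes multiplicatively with this splitting. First, by \cref{rem:bdcN} with $\sigma = 0$ (for which $\omega(0) = m_\bos$) one has the deterministic estimate $|c^N_{0,\UV,t}(x)| \le 2\pi g^2 N^2/m_\bos$, which immediately produces the universal prefactor $e^{2\pi p N^2 g^2/m_\bos}$ in the target bound.

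For the remaining factor $\EE\bigl[\sup_{s\in[0,t]} e^{p w^N_{0,\UV,s}(x) + p m^N_{0,\UV,s}(x)}\bigr]$, I would apply H\"older's inequality with conjugate exponents $(4/3,4)$. The exponent $4$ is tailored so that invoking \cref{lemexpmomentm}(iii) with $\sigma = 0$ and $4p$ in place of $p$ yields the inner exponential $e^{2\pi\alpha(4p)g^2N/m_\bos} = e^{8\pi\alpha p g^2 N/m_\bos}$ matching the statement. Taking the $1/4$ power then produces the second summand $tc\alpha p^2 g^4 N^3/m_\bos \cdot e^{8\pi\alpha p g^2 N/m_\bos}$ (the extra $16^{1/4}$ is absorbed into the universal constant $c$) together with a prefactor $(\alpha/(\alpha-1))^{1/4}$, which is trivially dominated by $(\alpha/(\alpha-1))^{1/2}$ since $\alpha/(\alpha-1) \ge 1$ for $\alpha > 1$.

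For the $w$-factor, applying \cref{lemexpmomentw} with $\sigma = 0$ and $p$ replaced by $4p/3$, then raising the result to the $3/4$ power, produces the first summand $tc' p^2 g^4 N^2(N-1)(1 + m_\p/m_\bos)/m_\bos$ in the exponent, after using the algebraic identity $m_\p/m_\bos^2 + 1/m_\bos = (1+m_\p/m_\bos)/m_\bos$ and absorbing numerical factors into $c'$. The accompanying $c^{3N/4}$ prefactor is in turn absorbed into $b^N$ for a suitable universal constant $b > 0$.

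The main obstacle is not conceptual but purely organizational: selecting the H\"older exponents and tracking universal constants so the final bound assumes exactly the form claimed. All analytic substance already resides in \cref{lemexpmomentw}, \cref{lemexpmomentm}(iii), and \cref{rem:bdcN}; the proof simply assembles these three ingredients, the essential structural input being the decomposition of the complex action into its interaction, field-counterterm, and martingale parts.
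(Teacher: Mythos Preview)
Your proposal is correct and follows essentially the same approach as the paper: decompose $u^N_{0,\UV}$ via \cref{defcomplexactionsigmaUV}, bound the $c$-term deterministically by \cref{bdcN}, and split the remaining $w$- and $m$-factors by H\"older to feed them into \cref{lemexpmomentw} and \cref{lemexpmomentm}(iii). The only difference is the choice of H\"older exponents---you use $(4/3,4)$ while the paper uses Cauchy--Schwarz $(2,2)$---which is immaterial since all numerical discrepancies are absorbed into the universal constants $b,c,c'$.
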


\begin{proof}
This follows from \cref{defcomplexactionsigmaUV}, \cref{bdcN}, \cref{lemexpmomentw,lemexpmomentm} with $\sigma=0$ and the Cauchy-Schwarz inequality.
(After using the Cauchy-Schwarz inequality, we have to apply \cref{lemexpmomentm,lemexpmomentm} with $2p$ put in place of $p$; hence the factor $8\pi$ in the exponent of 
$\eul^{8\pi\alpha pg^2N/m_{\bos}}$.)
\end{proof}

We infer another exponential moment bound on the complex action, exhibiting a
better behavior for large $g^2N$. For $N=1$ and $m_{\p}>0$ we also find a bound that
is uniform in $m_{\bos}>0$.

\begin{thm}\label{thmexpmonentu2}
There exist universal constants $b,c,c'>0$
such that, for all $\alpha>1$, $0<\UV\le\infty$,
$t\ge0$, $x\in\RR^{2N}$ and $p>0$ such that $pg^2N>m_{\bos}$,
\begin{align*}
\EE\bigg[\sup_{s\in[0,t]}e^{pu_{0,\UV,s}^N(x)}\bigg]
&\le b^N\left(\frac{\alpha}{\alpha-1}\right)^{1/2}
\exp\bigg(t\pi pg^2(2N^2-N)\ln\Big(\frac{pg^2N}{m_{\bos}}\Big)\bigg)
\\
&\quad\cdot\exp\Big(tc'(N-1)(m_{\p}+pg^2N)+tc\alpha pg^2N^2\eul^{8\pi\alpha}\Big).
\end{align*}
For $m_{\p}>0$, the argument of the first exponential can be replaced by
\begin{align*}
t2\pi pg^2N(N-1)\ln\Big(\frac{pg^2N}{m_{\bos}}\Big)
+t\pi pg^2N[\ln(pg^2N)\vee0]+t\frac{\pi pg^2N}{m_{\p}}[(pg^2N)\wedge1].
\end{align*}
\end{thm}

\begin{proof}
Thanks to our assumption $pg^2N>m_{\bos}$, we can fix $\sigma>0$ such that
\begin{align}\label{eq:sigma0}
\omega(\sigma) = (\sigma^2+m_{\bos}^2)^{1/2}&=pg^2N.
\end{align}
Let us further assume that $\UV>\sigma$. Then, by \eqref{defcomplexactionsigmaUV}, \cref{eq:u-decomp,lem:actionito}, 
we may $\PP$-a.s. split up the complex action as
\begin{align}\label{splitu}
u_{0,\UV,s}^N(x)&=\tilde{u}_{\sigma,s}^N(x)-c_{\sigma,\UV,s}^N(x)+w_{\sigma,\UV,s}^N(x)+m_{\sigma,\UV,s}^N(x),
\quad s\ge0.
\end{align}
The first two terms on the right hand side of \cref{splitu} can be estimated trivially using \eqref{bdcN} and
\begin{align}\nonumber
\tilde{u}_{\sigma,t}^N(x)
&\le g^2N^2\int_{B_{\sigma}}\int_0^t\int_0^s\frac{\eul^{-(s-r)\omega(k)}}{\omega(k)}\Id r\,\Id s\,\Id k-tNE_{\sigma}^{\ren}
\\\label{trivialbdu}
&\le tg^2N^2 \int_{B_{\sigma}}\frac{1}{\omega(k)^2}\Id k-tNE_{\sigma}^{\ren},\quad t\ge0.
\end{align}
Direct computation and our choice of $\sigma$ satisfying \cref{eq:sigma0} lead to
\begin{align}\label{rosa}
g^2(N^2-N)\int_{B_{\sigma}}\frac{1}{\omega(k)^2}\Id k
=2\pi g^2N(N-1) \ln\bigg(\frac{pg^2N}{m_{\bos}}\bigg).
\end{align}
Since $\omega(k)>|k|\ge\psi(k)$,
the remaining contribution proportional to $N$ can be estimated using
\begin{align}\nonumber
	g^2N\int_{B_\sigma}\frac{1}{\omega(k)^2}\Id k-NE_{\sigma}^{\ren}
	&=g^2N\int_{B_\sigma}\frac{\psi(k)}{\omega(k)^2(\omega(k)+\psi(k))}\Id k
	\\\label{eq:g2N-NE}
	&\le\frac{g^2N}{2}\int_{B_\sigma}\frac{1}{\omega(k)^2}\Id k
	=g^2N\pi \ln\bigg(\frac{pg^2N}{m_{\bos}}\bigg).
\end{align}
Combining \cref{trivialbdu,rosa,eq:g2N-NE} we find
\begin{align*}
\sup_{s\in[0,t]}p\tilde{u}_{\sigma,s}^N(x)&\le t\pi pg^2(2N^2-N)\ln\Big(\frac{pg^2N}{m_{\bos}}\Big),\quad t\ge0.
\end{align*}
Finally, we can bound the expectation of the product of
$\sup_{s\in[0,t]}\eul^{pw_{\sigma,\UV,s}^N(x)}$ and $\sup_{s\in[0,t]}\eul^{pm_{\sigma,\UV,s}^N(x)}$ 
by means of the Cauchy-Schwarz inequality and \cref{lemexpmomentm,lemexpmomentw} with the above choice for $\sigma$. 
Putting all these remarks together we obtain the asserted bound for general $m_{p}\ge0$.

If $m_{\p}>0$, then we can instead use the following estimates in \eqref{eq:g2N-NE}.
Observing that $\psi(k)(\psi(k)+2m_{\p})=|k|^2$, we find
\begin{align*}
\int_{B_{\sigma\wedge 1}}\frac{g^2\psi(k)}{\omega(k)^2(\omega(k)+\psi(k))}\Id k
&=\int_{B_{\sigma\wedge1}}\frac{|k|^2}{\omega(k)^2}\cdot\frac{g^2}{\omega(k)(\psi(k)+2m_{\p})+|k|^2}\Id k
\\
&\le \frac{g^2}{2m_{\p}}\int_{B_{\sigma\wedge1}}\frac{1}{\omega(k)}\Id k
\\
&\le\frac{\pi g^2}{m_{\p}}(\sigma\wedge1)\le\frac{\pi g^2}{m_{\p}}[(pg^2N)\wedge1],
\end{align*}
and, similar to above,
\begin{align*}
\int_{B_\sigma\setminus B_{\sigma\wedge 1}}\frac{g^2\psi(k)}{\omega(k)^2(\omega(k)+\psi(k))}\Id k
&\le \frac{g^2}{2}\int_{B_\sigma\setminus B_{\sigma\wedge 1}}\frac{1}{\omega(k)^2}\Id k
\le \pi g^2[\ln(pg^2N)\vee 0].
\end{align*}
Replacing the bound in \cref{eq:g2N-NE} by the latter two estimations proves the last statement.

Finally, if $\UV\le\sigma$, then we write $u_{0,\UV,s}^N(x)=\tilde{u}_{\UV,s}^N(x)$, $s\ge0$, $\PP$-a.s.,
and employ \eqref{trivialbdu} and the first equality in \eqref{eq:g2N-NE} both with $\UV$ put in place of $\sigma$.
Replacing $B_\UV$ by the larger ball $B_\sigma$, we then see that
$\tilde{u}_{\UV,t}^N(x)$ is again bounded from above by the expression in the last line of \eqref{trivialbdu}
as it stands there. It is now clear that all asserted bounds also hold for $\UV\le\sigma$.
\end{proof}

\subsection{Convergence properties of the complex action}\label{ssecconvca}

\noindent
Using the bounds derived in \cref{sec:basicproc}, we can easily deduce convergence statements for the complex action when removing the ultraviolet cutoff.

\begin{thm}\label{thm:uconv}
Let $\sigma,t\ge0$ and $p>0$. Then
\begin{align}\label{convu1}
\sup_{x\in\RR^{2N}}\EE\bigg[\sup_{s\in[0,t]}|u_{\sigma,\UV,s}^N(x)-u_{\sigma,\infty,s}^N(x)|^p\bigg]
&\xrightarrow{\;\;\UV\to\infty\;\;}0,
\\\label{convu2}
\sup_{x\in\RR^{2N}}\EE\bigg[\sup_{s\in[0,t]}\big|\eul^{u_{0,\UV,s}^N(x)}-\eul^{u_{0,\infty,s}^N(x)}\big|^p\bigg]
&\xrightarrow{\;\;\UV\to\infty\;\;}0.
\end{align}
\end{thm}

\begin{proof}
By \cref{defcomplexactionsigmaUV,eq:u-decomp}, for $\sigma<\UV<\infty$, we $\PP$-a.s. have
\begin{align*}
u_{\sigma,\infty,s}^N(x)-u_{\sigma,\UV,s}^N(x)&=w_{\UV,\infty,s}^N(x)-c_{\UV,\infty,s}^N(x)+m_{\UV,\infty,s}^N(x),
\quad s\ge0.
\end{align*}
Therefore, the convergence \eqref{convu1} follows directly from \cref{lemexpmomentm,lempmomentw}, 
\cref{bdcN} (all applied with $(\UV,\infty)$ put in place of $(\sigma,\UV)$)
and Jensen's and Minkowski's inequalities. Now \eqref{convu2} follows from the bound 
$|\eul^a-\eul^b|\le|a-b|\eul^a\eul^b$, $a,b\ge0$, H\"{o}lder's inequality,
the exponential moment bound in \cref{thmexpmonentu1,convu1}.
\end{proof}

The following conclusion easily follows and is a helpful technical ingredient for our proof of strong continuity of the 
Feynman--Kac semigroup.
\begin{cor}\label{cor:pathsactioncont}
	For any $\UV\in(0,\infty]$ and $x\in\RR^{2N}$, the paths 
	$[0,\infty)\ni t\mapsto u^N_{0,\UV,t}(x)$ are continuous $\PP$-a.s. 
\end{cor}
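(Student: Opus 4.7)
The plan is to split the argument into the two cases $\UV < \infty$ and $\UV = \infty$, and reduce everything to the pathwise integral representation already obtained via It\^{o}'s formula.

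First, I would handle finite $\UV$. By Lemma~\ref{lem:actionito}, we $\PP$-a.s. have
\begin{align*}
u^N_{0,\UV,t}(x) = \int_0^t \langle U^{N,+}_{0,\UV,s}[x, X_\bullet]\,|\,v^N_{\UV,s}[x,X_\bullet]\rangle\,\Id s - tNE^{\ren}_{0,\UV}, \quad t\ge 0.
\end{align*}
Since $\UV$ is finite, $v_\UV\in L^2(\RR^2)$, so $\|v^N_{\UV,s}[x,X_\bullet]\|_{L^2}\le N\|v_\UV\|_2$ is uniformly bounded; and by \cref{remUpmcont}, $s\mapsto U^{N,+}_{0,\UV,s}[x,X_\bullet]$ is continuous into $L^2(\RR^2)$. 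The inner product is therefore locally bounded in $s$, so the integral over $[0,t]$ is absolutely continuous in $t$. This gives continuity of the paths in the finite-cutoff case.

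Next, for the $\UV=\infty$ case, I would invoke \cref{thm:uconv} with $\sigma=0$ and any fixed $p\ge 1$: for each $T>0$,
\begin{align*}
\EE\Bigl[\sup_{s\in[0,T]}|u^N_{0,\UV,s}(x)-u^N_{0,\infty,s}(x)|^p\Bigr]\xrightarrow{\;\UV\to\infty\;}0.
\end{align*}
Extract a subsequence $\UV_k\uparrow \infty$ such that $\sup_{s\in[0,T]}|u^N_{0,\UV_k,s}(x)-u^N_{0,\infty,s}(x)|\to 0$ $\PP$-a.s. By the first step each path $s\mapsto u^N_{0,\UV_k,s}(x)$ is continuous on $[0,T]$ on a full-measure set, and a uniform limit of continuous functions is continuous. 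Taking $T$ through $\NN$ and a countable union of $\PP$-null sets yields continuity of $t\mapsto u^N_{0,\infty,t}(x)$ on $[0,\infty)$ almost surely, as desired.

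The conceptual subtlety — really the crux of the entire construction — is that the individual summands $c^N_{0,\UV,\cdot}(x)$ and $m^N_{0,\UV,\cdot}(x)$ in \eqref{defcomplexactionsigmaUV} are only c\`{a}dl\`{a}g, with jumps arising at the jump times of $X$; it is only when they are combined with $w^N_{0,\UV,\cdot}(x)$ that these jumps cancel. Fortunately this cancellation has already been established by \cref{lem:actionito}, so no new work is needed: continuity in the finite-cutoff case is immediate from the integral form, and the infinite-cutoff case follows by the uniform $L^p$-convergence supplied by \cref{thm:uconv}.
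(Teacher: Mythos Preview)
Your proof is correct and follows exactly the paper's approach: for finite $\UV$ you use the pathwise integral representation (the paper just cites \cref{def:action-process}, implicitly via \cref{lem:actionito}), and for $\UV=\infty$ you pass to an a.s.\ locally uniformly convergent subsequence via \cref{convu1}, which is precisely what the paper does. Your write-up is in fact more explicit than the paper's two-line argument, including the handling of the countable union of null sets over $T\in\NN$.
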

\begin{proof}
	The statement directly follows from \cref{def:action-process,lem:actionito} for $\UV<\infty$. 
	Further, by \cref{convu1}, there exists an increasing sequence of cutoffs $\UV_n\in (0,\infty)$, $n\in\NN$, 
	with $\lim_{n\to \infty}\UV_n=\infty$ such that $u_{0,\UV_n,s}^N(x)$ converges  $\PP$-a.s. to $u_{0,\infty,s}^N(x)$ uniformly in $s$ on any compact subinterval of $[0,\infty)$. This takes care of the case $\UV=\infty$.
\end{proof}


\section{Feynman--Kac integrands and semigroups}\label{sec:FKInt}
\noindent
In this \lcnamecref{sec:FKInt} we discuss the Feynman--Kac integrands and semigroups 
we introduced in \cref{ssecdefprelFK} and in particular we shall prove all assertions of \cref{prop:TUVonL2}.
In \cref{ssecpwbds} we first provide some pathwise bounds on the Fock space operator-valued
part of the Feynman--Kac integrands, which in conjuction with our results on the complex action are applied in the succeeding
subsections. There we prove weighted $L^p$ to $L^q$ bounds on and
convergence as $\UV\to\infty$ of semigroup members (\cref{subsec:LpLq,ssecconvTUV}), 
Markov and semigroup properties (\cref{subsec:Markov})
and finally strong continuity of the semigroups as well as equicontinuity in the range of semigroup elements
(\cref{sseccontprop}).


\subsection{Some pathwise bounds}\label{ssecpwbds}

\noindent
We start out by collecting some additional bounds on the Fock space operators $F_t(h)$ defined in \eqref{defFtg}.
Recall the definitions \cref{deftnorm,def:S} of the time dependent norms $\|\cdot\|_t$
and the series $\mc{S}$, respectively, as well as the bound \eqref{normFt} on the operator norm of $F_t(h)$.

\begin{lem}
Let $g,\tilde{g},h,\tilde{h}\in L^2(\RR^2)$ and $t>0$. Then
\begin{align}\label{normdiffFt}
\|F_t(g)-F_t(h)\|&\le4\|g-h\|_t\sup_{\tau\in[0,1]}\mc{S}\big(\|\tau g+(1-\tau)h\|_t\big),
\end{align}
and
\begin{align}\nonumber
&\|F_t(g)F_t(\tilde{g})^*-F_t(h)F_t(\tilde{h})^*\|
\\\label{LipFF}
&\le 4(\|g-h\|_t+\|\tilde{g}-\tilde{h}\|_t)
\max\big\{\mc{S}(\|g\|_t),\mc{S}(\|h\|_t)\big\}\max\big\{\mc{S}(\|\tilde{g}\|_t),\mc{S}(\|\tilde{h}\|_t)\big\}.
\end{align}
Furthermore, for all $\ve\in(0,1]$,
\begin{align}\label{FFbei0}
\|(F_t(g)F_t(\tilde{g})^*-\id)(1+\Id\Gamma(\omega))^{-\ve}\|
&\le 4(\|g\|_t+\|\tilde{g}\|_t)\mc{S}(\|g\|_t)\mc{S}(\|\tilde{g}\|_t)+(2t)^{\ve}.
\end{align}
Finally, if $\omega g\in L^2(\RR^{2N})$, then  $F_t(g)\Fock\subset\dom(\Id\Gamma(\omega))$ and
	\begin{align}\label{eq:FdG}
		\|\Id\Gamma(\omega)F_t(g)\| \le 4\|\omega g\|_t\mc S(\|g\|_t) + \frac1{\eul\tau}\mc S(\|g\|_{t-\tau}),\quad \tau\in(0,t)  .
	\end{align}
\end{lem}

\begin{proof}
In \cite[Lemma~17.4]{GueneysuMatteMoller.2017}, it is shown that, for all $f\in L^2(\RR^2)$,
\begin{align}\label{eq:Fderbound}
\sum_{n=0}^\infty\frac{1}{n!}\|\ad(f)\ad(g)^n\eul^{-t\Id\Gamma(\omega)}\|&\le 4\|f\|_t\mc{S}(\|g\|_t),
\end{align}
as well as that the Fr\'echet derivative of $F_t$ satisfies
\begin{align*}
	F_t'(g)f=\sum_{n=0}^\infty\frac{1}{n!}\ad(f)\ad(g)^n\eul^{-t\Id\Gamma(\omega)}.
\end{align*}
This directly implies \eqref{normdiffFt}, which in turn entails \eqref{LipFF}. 
Applying \eqref{LipFF} with $h=\tilde{h}=0$ 
and recalling that $F_{t}(0)=\eul^{-t\Id\Gamma(\omega)}$, we see that, to verify \eqref{FFbei0}, it only
remains to show that $\|(\id-\eul^{-2t\Id\Gamma(\omega)})(1+\Id\Gamma(\omega))^{-\ve}\|\le (2t)^{\ve}$.
The latter bound, however, follows directly from the spectral calculus, $\Id\Gamma(\omega)\ge 0$ 
and the simple bound $\sup_{s\ge0}(1-e^{-2ts})/(1+s)^\ve \le (2t)^\ve$.

Finally, \cref{eq:FdG} follows by using the commutation relations $[\Id\Gamma(\omega),\ad(g)]=\ad(\omega g)$
and $[\ad(g),\ad(\omega g)]=0$, which hold on the range of $\ad(g)^k\eul^{-t\Id\Gamma(\omega)}$ for every $k\in\NN_0$, 
the bounds \eqref{normFt}, \eqref{eq:Fderbound}  
and the simple estimate $\|\Id\Gamma(\omega)\eul^{-\tau \Id\Gamma(\omega)}\|\le 1/\eul \tau$.
\end{proof}
In the next lemma, we consider the Fock space operator-valued part of our Feynman--Kac integrands, i.e., 
\begin{align*}
		D_{\UV,t}(x)&\coloneq F_{t/2}(-U_{\UV,t}^{N,+}(x))F_{t/2}(-U^{N,-}_{\UV,t}(x))^*,
\end{align*}
where $t>0$, $x\in\RR^{2N}$ and $\UV\in(0,\infty]$. The aim is
to obtain norm bounds and study their convergence properties as $\UV$ goes to infinity or $t$ goes to zero. 

\begin{lem}\label{lempwbdD}
	Abbreviate $s_{*} \coloneq \mc S((12\pi/m_\bos)^{1/2}|g|N)$.
	Let $t>0$, $x\in\RR^{2N}$, $\ve\in(0,1/2)$ and $\UV\in(0,\infty]$. Then
\begin{align}
\label{bdFU}
&\|D_{\UV,t}(x)\| \le s_*^2, 
\\
&\label{bddGFU}\|\Id\Gamma(\omega)D_{\UV,t}(x)^*\| \le 4(6\pi)^{1/2} |g|N \omega(\Lambda)^{1/2}s_*^2 
+ \frac{4}{\eul t}s_*^2,\quad \mbox{if $\UV<\infty$,}
\\
\label{bddiffFFUU}
&\|D_{\UV,t}(x)-D_{\infty,t}(x)\|
\le  \frac{8(6\pi)^{1/2}|g|Ns_*^2}{\omega(\UV)^{1/2}},\qquad\:\qquad\mbox{if $\UV<\infty$,}
\\
\label{Dbei0}
&\|(D_{\UV,t}(x)^*-\id)(1+\Id\Gamma(\omega))^{-\ve}\|
 \le t^{\ve}\bigg(1+\frac{8(6\pi)^{1/2}|g|Ns_*^2}{(1-2\ve)^{1/2}m_{\bos}^{(1-2\ve)/2}}\bigg).
\end{align}
\end{lem}

\begin{proof}
From \cref{deftnorm,tnormU} with $\ve=\sigma=0$, we know that
\begin{align}\label{bdeasy}
2^{-1/2}\|U^{N,\pm}_{\UV,t}(x)\|_{t/4} \le
	\|U^{N,\pm}_{\UV,t}(x)\|_{t/2} \le (6\pi/m_{\bos})^{1/2}|g|N.
\end{align} 
Hence, \cref{bdFU} directly follows from \cref{normFt}. Similarly, \cref{bddGFU} follows from 
\cref{bdeasy} in conjunction with \cref{tnormwU,normFt,eq:FdG}, 
the latter with $(t/4,t/2)$ put in place of $(\tau,t)$.
Further, \cref{LipFF,bdeasy} imply that the left hand side of 
\eqref{bddiffFFUU} is less than or equal to
$4(\|U^{N,+}_{\UV,\infty,t}(x)\|_{t/2}+\|U^{N,-}_{\UV,\infty,t}(x)\|_{t/2})s_*^2$.
Again employing \eqref{tnormU} with $\ve=0$ and
$(\UV,\infty)$ put in place of $(\sigma,\UV)$, we see that \eqref{bddiffFFUU} holds true as well.
Finally, \eqref{Dbei0} follows from \crefnosort{FFbei0,bdeasy,tnormU} applied with $(\ve,\sigma)$ replaced by $(2\ve,0)$.
\end{proof}


\subsection{Weighted \texorpdfstring{$L^p$}{Lp} to \texorpdfstring{$L^q$}{Lq} bounds}\label{subsec:LpLq}

\noindent
We now want to derive bounds on the maps $T_{\UV,t}$ introduced in \cref{def:semigroup}
seen as operators from $L^p(\RR^{2N},\Fock)$ to $L^q(\RR^{2N},\Fock)$ for any $1< p\le q\le \infty$. 
We also cover Lipschitz-continuous weight functions, in the sense of the norm given below in \cref{1norm2}. 
Similar bounds for the free relativistic semigroup are derived in \cref{app:lplq} and applied in what follows.

As in \cref{app:lplq}, it would also be possible to include the case $p=1$ in the discussion of the families
$(T_{\UV,t})_{t\ge0}$, $\UV\in(0,\infty]$. This would, however, require additional arguments proceeding along the
lines in \cite{MatteMoller.2018} at several places in the article. As we have no actual need to include the case $p=1$
here, we refrained from doing so. (Only considering the case $p=2$ would not lead to any simplifications.)

For a start it makes sense to clarify the following product measurability:

\begin{lem}\label{rem:Wprodmeas}
Let $\UV\in(0,\infty]$ and $t\ge0$. Then $(W_{\UV,t}(x))(\gamma)$ is $\fr{B}(\RR^{2N})\otimes\fr{F}$-measurable 
and separably valued as a $\LO(\Fock)$-valued function of $(x,\gamma)\in\RR^{2N}\times\Omega$.
\end{lem}

\begin{proof}
Recall the definition \cref{defNVx} of the $\PP$-zero sets $\scr{N}_V(x)$ and 
the convention introduced in \cref{sssecKato}. In view of  the latter and
\begin{align*}
\scr{N}_V&\coloneq\big\{(x,\gamma)\in\RR^{2N}\times \Omega\,\big|\,\gamma\in\scr{N}_V(x)\big\}
\\
&=\bigcup_{n=1}^\infty \bigg\{(x,\gamma)\in\RR^{2N}\times \Omega\,\bigg|\:
\max_{\diamond\in\{+,-\}}\int_0^nV_\diamond(X_s^x(\gamma))\Id s=\infty\bigg\}
\in\fr{B}(\RR^{2N})\otimes\fr{F},
\end{align*}
the expression
$\int_0^tV(X_s^x(\gamma))\Id s$ is $\fr{B}(\RR^{2N})\otimes\fr{F}$-measurable in $(x,\gamma)$ for every $t\ge0$.
Analogous remarks apply to $w_{0,\infty,t}^N$ defined in \eqref{def:wNsigmaUV}.
The assertion now follows from the continuity of $h\mapsto F_s(h)$ (see, e.g., \cref{normdiffFt}), as well as from
\cref{remUpmcont,lem:canm,defn:complexphase0UV,defn:complexactioninfty}.
\end{proof}

\begin{rem}\label{remmomentbdW}
For all $p,t>0$ we abbreviate
\begin{align}\label{eq:defmho}
\mho_V(p,t)&\coloneq
\sup_{\UV\in(0,\infty]}\sup_{x\in\RR^{2N}}\EE\bigg[\sup_{s\in[0,t]}\|W_{\UV,s}(x)\|^p\bigg]^{1/p}.
\end{align}
Recall that $W_{\UV,s}(x)$ depends on $V$. Also observe that $\tilde{u}_{\UV,s}^N(x)$,
which appears in the definition \cref{defWUVtx0} of $W_{\UV,s}(x)$ with $\UV<\infty$, can
be replaced by $u_{0,\UV,s}^N(x)$ under the expectation in \cref{eq:defmho} because of \cref{lem:actionito}.
Thus, $\mho_0(p,t)^p$ is less than or equal to the product of $s_*^{2p}$ from \cref{lempwbdD}
with the minimum of the applicable alternative right hand sides 
in the exponential moment bounds of \cref{thmexpmonentu1,thmexpmonentu2}. 
Combining the lemma and the two theorems with H\"{o}lder's inequality and \eqref{V2} (to bound expectations of exponentials of the external potential $V$), we obtain
\begin{align*}
\mho_V(p,t)
&\le \eul^{c_{\theta,p,V_-}(1\vee t)}\mho_0(\theta p,t)^{1/\theta}\le \eul^{(c_{\theta,p,V_-}+c_*)(1\vee t)},
\end{align*}
for all $\theta\in(1,\infty)$.
Here, $c_*\in(0,\infty)$ depends on $p$, $\theta$,
the model parameters $m_{\p}$, $m_{\bos}$, $g$, $N$, the parameter $\alpha$
appearing in \cref{thmexpmonentu1,thmexpmonentu2}, but not on $t$.
Moreover, $c_{\theta,p,V_-}\in[0,\infty)$ depends solely on $p$, $\theta$ and $V_-$. If $V_-$ is bounded, then
we can also choose $\theta=1$ and $c_{\theta,p,V_-}=\|V_-\|_\infty$.
\end{rem}

For $1\le p\le q\le \infty$, we shall denote the norm on the space of all bounded operators from $L^p(\RR^{2N},\Fock)$
to $L^q(\RR^{2N},\Fock)$ by $\|\cdot\|_{p,q}$ in what follows.
In the next proposition we also set
\begin{align}\label{1norm2}
|x|_{1}&\coloneq|x_1|+\dots+|x_N|,\quad x=(x_1,\ldots,x_N)\in\RR^{2N},
\end{align}
where $|\cdot|$ is the Euclidean norm on $\RR^2$.

\begin{prop}\label{prop:LpLq}
In the case $m_{\p}>0$, we assume that $G:\RR^{2N}\to\RR$ satisfies 
\begin{align*}
|G(x)-G(x')|&\le L|x-x'|_1,\quad x,x'\in\RR^{2N},
\end{align*}
with $L\in[0,m_{\p})$. In the case $m_{\p}=0$ we set $G=L=0$. 
If $m_{\p}>L>0$, then we pick some $r>1$ such that $rL<m_{\p}$.
Otherwise we set $r=\infty$ so that $r'=1$.
Let $\UV\in(0,\infty]$, $p\in (1,\infty]$, $q\in[p,\infty]$ and $t>0$. Then the following holds:
\begin{enumerate}
\item[{\rm(i)}] 
The random variable
$\|W_{\UV,t}(x)^*(\eul^{-G}\Psi)(X_t^x)\|_{\Fock}$ is $\PP$-integrable for all $x\in\RR^{2N}$ and $\Psi\in L^p(\RR^{2N},\Fock)$.
\item[{\rm(ii)}] 
$\eul^{G}T_{\UV,t}\eul^{-G}$ maps $L^p(\RR^{2N},\Fock)$ continuously into $L^q(\RR^{2N},\Fock)$. 
\item[{\rm(iii)}] 
There exists $b_*>0$, depending solely on $m_{\p}$, $N$, $L$, $p$, $q$ and $r$, such that
\begin{align*}
\|\eul^GT_{\UV,t}\eul^{-G}\|_{p,q}
&\le b_*\bigg[\frac{1}{t^2}\vee\frac{m_{\p}-L}{t}\bigg]^{N(\frac{1}{p}-\frac{1}{q})}
\eul^{LNt}\mho_V(p'r',t).
\end{align*}
We can actually work with the supremum over $x\in\RR^{2N}$ instead of the essential supremum when $q=\infty$,
	i.e., we can use the convention
	\begin{align*}
	\|\eul^GT_{\UV,t}\eul^{-G}\|_{p,\infty}
	&=\sup_{\|\Psi\|_p\le 1}\sup_{x\in\RR^{2N}}\eul^{G(x)}\|(T_{\UV,t}\eul^{-G}\Psi)(x)\|_{\Fock}.
	\end{align*}
\end{enumerate}
\end{prop}

\begin{proof}
All assertions are direct consequences of \cref{rem:Wprodmeas,remmomentbdW,corwLpLqNW}.
\end{proof}

Our next lemma holds only for finite $\UV$. It is essential for our proof of the continuity results in \cref{thm:cont}.

\begin{lem}\label{prop:LpLqdG}
	Let $\UV\in(0,\infty)$, $p\in(1,\infty]$, $q\in[p,\infty]$ and $0<\tau_1\le\tau_2<\infty$.
	Then $T_{\UV,t}$ with $t>0$ maps
	$L^p(\RR^{2N},\Fock)$ into $L^q(\RR^{2N},\dom(\Id\Gamma(\omega)))$ and
	\begin{align*}
		\sup_{t\in[\tau_1,\tau_2]}\|\Id\Gamma(\omega)T_{\UV,t}\|_{p,q}&<\infty.
	\end{align*}
\end{lem}

\begin{proof}
In view of \cref{trivialbdu,bddGFU,V2} it is clear that 
\begin{align*}
\sup_{t\in[\tau_1,\tau_2]}\sup_{x\in\RR^{2N}}\EE[\|\Id\Gamma(\omega)W_{\UV,t}(x)^*\|^{\theta}]<\infty,
\quad \theta\in[1,\infty).
\end{align*}
In conjunction with \cref{corwLpLqNW} (there we choose $d=2$, $G=L=0$, $r'=1$) this implies all statements.
\end{proof}

\subsection{Convergence properties}\label{ssecconvTUV}

\noindent
Next, we derive the semigroup convergence when removing the ultraviolet cutoff.

\begin{prop}\label{prop:Tconv}
	Let $1<p\le q\le\infty$ and $0< \tau_1\le \tau_2<\infty$. Then
	\begin{align*}
		\sup_{t\in(0,\tau_1]} \|T_{\UV,t}-T_{\infty,t}\|_{p,p}\xrightarrow{\;\;\UV\to\infty\;\;} 0, \quad 
		\sup_{t\in[\tau_1,\tau_2]} \|T_{\UV,t}-T_{\infty,t}\|_{p,q} \xrightarrow{\;\;\UV\to\infty\;\;} 0.
	\end{align*}
	Again we can work with the actual supremum over $x\in\RR^{2N}$ when $q=0$, i.e., 
	in both convergence relations we can use the convention
	\begin{align*}
	\|T_{\UV,t}-T_{\infty,t}\|_{p,\infty}
	&=\sup_{\|\Psi\|_p\le 1}\sup_{x\in\RR^{2N}}\|(T_{\UV,t}\Psi)(x)-(T_{\infty,t}\Psi)(x)\|_{\Fock},\quad t>0.
	\end{align*}
\end{prop}
\begin{proof}
	Let $\theta\in[1,\infty)$.
	By Minkowski's inequality, i.e., the triangle inequality in $L^{\theta}(\PP)$, and \cref{lem:actionito}, we have
	\begin{align*}
		&\EE\bigg[\sup_{s\in[0,t]}\|W_{\UV,s}(x)-W_{\infty,s}(x)\|^{\theta}\bigg]^{1/\theta} 
		\\
		&\le\EE \bigg[\eul^{\int_0^t \theta V_-(X_s^x)\Id s}
		\sup_{s\in[0,t]} \big|\eul^{u_{0,\UV,s}^N(x)}-\eul^{u_{0,\infty,s}^N(x)}\big|^{\theta}
		\|D_{\UV,s}(x)\|^{\theta}  \bigg]^{1/\theta}
		\\
		&\quad  + \EE\bigg[\eul^{\int_0^t \theta V_-(X_s^x)\Id s}\sup_{s\in[0,t]}\eul^{\theta u_{0,\infty,s}^N(x)}
		\|D_{\UV,s}(x)-D_{\infty,s}(x)\|^{\theta}
		\bigg]^{1/\theta},
	\end{align*}
	for all $t>0$ and $x\in\RR^{2N}$.
	Here the first expression can be bounded by H\"{o}lder's inequality together with \cref{bdFU,thm:uconv} as well as \cref{V2}, to bound the factor involving the external potential $V$. 
	To treat the second one we apply \cref{thmexpmonentu1,bddiffFFUU,V2} instead.
	As a result we see that
	\begin{align}\label{locunivconvW}
	\sup_{x\in\RR^{2N}}\EE\bigg[\sup_{s\in[0,t]}\|W_{\UV,s}(x)-W_{\infty,s}(x)\|^{\theta}\bigg]
	\xrightarrow{\;\;\UV\to\infty\;\;}0,\quad t>0.
	\end{align}
	Now the assertion follows from \cref{corwLpLqNW} (again with $d=2$, $G=L=0$, $r'=1$).
\end{proof}

\subsection{Markov and semigroup properties}\label{subsec:Markov}

\noindent
Next, we derive a Markov property of our Feynman--Kac integrands that will entail semigroup properties
for the families $(T_{\UV,t})_{t\ge0}$. The Markov property will be a consequence
of a flow relation we prove first in \cref{lemMark1} below.

For these purposes it is convenient to observe the existence of certain canonical representatives of the 
processes $(W_{\UV,t}(x))_{t\ge0}$. With $D=D([0,\infty),\RR^{2N})$ denoting the set of all c\`adl\`ag functions 
from $[0,\infty)$ to $\RR^{2N}$, these are given by product measurable separably valued maps
\begin{align}\label{canrepW1}
\RR^{2N}\times D\ni(x,\gamma)\longmapsto W_{\UV,t}[x,\gamma]\in\LO(\Fock),
\end{align}
such that, for all $\UV\in(0,\infty]$ and $x\in\RR^{2N}$,
\begin{align}\label{canrepW2}
\text{$(W_{\UV,t}(x))_{t\ge0}$ and $(W_{\UV,t}[x,X_\bullet])_{t\ge0}$ are indistinguishable.}
\end{align}
Here $X_\bullet:\Omega\to D$ is the path map of $X$.

The canonical representatives are helpful for the following reasons: Firstly, for every $t\ge0$,
we obtain time-shifted versions 
$(W_{\UV,s}[x,X_{t+\bullet}-X_t])_{s\ge0}$ of $(W_{\UV,s}(x))_{s\ge0}$ simply by substitution of the path 
map of the time-shifted L\'{e}vy process. Moreover, since $X_\bullet$ and $X_{t+\bullet}-X_t$
have the same law, it is then obvious from \cref{canrepW2} that $W_{\UV,s}(x)$ and
$W_{\UV,s}[x,X_{t+\bullet}-X_t]$ have identical laws as well, for each $s\ge0$. 
Secondly, having the product measurable map \cref{canrepW1}
in whose arguments the independent random variables $X_t^x$ and $X_{t+\bullet}-X_t$ can be inserted,
permits to apply the ``useful rule'' (see, e.g., \cite[(6.8.14)]{HoffmannJorgensen.1994a})
for the computation of the conditional expectation $\EE^{\fr{F}_t}$ in the proof of the Markov property.

Let us give more details on \cref{canrepW1,canrepW2}: On $D$ we introduce the evaluation maps $\ev_t:D\to\RR^{2N}$, 
$\gamma\mapsto\gamma_t$, as well as the associated $\sigma$-algebra $\fr{D}'$ and filtration $(\fr{D}_t')_{t\ge0}$, i.e.,
\begin{align*}
	\fr{D}'&\coloneq \sigma\big(\ev_t :\,t\in[0,\infty)\big),\qquad
	\fr{D}_t'\coloneq \sigma\big(\ev_s:\,s\in[0,t]\big),\quad t\ge0.
\end{align*}
While the choice of our basic c\`{a}dl\`{a}g L\'{e}vy process $X$ is not unique, all possible choices have the same
law on $D$ that we denote by $\PP_{\mathrm{can}}'$. That is, $\PP_{\mathrm{can}}'=\PP\circ X_{\bullet}^{-1}$ 
where the path map $X_\bullet$ is as usual considered as a measurable map from $(\Omega,\fr{F})$ to $(D,\fr{D}')$.
The process $(\ev_t)_{t\ge0}$ on $(D,\fr{D}',\PP_{\mathrm{can}}')$ is the so-called
canonical representative of $X$ with paths in $D$.
Furthermore, we let $(D,\fr{D},(\fr{D}_t)_{t\ge0},\PP_{\mathrm{can}})$ denote the completion of 
the filtered probability space $(D,\fr{D}',(\fr{D}_t')_{t\ge0},\PP_{\mathrm{can}}')$.
Since $(\ev_t)_{t\ge0}$ is L\'{e}vy with c\`{a}dl\`{a}g paths, we know that $(\fr{D}_t)_{t\ge0}$ is
automatically right-continuous; see, e.g., \cite[Theorem~2.1.10]{Applebaum.2009}.
Thus, $(D,\fr{D},(\fr{D}_t)_{t\ge0},\PP_{\mathrm{can}})$ satisfies the usual hypotheses. Again we can write
$\PP_{\mathrm{can}}=\PP\circ X_{\bullet}^{-1}$ provided, however, that
$X_\bullet$ on the right hand side of this relation is now treated as a measurable map from $(\Omega,\fr{F})$ to $(D,\fr{D})$.

Let us now agree on the following notational convention for our stochastic processes 
parametrically depending on $x$: When the data
\begin{align}\label{candata}
(D,\fr{D},(\fr{D}_t)_{t\ge0},\PP_{\mathrm{can}},(\ev_t)_{t\ge0})
\end{align}
is put in place of $(\Omega,\fr{F},(\fr{F}_t)_{t\ge0},\PP,(X_t)_{t\ge0})$ in their constructions,
then we replace the argument $(x)$ of the processes by $[x,\cdot]$. For example,
we denote by $(m_{0,\infty,t}^N[x,\cdot])_{t\ge0}$, $x\in\RR^{2N}$,
the particular versions of the martingales we obtain by applying \cref{lem:canm} to the
data \cref{candata}, and these martingales are used to define $W_{\infty,t}[x,\cdot]$. 

The product-measurability of \cref{canrepW1} for all $\UV\in(0,\infty]$ and $t\ge0$ then follows from \cref{rem:Wprodmeas}
applied to the data \cref{candata}.
Manifestly, we further have transformation rules for the following processes which have all been constructed {\em pathwise}:
\begin{align}\nonumber
U_{\UV,t}^{N,\pm}(x)&=U_{\UV,t}^{N,\pm}[x,X_\bullet],\quad\text{for all $\UV\in(0,\infty]$;}
\\\label{tildeuuXcanindist}
\tilde{u}_{\UV,t}^N(x)&=\tilde{u}_{\UV,t}^N[x,X_\bullet],\quad
W_{\UV,t}(x)=W_{\UV,t}[x,X_\bullet], \quad\text{for finite $\UV\in(0,\infty)$;}
\\\nonumber
w_{0,\infty,t}^N(x)&=w_{0,\infty,t}^N[x,X_\bullet],\quad c_{0,\infty,t}^N(x)=c_{0,\infty,t}^N[x,X_\bullet].
\end{align} 
These relations hold on $\Omega$ and for all $t\ge0$ and $x\in\RR^{2N}$; the fourth one follows from the first three
and a similar relation involving $V$. They prove \cref{canrepW2} for finite $\UV$. When $\UV=\infty$, the complex actions
\begin{align*}
		u_{t}^N[x,\cdot]&\coloneq w_{0,\infty,t}^N[x,\cdot]-c_{0,\infty,t}^N[x,\cdot]+m_{0,\infty,t}^N[x,\cdot],
		\quad t\ge0,\,x\in\RR^{2N},
\end{align*}
contain the stochastic integrals $m_{0,\infty,t}^N[x,\cdot]$, whence \cref{canrepW2} becomes less obvious.

\begin{lem}\label{lem:uuindist}
$(u_t^N(x))_{t\ge0}$ and $(u_t^N[x,X_\bullet])_{t\ge 0}$ are indistinguishable. In particular, \cref{canrepW2}
holds for $\UV=\infty$ as well.
\end{lem}
\begin{proof}
	Denoting expectations with respect to $\PP_{\mathrm{can}}$ by $\EE_{\mathrm{can}}$, the first assertion follows from
	\begin{align*}
		\EE\bigg[\sup_{s\in[0,t]}|u_{s}^N(x)-u_s^N[x,X_\bullet]|^2\bigg]
		&=\lim_{\UV\to\infty}\EE\bigg[\sup_{s\in[0,t]}|\tilde{u}_{\UV,s}^N[x,X_\bullet]-u_s^N[x,X_\bullet]|^2\bigg]
		\\
		&=\lim_{\UV\to\infty}\EE_{\mathrm{can}}\bigg[
		\sup_{s\in[0,t]}|\tilde{u}_{\UV,s}^N[x,\cdot]-u_s^N[x,\cdot]|^2\bigg]
		\\
		&=0,\quad t\ge0.
	\end{align*}
	In the first equality we applied \eqref{convu1} (with $\sigma=0$ and filtered probability space
	$(\Omega,\fr{F},(\fr{F}_t)_{t\ge0},\PP)$) and took into account that $(u_{0,\UV,t}^N(x))_{t\ge0}$ and 
	$(\tilde{u}_{\UV,t}^N[x,X_\bullet])_{t\ge0}$ are indistinguishable for finite $\UV$ by \cref{lem:actionito,tildeuuXcanindist}. 
	The second equality follows from $\PP_{\mathrm{can}}=\PP\circ X_{\bullet}^{-1}$.
	In the last one we again applied \cref{lem:actionito,convu1}, now with filtered probability space
	$(D,\fr{D},(\fr{D}_t)_{t\ge0},\PP_{\mathrm{can}})$.
\end{proof}

We are now in a position to derive the flow relation, arguing similarly as in \cite[Proposition~5.5]{MatteMoller.2018}:

\begin{lem}\label{lemMark1}
Let $\UV\in(0,\infty]$, $t\ge0$ and $x\in\RR^{2N}$. Then, $\PP$-a.s.,
\begin{align}\label{Mark1}
W_{\UV,s}[X_t^x,X_{t+\bullet}-X_t]W_{\UV,t}[x,X_\bullet]&=W_{\UV,s+t}[x,X_\bullet],\quad s\ge0.
\end{align}
\end{lem}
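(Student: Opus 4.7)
The plan is to verify the identity by applying both sides to the total set of exponential vectors $\{\expv{h}:h\in L^2(\RR^2)\}$ (using that $W_{\UV,t}[x,X_\bullet]$ is bounded on $\Fock$ by \cref{lempwbdD}, together with \cref{thmexpmonentu1} for the scalar prefactor). Using the explicit action \eqref{Wexpvec}, matters are reduced to two scalar identities: one for the $\expv{\cdot}$ argument and one for the complex scalar prefactor.

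The crucial pathwise decompositions that do most of the work are obtained by splitting the Bochner integrals in \eqref{defUplusminus} at time $t$ and substituting $r=t+r'$; for each path $\gamma$,
\begin{align*}
U^{N,+}_{0,\UV,t+s}[x,\gamma]&=\eul^{-s\omega}U^{N,+}_{0,\UV,t}[x,\gamma]+U^{N,+}_{0,\UV,s}[x+\gamma_t,\gamma_{t+\bullet}-\gamma_t],
\\
U^{N,-}_{0,\UV,t+s}[x,\gamma]&=U^{N,-}_{0,\UV,t}[x,\gamma]+\eul^{-t\omega}U^{N,-}_{0,\UV,s}[x+\gamma_t,\gamma_{t+\bullet}-\gamma_t],
\end{align*}
which hold pathwise for every $\UV\in(0,\infty]$; the analogous (trivial) split takes care of $\int_0^{t+s}V(X_r^x)\Id r$. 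Plugging these into \eqref{Wexpvec} shows that the $\expv{\cdot}$ arguments on both sides agree identically (noting $\eul^{-s\omega}$ is selfadjoint so that the $h$--linear term in the prefactors matches), and the remaining scalar identity collapses to
\begin{align*}
u^N_{0,\UV,t+s}[x,X_\bullet]-u^N_{0,\UV,t}[x,X_\bullet]-u^N_{0,\UV,s}[X_t^x,X_{t+\bullet}-X_t]=\langle U^{N,+}_{0,\UV,t}[x,X_\bullet]\,|\,U^{N,-}_{0,\UV,s}[X_t^x,X_{t+\bullet}-X_t]\rangle\qquad(\ast)
\end{align*}
(the right-hand side is real, as one checks by a $k\to-k$ substitution using that $\omega,v$ are even).

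For finite $\UV$, I would verify $(\ast)$ purely pathwise from \eqref{defuUV}: splitting $\int_0^{t+s}$ at time $t$, substituting the $U^{N,+}$ decomposition above under the integrand, and using that $v_{\UV,t+r'}^N[x,\gamma]=v_{\UV,r'}^N[x+\gamma_t,\gamma_{t+\bullet}-\gamma_t]$, one obtains two diagonal contributions that form $u^N_{0,\UV,t}[x,X_\bullet]+tNE_{0,\UV}^{\ren}$ and $u^N_{0,\UV,s}[X_t^x,X_{t+\bullet}-X_t]+sNE_{0,\UV}^{\ren}$ respectively; these renormalization constants combine with the $(t+s)NE_{0,\UV}^{\ren}$ from the left-hand side of \eqref{defuUV}, and the single cross term that survives is exactly $\langle U^{N,+}_{0,\UV,t}[x,X_\bullet]|U^{N,-}_{0,\UV,s}[X_t^x,X_{t+\bullet}-X_t]\rangle$.

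For $\UV=\infty$ I would pass to the limit along a subsequence $\UV_n\to\infty$. Combining \cref{thm:uconv} (applied both to the process $X$ and to the L\'{e}vy process $X_{t+\bullet}-X_t$, which has the same law as $X$), with \cref{lem:Upmadapted} and \cref{tnormU} for the $L^2(\RR^2)$-convergence of the $U^{N,\pm}_{0,\UV_n}$, one gets $\PP$-a.s.\ convergence of every term in $(\ast)$ along $\UV_n$, so $(\ast)$ is preserved in the limit. The main obstacle here is consistency: one must verify that $u^N_{0,\infty,s}[X_t^x,X_{t+\bullet}-X_t]$ defined via the canonical Skorohod-space representative of \cref{defncanreprCCaction} really is the a.s.\ limit of the pathwise $u^N_{0,\UV_n,s}[X_t^x,X_{t+\bullet}-X_t]$. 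This is where \cref{thm:canonical} is used, applied on the filtered probability space carrying the shifted process $(X_{t+\bullet}-X_t)$ with its natural filtration, exploiting that this process has law $\PP_{\mathrm{can}}$ on $D$.
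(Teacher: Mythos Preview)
Your proposal is correct and follows essentially the same route as the paper: reduce to exponential vectors via \eqref{Wexpvec}, verify the pathwise splittings of $U^{N,\pm}$ (your two displayed identities are the paper's \eqref{Mark2}--\eqref{Mark3}), and check the scalar identity $(\ast)$ (the paper's \eqref{Mark4}) pathwise from \eqref{defuUV} for finite $\UV$. The only minor differences are that the paper handles the $V$-integral split with slightly more care (the convention that $\int_0^\bullet V(X^x_s)\Id s=0$ on the exceptional set $\scr{N}_V(x)$ must be respected), and for $\UV=\infty$ the paper passes to the limit directly in the operator identity \eqref{Mark1} using the operator-norm convergence \eqref{locunivconvW}, rather than in the scalar identity $(\ast)$; both arguments work and the ingredient you identify (\cref{thm:canonical} applied to the shifted process, exploiting independence of $X_t^x$ and $X_{t+\bullet}-X_t$) is exactly what is needed.
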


\begin{proof}
To start with we consider only finite $\UV$. 
It will be convenient to verify the flow relation in this case by applying the
operator-valued processes to exponential vectors of the form \cref{def:expvec}. 
Repeatedly applying the following analogue of  \eqref{Wexpvec},
\begin{align*}
W_{\UV,\tau}[z,\gamma]\expv{h}
&=\eul^{\tilde{u}_{\UV,\tau}^N[z,\gamma]-\int_0^\tau V(z+\gamma_s)\Id s-\langle U^{N,-}_{\UV,\tau}[z,\gamma]|h\rangle}
\expv{\eul^{-\tau\omega}h-U_{\UV,\tau}^{N,+}[z,\gamma]},
\end{align*}
which holds for all $\tau\ge0$, $z\in\RR^{2N}$ and $\gamma\in D$, we see that the $\PP$-a.s. validity of 
\begin{align*}
W_{\UV,s}[X_t^x,X_{t+\bullet}-X_t]W_{\UV,t}[x,X_\bullet]\expv{h}&=W_{\UV,s+t}[x,X_\bullet]\expv{h},
\quad s\ge0,\,h\in L^2(\RR^2),
\end{align*}
is implied by the $\PP$-a.s. validity for all $s\ge0$ of the four relations
\begin{align}\label{MarkV}
\int_0^{s+t}V(X_r^x)\Id r&=\int_0^sV(z+X_{t+r}-X_t)\Id r\bigg|_{z=X_t^x}+\int_0^tV(X_r^x)\Id r,
\end{align}
as well as
\begin{align}\label{Mark2}
U^{N,-}_{\UV,s+t}[x,X_\bullet]&=\eul^{-t\omega}U_{\UV,s}^{N,-}[X_t^x,X_{t+\bullet}-X_t]+U^{N,-}_{\UV,t}[x,X_\bullet],
\\\label{Mark3}
U^{N,+}_{\UV,s+t}[x,X_\bullet]&=U_{\UV,s}^{N,+}[X_t^x,X_{t+\bullet}-X_t]+\eul^{-s\omega}U^{N,+}_{\UV,t}[x,X_\bullet],
\end{align}
and
\begin{align}\nonumber
\tilde{u}^N_{\UV,s+t}[x,X_\bullet]
&=\tilde{u}_{\UV,s}^N[X_t^x,X_{t+\bullet}-X_t]+\tilde{u}^N_{\UV,t}[x,X_\bullet]
\\\label{Mark4}
&\quad+\langle U_{\UV,s}^{N,-}[X_t^x,X_{t+\bullet}-X_t]|U^{N,+}_{\UV,t}[x,X_\bullet]\rangle.
\end{align}
Recall that the left and rightmost integrals in \eqref{MarkV} are set equal to zero
on the $\PP$-zero set $\scr{N}_V(x)$ introduced in \eqref{defNVx}, and the analogous convention
applies to the expression in the middle for each $z$. However, if $\gamma\in \Omega\setminus\scr{N}_V(x)$,
and $X_t^x(\gamma)=z$, then $V(z+X_{t+\bullet}(\gamma)-X_t(\gamma))$ is locally integrable. This implies that
\eqref{MarkV} holds for all $s\ge0$ at least on  $\Omega\setminus\scr{N}_V(x)$.
Straightforward computations further show that \eqref{Mark2} and \eqref{Mark3}
hold on $\Omega$ for all $s\ge0$ and $0<\UV\le \infty$. Again by direct computations based on \eqref{defuUV}
(virtually identical to the ones in the proof of \cite[Lemma~4.18]{MatteMoller.2018}), we can also verify \eqref{Mark4}
on $\Omega$ for all $s\ge0$ and $\UV\in(0,\infty)$. 
(Here we also use that the scalar product in the second line of \eqref{Mark4} is real.)
Since the set of exponential vectors is total in $\Fock$, 
these remarks prove the assertion for finite~$\UV$.

By approximation we can $\PP$-a.s. extend \eqref{Mark1} to $\UV=\infty$: Since $X_t^x$ and $X_{t+\bullet}-X_t$
are independent, the law of $X_t$ has the probability density $\rho_{0,m_{\p},t}^{\otimes_N}$ defined prior to
\cref{relLpLqN1} and $X_{t+\bullet}-X_t$ has the same law as $X_\bullet$, 
we infer from \cref{canrepW2,rem:Wprodmeas,locunivconvW} that
\begin{align*}
&\EE\bigg[\sup_{s\in[0,\tau]}\|W_{\UV,s}[X_t^x,X_{t+\bullet}-X_t]-W_{\infty,s}[X_t^x,X_{t+\bullet}-X_t]\|^p\bigg]
\\
&=\int_{\RR^{2N}}\EE\bigg[\sup_{s\in[0,\tau]}\|W_{\UV,s}[z,X_{t+\bullet}-X_t]-W_{\infty,s}[z,X_{t+\bullet}-X_t]\|^p\bigg]
\rho_{0,m_{\p},t}^{\otimes_N}(z-x)\Id z
\\
&\xrightarrow{\;\;\UV\to\infty\;\;}0,\quad p>1,\,\tau>0.
\end{align*} 
On account of this convergence relation and \eqref{locunivconvW} we find an increasing sequence of
cutoffs $\UV_n\in(0,\infty)$, $n\in\NN$, with $\UV_n\to\infty$, $n\to\infty$, such that, after substituting 
$\UV=\UV_n$ in \eqref{Mark1}, both sides convergence $\PP$-a.s. in operator norm and locally uniformly in $s\ge0$.
The limiting relation is \eqref{Mark1} with $\UV=\infty$.
\end{proof}

We are now in a position to derive the promised Markov property. We shall denote by $\EE^{\fr F_t}$ the conditional expectation with respect to $\fr F_t$ for $t\ge 0$.

\begin{thm}[Markov property]\label{thm:markov}
Let $\UV\in (0,\infty]$, $p\in(1,\infty]$, $s,t\ge0$, $x\in\RR^{2N}$ and $\Psi\in L^p(\RR^{2N})$. Then, $\PP$-a.s.,
\begin{align}\label{Markprop1}
\EE^{\fr{F}_t}[W_{\UV,s+t}[x,X_\bullet]^*\Psi(X_{s+t}^x)]&=W_{\UV,t}[x,X_\bullet]^*(T_{\UV,s}\Psi)(X_t^x).
\end{align}
\end{thm}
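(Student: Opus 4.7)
The plan is to derive the Markov property directly from the pathwise flow relation of \cref{lemMark1}. Taking Hilbert-space adjoints in that identity $\PP$-a.s.\ yields
\begin{align*}
W_{\UV,s+t}[x,X_\bullet]^* = W_{\UV,t}[x,X_\bullet]^*\,W_{\UV,s}[X_t^x,\,X_{t+\bullet}-X_t]^*,
\end{align*}
while $X_{s+t}^x = X_t^x + (X_{s+t}-X_t)$. Writing $\tilde X_\bullet \coloneq X_{t+\bullet}-X_t$, the argument of the conditional expectation factorises into the $\fr F_t$-measurable operator $W_{\UV,t}[x,X_\bullet]^*$ (cf.\ \cref{rem:Wadapt}) multiplied by a $\Fock$-valued random vector $W_{\UV,s}[X_t^x,\tilde X]^*\Psi(X_t^x+\tilde X_s)$ whose randomness enters solely through the $\fr F_t$-measurable $X_t^x$ and the shifted increments $\tilde X$.

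The $\fr F_t$-measurable factor pulls out of $\EE^{\fr F_t}[\cdot]$, so what remains is to evaluate $\EE^{\fr F_t}[W_{\UV,s}[X_t^x,\tilde X]^*\Psi(X_t^x+\tilde X_s)]$. By the stationary independent-increments property of $X$ (together with right-continuity of the filtration), $\tilde X$ is independent of $\fr F_t$ and has the same law on the Skorohod space $D$ as $X$ itself. I shall invoke the standard freezing rule for conditional expectations: whenever $Y$ is $\fr F_t$-measurable, $\tilde X$ is independent of $\fr F_t$, and $\Phi:\RR^{2N}\times D\to\Fock$ is jointly $\fr B(\RR^{2N})\otimes\fr D$-measurable with adequate integrability, one has $\EE^{\fr F_t}[\Phi(Y,\tilde X)] = g(Y)$ where $g(y)\coloneq\EE[\Phi(y,\tilde X)]$. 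Applying this to $\Phi(y,\gamma)\coloneq W_{\UV,s}[y,\gamma]^*\Psi(y+\gamma_s)$, exploiting that $\tilde X$ has the same law as $X$, and invoking \cref{rem:WWindist}, one arrives at $g(y) = \EE[W_{\UV,s}(y)^*\Psi(X_s^y)] = (T_{\UV,s}\Psi)(y)$, which is precisely the desired right hand side.

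The required prerequisites are essentially all in place. The joint measurability $(y,\gamma)\mapsto W_{\UV,s}[y,\gamma]$ is the content of \cref{rem:Wprodmeas}, while the integrability needed for the conditional expectations to make sense follows from H\"older's inequality, the moment bound \cref{momentbdW}, and the hypothesis $p>1$, along the same lines as the derivation of \cref{prop:LpLq}. The main technical point is to justify the freezing identity for Bochner-valued integrands under $L^p$-type integrability rather than bounded integrands; I intend to handle this by first truncating $\Psi$ and $\|W_{\UV,s}\|$ pointwise so as to reduce to the classical scalar bounded statement applied to each matrix element against exponential vectors (recall that these are total in $\Fock$ by \eqref{def:expvec}), and then passing to the limit via dominated convergence controlled by \cref{remmomentbdW}.
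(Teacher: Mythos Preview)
Your proposal is correct and follows essentially the same route as the paper: the flow relation of \cref{lemMark1}, the independence of increments, the freezing rule for conditional expectations, and a reduction to scalars via inner products with a total set of test vectors. The only cosmetic difference is that the paper, rather than pulling the operator $W_{\UV,t}[x,X_\bullet]^*$ out of the conditional expectation, absorbs $W_{\UV,t}[x,X_\bullet]\phi$ as an additional frozen variable (so the freezing rule is applied once, to a scalar function of an $\RR^{2N}\times\Fock$-valued $\fr F_t$-measurable quantity), thereby sidestepping the separate justification of pulling out an operator-valued factor.
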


\begin{proof}
Let $x\in\RR^{2N}$. By \cref{prop:LpLq}(i) and \cref{canrepW2}, 
$W_{\UV,s+t}[x,X_\bullet]^*\Psi(X_{s+t}^x)$
is $\PP$-integrable, whence the left hand side of \eqref{Markprop1} is well-defined. 
Furthermore, we infer from \cref{rem:Wprodmeas} applied to the data \cref{candata}
that the function $\Theta$ defined by
\begin{align*}
\Theta(x',\phi',\gamma)\coloneq\langle \phi'|W_{\UV,s}[x',\gamma]^*\Psi(x'+\ev_s(\gamma))\rangle_{\Fock},
\quad x'\in\RR^{2N},\,\phi'\in\Fock,\,\gamma\in D,
\end{align*}
is $\fr{B}(\RR^{2N}\times\Fock)\otimes\fr{D}$-measurable.
Pick $x\in\RR^{2N}$ and $\phi\in\Fock$.
Then $(X_t^x,W_{\UV,t}[x,X_\bullet]\phi):\Omega\to\RR^{2N}\times\Fock$ is $\fr{F}_t$-measurable
by \cref{rem:Wadapt,canrepW2},
while the path map $X_{t+\bullet}-X_t:\Omega\to D$ is $\fr{F}_t$-independent.
On account of the first remark in this proof and \cref{lemMark1} we further know that
$\Theta(X_t^x,W_{\UV,t}[x,X_\bullet]\phi,X_{t+\bullet}-X_t)$ is $\PP$-integrable.
The ``useful rule'' for conditional expectations, cf. \cite[(6.8.14)]{HoffmannJorgensen.1994a}, thus implies the first relation in 
\begin{align*}
&\EE^{\fr{F}_t}[\Theta(X_t^x,W_{\UV,t}[x,X_\bullet]\phi,X_{t+\bullet}-X_t)]
\\
&
=\EE[\Theta(x',\phi',X_{t+\bullet}-X_t)]\Big|_{(x',\phi')=(X_t^x,W_{\UV,t}[x,X_\bullet]\phi)}
\\
&=\EE[\Theta(x',\phi',X_{\bullet})]\Big|_{(x',\phi')=(X_t^x,W_{\UV,t}[x,X_\bullet]\phi)},\quad\text{$\PP$-a.s.},
\end{align*} 
where the second equality holds because $X_{t+\bullet}-X_t$ and $X_\bullet$ have the same law.
The resulting identity is equivalent to the second one in 
\begin{align*}
&\langle\phi|\EE^{\fr{F}_t}[W_{\UV,s+t}[x,X_\bullet]^*\Psi(X_{s+t}^x)]\rangle_{\Fock}
\\
&=\EE^{\fr{F}_t}[\langle \phi|W_{\UV,t}[x,X_\bullet]^*W_{\UV,s}[X_t^x,X_{t+\bullet}-X_t]^*\Psi(X_{s+t}^x)\rangle_{\Fock}]
\\
&=\langle\phi|W_{\UV,t}[x,X_\bullet]^*(T_{\UV,s}\Psi)(X_t^x)\rangle_{\Fock},\quad\text{$\PP$-a.s.}
\end{align*}
Here the first equality follows from \eqref{Mark1} and the fact that the $\Fock$-valued conditional expectation
$\EE^{\fr{F}_t}$ can be interchanged with taking the scalar product with~$\phi$.
Applying this for every $\phi$ in a countable dense subset of $\Fock$ we arrive at \eqref{Markprop1}.
\end{proof}

Taking expectations on both sides of \eqref{Markprop1}, using $\EE\EE^{\fr{F}_t}=\EE$ and employing \cref{canrepW2}, 
we obtain the following corollary:

\begin{cor}\label{thm:semigroup}
Let $\UV\in(0,\infty]$ and $p\in(1,\infty]$. 
Then $(T_{\UV,t})_{t\ge0}$ is a semigroup of bounded operators on $L^p(\RR^{2N},\Fock)$
such that $\|T_{\UV,t}\|_{p,p}\le \eul^{c(1+t)}$, $t\ge0$, for some constant $c\in(0,\infty)$ whose dependence
on the model parameters and $p$ can be read off from \cref{remmomentbdW,prop:LpLq}.
\end{cor}

\subsection{Continuity properties}\label{sseccontprop}

\noindent
We end this section by deriving the strong continuity and regularizing property 
of the semigroup in the next two propositions.

\begin{prop}\label{lem:strongcont}
Let $p\in(1,\infty)$ and $\UV\in(0,\infty]$. Then
the semigroup $(T_{\UV,t})_{t\ge0}$ is strongly continuous on $L^p(\RR^{2N},\Fock)$.
\end{prop}

\begin{proof}
Due to the semigroup property and the bound
$\sup_{t\in[0,1]}\|T_{\UV,t}\|_{p,p}<\infty$ from \cref{thm:semigroup}, it suffices to show that
$T_{\UV,t}\Phi\to \Phi$, $t\downarrow0$,
for every $\Phi$ in a dense subset of $L^p(\RR^{2N},\Fock)$. Hence, we assume in the following that $\Phi$ is of the form
$\Phi=(1+\Id\Gamma(\omega))^{-1/4}\Psi$ for some $\Psi\in L^p(\RR^{2N},\Fock)$. 

Minkowski's inequality implies
$\|T_{\UV,t}\Phi-\Phi\|_p\le\mc{N}_1(t)+\mc{N}_2(t)+\mc{N}_3(t)$, for all $t>0$, 
where the norms $\mc{N}_j(t)$ are given by
\begin{align*}
\mc{N}_1(t)^p&\coloneq\int_{\RR^{2N}}\big\|\EE\big[W_{\UV,t}(x)^*(\Phi(X^x_t)-\Phi(x))\big]\big\|_{\Fock}^p\Id x,
\\
\mc{N}_2(t)^p&\coloneq\int_{\RR^{2N}}\big\|\EE\big[\big(\eul^{u_{0,\UV,t}^N(x)-\int_0^tV(X^x_s)\Id s}
-1\big)D_{\UV,t}(x)^*\Phi(x)\big]\big\|_{\Fock}^p\Id x,
\\
\mc{N}_3(t)^p&\coloneq\int_{\RR^{2N}}\big\|\EE\big[(D_{\UV,t}(x)^*-\id)
(1+\Id\Gamma(\omega))^{-1/4}\Psi(x)\big]\big\|_{\Fock}^p\Id x.
\end{align*}
Employing H\"{o}lder's inequality and setting $f(z)\coloneq\|\Phi(\cdot - z)-\Phi\|_p^p$, $z\in\RR^{2N}$, we find
\begin{align*}
\mc{N}_1(t)^p&\le \EE[f(X_t)]\sup_{s\in[0,1]}\sup_{x\in\RR^{2N}}\EE[\|W_{\UV,s}(x)\|^{p'}]^{p/p'},\quad t\in(0,1],
\end{align*}
where the double supremum is finite due to \cref{remmomentbdW}. Furthermore, we know that $f$ is bounded and continuous.
Since the laws of $X_t$, $t>0$, converge weakly to the Dirac measure concentrated in $0$, 
we deduce that $\EE[f(X_t)]\to f(0)=0$ as $t\downarrow0$.

Furthermore, by \eqref{bdFU}, 
\begin{align}\label{strcont1}
\mc{N}_2(t)^p&\le s_*^{2p}\int_{\RR^{2N}}\EE\big[\big|\eul^{u_{0,\UV,t}^N(x)-\int_0^tV(X^x_s)\Id s}
-1\big|\big]^p\|\Phi(x)\|_{\Fock}^p\Id x.
\end{align}
For fixed $x\in\RR^{2N}$, the integrand under the expectation in \eqref{strcont1} goes $\PP$-a.s. to $0$ as $t\downarrow0$
by \cref{cor:pathsactioncont}. It is dominated by
$G_x\coloneq1+\eul^{\int_0^1V_-(X^x_s)\Id s}\sup_{s\in[0,1]}\eul^{u_{0,\UV,s}^N(x)}$
as long as $t\in(0,1]$. The majorant $G_x$
is $\PP$-integrable by Theorem~\ref{thmexpmonentu1} and \cref{V2}, which in fact imply that
$\sup_{x\in\RR^{2N}}\EE[G_x]<\infty$. Applying the dominated convergence theorem for each $x$,
we first see that the expectation in \eqref{strcont1} goes to $0$ as $t\downarrow0$ pointwise in $x\in\RR^{2N}$.
Since $\|\Phi(\cdot)\|_{\Fock}^p$ is integrable, we can invoke dominated convergence a second time to conclude
that $\mc{N}_2(t)\to0$ as $t\downarrow0$. By virtue of \eqref{Dbei0} and the integrability of
$\|\Psi(\cdot)\|_{\Fock}^p$, it is finally clear that  $\mc{N}_3(t)\to0$ as $t\downarrow0$. 
\end{proof}
The proof of the regularizing property is based on a similar one due to Carmona \cite[Proposition~3.3]{Carmona.1979}; 
see also \cite[Theorem~4.1]{BroderixHundertmarkLeschke.2000} and \cite[Theorem~8.1]{Matte.2016}.
\begin{prop}\label{thm:cont}
	Let $p\in(1,\infty]$, $\UV\in(0,\infty]$ and $0<\tau_1\le\tau_2<\infty$. Then the family
	\begin{align*}
		\left\{ T_{\UV,t}\Psi\middle| \Psi\in L^p(\RR^{2N},\Fock),\|\Psi\|_p\le 1,t\in[\tau_1,\tau_2]\right\}
	\end{align*}
	is uniformly equicontinuous on every compact subset of $\RR^{2N}$. Further, if $V\in \mc K_X$, then uniform equicontinuity holds on all of $\RR^{2N}$.
\end{prop}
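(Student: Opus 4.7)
The plan is to combine the $L^p$-to-$L^\infty$ regularization from \cref{prop:LpLq} with continuity in $x$ of the Feynman--Kac integrand $W_{\UV,\tau}(x)$, following the Carmona-type scheme used in \cite{Carmona.1979,BroderixHundertmarkLeschke.2000,Matte.2016}.

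First I would invoke the semigroup property from \cref{thm:semigroup} to factor $T_{\UV,t} = T_{\UV,\tau_1/2}\,T_{\UV, t-\tau_1/2}$ for $t \in [\tau_1, \tau_2]$. By \cref{prop:LpLq}(iii) with $q=\infty$, the image set $\{T_{\UV, t-\tau_1/2}\Psi : \|\Psi\|_p \le 1, t \in [\tau_1,\tau_2]\}$ is uniformly bounded in $L^\infty(\RR^{2N}, \Fock)$ by some constant $C_1$. Hence it suffices to show that for every $C>0$ the family of functions $\{T_{\UV,\tau}\Phi : \tau \in [\tau_1/2, \tau_2],\, \Phi \in L^\infty(\RR^{2N}, \Fock),\, \|\Phi\|_\infty \le C\}$ is uniformly equicontinuous on every compact subset of $\RR^{2N}$ (resp.\ on all of $\RR^{2N}$ when $V \in \mc K_X$).

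Second, for fixed $\tau \in [\tau_1/2, \tau_2]$ and $y \in \RR^{2N}$ I would define the $L^1(\RR^{2N}, \LO(\Fock))$-valued kernel $B_y(z) \coloneq \EE[W_{\UV,\tau}(y)^* \mid X_\tau = z]\,\rho_\tau(z)$, where $\rho_\tau$ is the (smooth, positive, bounded for $\tau \ge \tau_1/2$) density of $X_\tau$. Jensen's inequality for conditional expectation and \cref{remmomentbdW} give $\|B_y\|_{L^1} \le \EE[\|W_{\UV,\tau}(y)\|] \le C_2$, uniformly in $y$, $\tau$ and $\UV$. Writing $(T_{\UV,\tau}\Phi)(x) = \int \Phi(x+z) B_x(z)\,\Id z$ and translating the integration variable in $(T_{\UV,\tau}\Phi)(y)$ yields
\begin{align*}
\|(T_{\UV,\tau}\Phi)(x) - (T_{\UV,\tau}\Phi)(y)\|_\Fock \le \|\Phi\|_\infty \bigl( \|B_x - B_y\|_{L^1} + \|B_y - B_y(\cdot - (y-x))\|_{L^1} \bigr).
\end{align*}

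Third I would estimate both terms. The first is bounded by $\EE[\|W_{\UV,\tau}(x) - W_{\UV,\tau}(y)\|]$ via another application of Jensen's inequality, and tends to $0$ as $y \to x$ locally uniformly for $x$ in any compact set (and uniformly on $\RR^{2N}$ when $V \in \mc K_X$), by combining the continuity in $x$ of $U^{N,\pm}_\tau(x, X_\bullet)$ from \cref{remUpmcont}, the continuous-in-$x$ representative of $u^N_\tau[x,\cdot]$ furnished by \cref{lem:canm} together with \cref{defncanreprCCaction}, the Lipschitz estimate for $F_{\tau/2}$ from \crefnosort{normdiffFt,LipFF}, the continuity of $\int_0^\tau V(X^x_s)\,\Id s$ for Kato-decomposable $V$ (\cref{app:Kato}), H\"older's inequality and the exponential moment bound from \cref{thmexpmonentu1}. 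It follows that $y \mapsto B_y$ is continuous from $\RR^{2N}$ into $L^1(\RR^{2N}, \LO(\Fock))$, so its image $\{B_y : y \in K\}$ over a compact set $K$ is compact in $L^1$. On any compact subset of $L^1$, translations are jointly continuous, yielding $\sup_{y \in K} \|B_y - B_y(\cdot - h)\|_{L^1} \to 0$ as $h \to 0$, which disposes of the second term.

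The main obstacle is establishing the required uniform continuity of $W_{\UV,\tau}(\cdot)$ on compact sets and, under the stronger hypothesis $V \in \mc K_X$, uniformly on $\RR^{2N}$. Control of the potential term $\int_0^\tau V(X^x_s)\,\Id s$ demands the standard tools of Kato-class analysis adapted to the L\'evy process $X$, whereas the remaining ingredients of $W_{\UV,\tau}$ depend on $x$ only through the unitary twists $\eul^{-\ii K \cdot x_j}$ and through the particle differences $x_j - x_\ell$ entering the pair interaction, so their continuity estimates extend uniformly from compact subsets to all of $\RR^{2N}$ once the potential is handled.
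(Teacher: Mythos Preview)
Your approach is different from the paper's and essentially correct for compact $K$, modulo filling in the $L^1(\PP)$-continuity in $x$ of $\int_0^\tau V(X^x_s)\,\Id s$ (standard for Kato-decomposable $V$, but not actually proven in \cref{app:Kato}). Note also that after your factorization only the fixed time $\tau=\tau_1/2$ is needed, not a range $\tau\in[\tau_1/2,\tau_2]$.

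There is, however, a gap in the global case $V\in\mc K_X$. Your control of the translation term $\|B_y - B_y(\cdot - h)\|_{L^1}$ rests on compactness of $\{B_y:y\in K\}$ in $L^1$, which fails for $K=\RR^{2N}$. Your final paragraph argues that the \emph{first} term $\|B_x-B_y\|_{L^1}$ becomes uniformly small (since, apart from the potential, $W$ depends on $x$ only through particle differences and unitary phases), but it offers no replacement argument for the translation term. A uniform-in-$y$ bound on $\|B_y - B_y(\cdot-h)\|_{L^1}$ would require some uniform $L^1$-tightness of the family $\{B_y:y\in\RR^{2N}\}$, which is not obviously available from what you have assembled.

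The paper sidesteps this by a mollifier argument: it first reduces to finite $\UV$ via \cref{prop:Tconv} and then compares $T_{\UV,\tau}$ with the \emph{free} semigroup $P_\tau$ rather than analyzing the kernel of $T_{\UV,\tau}$ directly. Since $P_\tau$ is convolution by a single fixed $L^1$ density, it automatically sends bounded subsets of $L^\infty$ to families that are uniformly equicontinuous on all of $\RR^{2N}$. It then remains to show $\sup_{x\in K}\|(P_\tau-T_{\UV,\tau})\Phi(x)\|_{\Fock}\to 0$ as $\tau\downarrow 0$, uniformly over the relevant $\Phi$'s. Here the paper exploits an ingredient you do not use: for finite $\UV$, \cref{prop:LpLqdG} gives $T_{\UV,t-\tau}\Psi\in L^\infty(\RR^{2N},\dom(\Id\Gamma(\omega)))$ with uniform bounds, and this extra Fock-space regularity, combined with \eqref{Dbei0} and the Kato-class estimate \eqref{eq:boundexpVminus1} (resp.\ \eqref{eq:boundexpVcomplete} when $V\in\mc K_X$), delivers the required smallness---globally in $x$ when $V\in\mc K_X$.
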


\begin{proof}
By the uniform convergence as $\UV\to\infty$ from \cref{prop:Tconv} (recall the last remark in that proposition) 
it suffices to prove the assertion when $\UV$ is finite, which we assume
in the rest of this proof. We write
	\begin{align*}
		(P_t \Psi)(x) = \EE\left[ \Psi(X_t^x) \right] = \int_{\RR^{2N}} \rho_{0,m_\p,t}^{\otimes_N}(x-y)\Psi(y) \Id y
		,\quad \Psi\in L^p(\RR^{2N},\Fock),
	\end{align*}
	where the probability density function $\rho_{0,m_{\p},t}^{\otimes_N}$ is defined as in \cref{app:lplq} with $d=2$. 
	For given $\tau>0$, it is well-known that $P_\tau M$ is uniformly equicontinuous on $\RR^{2N}$ 
	for any bounded family $M\subset L^\infty(\RR^{2N},\Fock)$. In fact,
	\begin{align*}
		\|P_\tau f(x)-P_\tau f(y)\| 
		&\le \|\rho_{0,m_\p,\tau}^{\otimes_N}(x-y-\cdot)-\rho_{0,m_\p,\tau}^{\otimes_N}\|_1\|f\|_\infty,
		\quad x,y\in\RR^{2N},\, f\in M,
	\end{align*}
	and $x\mapsto\|h(x-\cdot)-h\|_1$ is continuous for any $h\in L^1(\RR^{2N})$.
	Hence, since $\|T_{\UV,t}\|_{p,\infty}$ is bounded uniformly in $t\in[\tau_1/2,\tau_2]$ by \cref{prop:LpLq},
	the set
	\begin{align*}
		\left\{ P_\tau T_{\UV,t-\tau}\Psi\middle| \Psi\in L^p(\RR^{2N},\Fock),\|\Psi\|_p\le 1,t\in[\tau_1,\tau_2]\right\}
	\end{align*}
	is uniformly equicontinuous on $\RR^{2N}$ for every $\tau\in(0,\tau_1/2)$. 
	In view of the semigroup property from \cref{thm:semigroup}, 
	it therefore suffices to prove
	\begin{align}\label{eq:uniformconv}
		\lim_{\tau\downarrow0}\sup_{t\in[\tau_1,\tau_2]}\sup_{\|\Psi\|_p\le 1}\sup_{x\in K }
		\|((P_\tau-T_{\UV,\tau})T_{\UV,t-\tau}\Psi)(x)\|_{\Fock} = 0, 
	\end{align}
	where $K\subset \RR^{2N}$ is compact (or $K=\RR^{2N}$ in case $V\in\mc K_X$).
	
	However, by \cref{prop:LpLq,prop:LpLqdG} and since $\UV$ is assumed to be finite,
	\begin{align*}
		\mc M &\coloneq \big\{(\id + \Id\Gamma(\omega))^{1/4}T_{\UV,t-\tau}\Psi\big|
		\,\|\Psi\|_p\le 1,\,t\in[\tau_1,\tau_2],\,\tau\in[0,\tau_1/2]\big\}
	\end{align*}
		is bounded in $L^\infty(\RR^{2N},\Fock)$.
	Further, for $\Phi\in L^\infty(\RR^{2N},\Fock)$, $\tau>0$ and $x\in\RR^{2N}$, 
	\begin{align}\nonumber
		&\big\|\big((P_\tau-T_{\UV,\tau})(\id+\Id \Gamma(\omega))^{-1/4}\Phi\big)(x)\big\|_{\Fock}
		\\\nonumber
		& =
		\left\|\EE\left[ (\id-W_{\UV,\tau}(x)^*) (\id+\Id \Gamma(\omega))^{-1/4}\Phi (X_\tau^x)  \right]\right\|_{\Fock}
		\\ \nonumber
		& \le \EE\left[|1-\eul^{-\int_0^\tau V(X_s^x) \Id x}|\eul^{\tilde{u}_{\UV,\tau}^N(x)}
		\|D_{\UV,\tau}(x)\|\right] \|\Phi\|_\infty
		\\\nonumber
		&\quad +\EE\left[|\eul^{\tilde{u}_{\UV,\tau}^N(x)}-1|
		\|D_{\UV,\tau}(x)\|\right] \|\Phi\|_\infty
		\\\label{eq:uniformconv2}
			& \quad + \EE\left[ \|(\id-D_{\UV,\tau}(x)^*)(\id+\Id\Gamma(\omega))^{-1/4}\|  \right]\|\Phi\|_\infty.
	\end{align}
	Since $\UV<\infty$ by the present assumption, a trivial bound analogous to \eqref{trivialbdu} entails
	$|\tilde{u}_{\UV,\tau}^N(x)|\le2\pi\tau g^2N^2\ln(\omega(\UV)/m_{\bos})+\tau NE_{\UV}^{\ren}$. 
	Therefore, \eqref{eq:uniformconv} follows from the above remark on $\mc{M}$ and
	\eqref{eq:uniformconv2} in conjunction with \eqref{bdFU}, \eqref{Dbei0} and  \cref{eq:boundexpVcomplete,eq:boundexpVminus1}. 
	This concludes the proof.
\end{proof}
%


\section{Upper bound on the minimal energy}\label{sec:upbound}

\noindent
In this last \lcnamecref{sec:upbound}, we prove the upper bound on the minimal energy of the translation invariant Hamiltonian stated in \cref{thm:upbound}.

\subsection{A well-known trial function argument}\label{ssectrial}

In \cref{lemtrialfct} we carry through a trial function argument yielding an upper bound on the
minimal energy for finite $\UV$, which in combination with our exponential moment
bounds on the complex action will allow us to prove \cref{thm:upbound} later on. The procedure we follow in the proof
of  \cref{lemtrialfct}.
is well-known and leads to the minimization of the Pekar functional in the non-relativistic case in three dimensions;
see, e.g., \cite[\textsection8.2]{MatteMoller.2018} and the references given there.
Here we encounter relativistic analogues of Pekar's functional (see the last two displayed relations in the proof
of Lemma~\ref{lemtrialfct}), parametrically depending on $m_{\bos}/g^2N$. 
For our purposes it will, however, be sufficient to simply insert a family of Gaussians into these functionals.

In \cref{ssectrial} we do not require the boson mass to be strictly positive, and we only consider finite $\UV$. 
We pick a core $\scr{D}$ for $\Id\Gamma(1+\omega)$ and set
\begin{align*}
\scr{C}&\coloneq\mathrm{span}\big\{f\phi\,\big|\:f\in\scr{S}(\RR^{2N}),\,\phi\in\scr{D}\},
\\
(H_{\UV}'\Psi)(x)&\coloneq(\psi_X(-\ii\nabla)\Psi)(x)+\Id\Gamma(\omega)\Psi(x)
+\sum_{j=1}^N\vp(\eul^{-\ii K\cdot x_j}v_\UV)\Psi(x),\quad \Psi\in\scr{C},
\end{align*}
with $\scr{S}(\RR^{2N})$ denoting the set of Schwartz functions on $\RR^{2N}$.
Again $\psi_X(-\ii\nabla)$ is defined via Fourier tranformation analogously to \eqref{def:psiXnabla}.
Since $\omega^{-1/2}v_\UV\notin L^2(\RR^2)$ for $m_{\bos}=0$, we cannot rely on the
Kato-Rellich theorem to ensure essential selfadjointness of $H_{\UV}'$ in this case.
We can, however, still apply Nelson's commutator theorem: 

\begin{lem}\label{lemesaNC}
If $m_{\bos}\ge0$ and $\UV\in(0,\infty)$, then $H_{\UV}'$ is essentially selfadjoint on $\scr{C}$.
\end{lem}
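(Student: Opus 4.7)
My plan is to apply Nelson's commutator theorem with the comparison operator $N \coloneq \psi_X(-\ii\nabla) + \Id\Gamma(\omega_1) + \id$, where $\omega_1(k) \coloneq 1 \vee \omega(k)$. First I would verify that $N$ is selfadjoint with operator core $\scr{C}$ and satisfies $N \ge \id$: the Fourier multiplier $\psi_X(-\ii\nabla) \ge 0$ is essentially selfadjoint on $\scr{S}(\RR^{2N})$, while $\Id\Gamma(\omega_1) \ge \id$ has $\scr{D}$ as a core by hypothesis; since these two positive selfadjoint operators act on independent tensor factors of $L^2(\RR^{2N}) \otimes \Fock$, their sum is essentially selfadjoint on the algebraic tensor product $\scr{C}$.

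Next I would establish the relative bound $\pm H_\UV' \le cN$ on $\scr{C}$. The kinetic and radiation energy terms already contribute nonnegatively to $N$, so only the interaction $\sum_{j=1}^N \vp(\eul^{-\ii K\cdot x_j}v_\UV)$ requires attention. The standard relative field-operator bound
\[
\|\vp(\eul^{-\ii K\cdot x_j}v_\UV)\phi\| \le 2\|\omega_1^{-1/2}v_\UV\|_2\,\|\Id\Gamma(\omega_1)^{1/2}\phi\| + \|v_\UV\|_2\|\phi\|
\]
applies uniformly in $x\in\RR^{2N}$. The key point is that $\omega_1^{-1/2}v_\UV \in L^2(\RR^2)$ even when $m_{\bos}=0$: the density $|v_\UV|^2/\omega_1 = g^2\chi_{B_\UV}/(\omega(1\vee\omega))$ behaves like $g^2/|k|$ near the origin (integrable in two dimensions) and like $g^2/|k|^2$ on the bounded annulus $B_\UV \setminus B_1$, giving the desired form bound for the interaction.

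Then I would bound the commutator form $\pm \ii[H_\UV', N] \le c'N$ on $\scr{C}$. The free parts of $H_\UV'$ commute with $N$, because $\psi_X(-\ii\nabla)$ and $\Id\Gamma(\omega_1)$ act on separate tensor factors while $\Id\Gamma(\omega)$ and $\Id\Gamma(\omega_1)$ commute on $\Fock$, so only the commutators of the interaction term survive. Using the Fourier identity
\[
[\psi(-\ii\nabla_{x_j}), \eul^{-\ii k\cdot x_j}] = \eul^{-\ii k\cdot x_j}\bigl(\psi(-\ii\nabla_{x_j} - k) - \psi(-\ii\nabla_{x_j})\bigr)
\]
together with the Lipschitz bound $|\psi(\eta) - \psi(\eta - k)| \le |k| \le \UV$ valid for $k \in B_\UV$, the commutator with the particle kinetic term rewrites as a field-type operator whose effective coupling is $L^2$-bounded, uniformly in $x$, by a constant times $\UV\|v_\UV\|_2$. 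For the commutator with $\Id\Gamma(\omega_1)$, the canonical commutation relations give a field operator with coupling $\omega_1 \eul^{-\ii K\cdot x_j}v_\UV$, whose $L^2$-norm is finite because $v_\UV$ has compact support: $\|\omega_1 v_\UV\|_2^2 \le (1+\UV^2)g^2\int_{B_\UV}\Id k/\omega(k) < \infty$ (again integrable at the origin in two dimensions). Applying once more the relative field-operator bound, both commutators are controlled by $c(\Id\Gamma(\omega_1)^{1/2} + \id) \le c'N$ as sesquilinear forms on $\scr{C}$.

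With these two inequalities in hand, Nelson's commutator theorem yields essential selfadjointness of $H_\UV'$ on $\scr{C}$. The hardest step will be the commutator $[\psi_X(-\ii\nabla), \vp(\eul^{-\ii K\cdot x_j}v_\UV)]$, where the nonlocal pseudodifferential operator $\psi_X(-\ii\nabla)$ must be interchanged with an $x_j$-dependent Fock operator; handling this cleanly requires passing to momentum space to extract the Lipschitz increment of $\psi$ and absorbing the resulting shift into a field operator with controlled coupling. The finiteness of the cutoff $\UV$ is essential here: it is what turns the singular infrared behavior of $v$ into an integrable singularity of $v_\UV$ after multiplication by $\omega_1$ or by $|k|$.
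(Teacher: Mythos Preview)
Your approach is correct and is a legitimate application of Nelson's commutator theorem, but it differs from the paper's in one significant way. You take as comparison operator the \emph{free} operator $N=\psi_X(-\ii\nabla)+\Id\Gamma(\omega_1)+\id$, which forces you to confront the commutator $[\psi_X(-\ii\nabla),\vp(\eul^{-\ii K\cdot x_j}v_\UV)]$. As you anticipate, this object is not literally a field operator: it carries the multiplier $\psi(\xi_j-k)-\psi(\xi_j)$ depending on \emph{both} the particle momentum $\xi_j$ and the boson momentum $k$. Your Lipschitz bound $|\psi(\xi_j-k)-\psi(\xi_j)|\le|k|\le\UV$ together with a Cauchy--Schwarz estimate in $k$ does yield the required form bound against $N$, so the argument closes, but it is the genuinely laborious step. (A minor point: Nelson's theorem asks for the operator bound $\|H_\UV'\Psi\|\le c\|N\Psi\|$, not just the form bound $\pm H_\UV'\le cN$ you state; the upgrade is easy here since the constituents of $N$ commute.)

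The paper sidesteps this entirely by a neat trick: it takes the comparison operator to be $K'=H_\UV'+\Id\Gamma((1\vee\omega)-\omega)+C$, i.e.\ it \emph{includes the interaction} in $K'$. Then $[H_\UV',K']=[H_\UV',\Id\Gamma(\chi_{B_1}(1-\omega))]$, and only the field operators contribute to this commutator, giving simply $\vp(\ii\chi_{B_1}(1-\omega)\eul^{-\ii K\cdot x_j}v_\UV)$ with a bounded, compactly supported coupling. No particle--boson momentum mixing ever appears. The cost is that essential selfadjointness of $K'$ on $\scr{C}$ must be checked separately, but this follows from Kato--Rellich since $\omega_1^{-1/2}v_\UV\in L^2$ even for $m_{\bos}=0$. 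Your route is more hands-on and perhaps more transparent about where the cutoff $\UV$ is used; the paper's is shorter.
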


\begin{proof}
Since $\|\vp(h)\phi\|\le 2\|h\|\|(1+\Id\Gamma(1+\omega))^{1/2}\phi\|$, $h\in L^2(\RR^2)$, $\phi\in\scr{D}$,
well-known and simple estimations reveal that, for sufficiently large $C\in(0,\infty)$, the operator
$K'\coloneq H_{\UV}'+\Id\Gamma(1)+C$
is essentially selfadjoint on $\dom(K')\coloneq\scr{C}$ by the Kato-Rellich theorem, 
its selfadjoint closure is lower bounded by $1$ and
$\|H_{\UV}'\Psi\|\le c\|K'\Psi\|$, $\Psi\in\scr{C}$, for some $c\in(0,\infty)$. Furthermore,
\begin{align*}
2\Im \langle H_{\UV}'\Psi|K'\Psi\rangle&=\sum_{j=1}^N\int_{\RR^{2N}}
\langle\Psi(x)|\vp(\ii\eul^{-\ii K\cdot x_j}v_{\UV})\Psi(x)\rangle_{\Fock}\Id x,
\end{align*}
by the commutation relation between field operators and $\Id\Gamma(1)$;
see, e.g., \cite{Arai.2018}. This implies $|\Im \langle H_{\UV}'\Psi|K'\Psi\rangle|\le d\langle\Psi|K'\Psi\rangle$,
$\Psi\in\scr{C}$, for some $d\in(0,\infty)$.
 The assertion now follows from Nelson's commutator theorem
\cite[Theorem~X.37]{ReedSimon.1975} applied with the selfadjoint closure of $K'$ as test operator.
\end{proof}

In what follows we shall denote the unique selfadjoint extension of $H_\UV'$ again by the same symbol.

\begin{lem}\label{lemtrialfct}
Let $m_{\bos}\ge0$, $0\le \sigma<\UV<\infty$ and $s>0$.
Assume in addition that $\sigma>0$ when $m_{\bos}=0$. Then
\begin{align}\label{ubGauss1}
\inf\sigma(H_{\UV}')
&\le \frac{\pi ^{1/2}s^{1/2}g^2N^2}{2^{1/2}}
- g^2N^2\int_{B_{\UV}\setminus B_{\sigma}}
\frac{\eul^{-|k|^2/4sg^4N^2}}{|k|^2+m_{\bos}^2}\Id k.
\end{align}
\end{lem}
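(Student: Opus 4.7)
My plan is a standard variational argument using a product trial state $\Psi_{f,\alpha}=f\otimes\phi_\alpha$, where $f\in\scr{S}(\RR^{2N})$ is $L^2$-normalized and $\phi_\alpha\coloneq\eul^{-\|\alpha\|^2/2}\expv{\alpha}\in\Fock$ is the normalized coherent state associated with some $\alpha\in L^2(\RR^2)$ that I shall take to be supported in $\overline{B_\UV}\setminus B_\sigma$. The Schwartz decay of $f$ together with the compact support of $\alpha$ ensures that $\Psi_{f,\alpha}$ lies in the form domain of $H_\UV'$; combined with the essential selfadjointness from \cref{lemesaNC}, the variational principle gives
\begin{align*}
\inf\sigma(H_\UV')\le\langle\Psi_{f,\alpha}|H_\UV'\Psi_{f,\alpha}\rangle.
\end{align*}

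Next I would compute the energy expectation using the standard coherent-state identities $\langle\phi_\alpha|\Id\Gamma(\omega)\phi_\alpha\rangle=\langle\alpha|\omega\alpha\rangle$ and $\langle\phi_\alpha|\vp(h)\phi_\alpha\rangle=2\Re\langle\alpha|h\rangle$. Abbreviating
\begin{align*}
\Sigma_f(k)\coloneq\sum_{j=1}^N\int_{\RR^{2N}}\eul^{-\ii k\cdot x_j}|f(x)|^2\Id x,
\end{align*}
a direct computation (using Fubini in the interaction term) yields
\begin{align*}
\langle\Psi_{f,\alpha}|H_\UV'\Psi_{f,\alpha}\rangle=\langle f|\psi_X(-\ii\nabla)f\rangle+\langle\alpha|\omega\alpha\rangle+2\Re\langle\alpha|v_\UV\Sigma_f\rangle.
\end{align*}
Because $\omega^{-1}$ is bounded on $\overline{B_\UV}\setminus B_\sigma$ under the hypothesis on $\sigma$ and $m_\bos$, I can complete the square in $\alpha$ within $\chi_{\overline{B_\UV}\setminus B_\sigma}L^2(\RR^2)$; the optimum at $\alpha=-\omega^{-1}\chi_{B_\UV\setminus B_\sigma}v\Sigma_f$ produces
\begin{align*}
\inf\sigma(H_\UV')\le\langle f|\psi_X(-\ii\nabla)f\rangle-\int_{B_\UV\setminus B_\sigma}\frac{|v(k)|^2|\Sigma_f(k)|^2}{\omega(k)}\Id k.
\end{align*}

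The final step is to insert a Gaussian profile $f(x)=\prod_{j=1}^Ng(x_j)$ with $g(y)=(2t/\pi)^{1/2}\eul^{-t|y|^2}$ and the scale choice $t\coloneq sg^4N^2$. A direct Gaussian integration gives $\Sigma_f(k)=N\eul^{-|k|^2/(8t)}$, so that $|v(k)|^2|\Sigma_f(k)|^2/\omega(k)=g^2N^2\eul^{-|k|^2/(4sg^4N^2)}/(|k|^2+m_\bos^2)$, matching the negative term in \eqref{ubGauss1}. For the kinetic contribution I would use $\psi(\xi)\le|\xi|$ and the explicit Fourier transform $\hat g(\xi)=(2\pi/t)^{1/2}\eul^{-|\xi|^2/(4t)}$ to reduce the expectation of $\psi(-\ii\nabla)$ on $g$ to the radial integral $\int_0^\infty r^2\eul^{-r^2/(2t)}\Id r=\sqrt{\pi}(2t)^{3/2}/4$, which yields $\int\overline{g}(y)\psi(-\ii\nabla)g(y)\Id y\le\pi^{1/2}t^{1/2}/2^{1/2}$; summing this bound over the $N$ particles and substituting $t^{1/2}=s^{1/2}g^2N$ produces the kinetic term $\pi^{1/2}s^{1/2}g^2N^2/2^{1/2}$ from \eqref{ubGauss1}. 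No step is deep; the only mild care concerns the admissibility of $\Psi_{f,\alpha}$ as a form-domain trial vector, which is routine since $\phi_\alpha\in\fdom(\Id\Gamma(\omega))$ whenever $\alpha$ has compact support and the interaction is relatively form-bounded by $\Id\Gamma(\omega)^{1/2}$.
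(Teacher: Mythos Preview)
Your proof is correct and follows essentially the same coherent-state trial-function argument as the paper; the only cosmetic difference is that the paper first writes the variational bound for a general normalized $f$, then performs a scaling $f\mapsto f_a$ with $a=g^2N$ before inserting a Gaussian, whereas you fold the scaling directly into the Gaussian via $t=sg^4N^2$. One small slip: your formula for $\hat g(\xi)$ should read $(2\pi t)^{-1/2}\eul^{-|\xi|^2/(4t)}$ rather than $(2\pi/t)^{1/2}\eul^{-|\xi|^2/(4t)}$, but since you correctly arrive at $\int|\xi||\hat g(\xi)|^2\Id\xi=\pi^{1/2}t^{1/2}/2^{1/2}$ this does not affect the argument.
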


\begin{proof}
Let $f:\RR^2\to\CC$ be a Schwartz function, normalized in $L^2(\RR^2)$, and let $h\in L^2(\RR^2)$
be such that $\supp(h)\subset B_{\UV}$. We set (recall the definition \eqref{def:expvec} of $\expv{h}$)
\begin{align*}
f_N(x)&\coloneq \prod_{j=1}^Nf(x_j),\quad x\in\RR^{2N},\quad\text{and}\quad
\zeta(h)\coloneq\eul^{-\|h\|^2/2}\expv{h},
\end{align*}
so that $f_N$ and $\zeta(h)$ are normalized in $L^2(\RR^{2N})$ and $\Fock$, respectively.
Direct computation shows that
\begin{align*}
&\langle f_N\zeta(h)|H_{\UV}' f_N\zeta(h)\rangle
\\
&=N\int_{\RR^2}\psi(\eta)|\hat{f}(\eta)|^2\Id\eta+\|\omega^{1/2}h\|^2
+2N\int_{\RR^2}\Re\langle\eul^{-\ii K\cdot y}v_\UV|h\rangle |f(y)|^2\Id y.
\end{align*}
Setting $\rho\coloneq|f|^2$ and choosing 
$h = -2\pi gN\omega^{-3/2}\chi_{B_\UV\setminus B_\sigma}\wh{\rho}$, we obtain
\begin{align*}
\inf\sigma(H_{\UV}')&\le N\int_{\RR^2}\psi(\eta)|\hat{f}(\eta)|^2\Id\eta
-g^2N^2\int_{B_\UV\setminus B_\sigma}\frac{|2\pi\wh{\rho}(k)|^2}{\omega(k)^2}\Id k.
\end{align*}
Replacing $f$ by its unitarily scaled version $f_a(y)\coloneq af(ay)$, $y\in\RR^2$, $a>0$, writing
$\rho_a\coloneq |f_a|^2$ and taking $\wh{f_a}(\eta)=\hat{f}(\eta/a)/a$ and $\wh{\rho_a}(k)=\wh{\rho}(k/a)$
into account, we deduce that
\begin{align*}
\inf\sigma(H_{\UV}')
&\le N\int_{\RR^2}\psi(a\eta)|\hat{f}(\eta)|^2\Id\eta
-g^2N^2\int_{B_{\UV}\setminus B_{\sigma}}\frac{|2\pi\wh{\rho}(k/a)|^2}{|k|^2+m_{\bos}^2}\Id k.
\end{align*}
Estimating $\psi(a\eta)\le a|\eta|$ and choosing $a=g^2N$ we find
\begin{align*}
\inf\sigma(H_{\UV}')&\le g^2N^2\int_{\RR^2}|\eta||\hat{f}(\eta)|^2\Id\eta
-g^2N^2\int_{B_{\UV}\setminus B_{\sigma}}\frac{|2\pi\wh{\rho}(k/g^2N)|^2}{|k|^2+m_{\bos}^2}\Id k.
\end{align*}
Since $2\pi\wh{\rho}=\hat{f}*\ol{\hat{f}(-\,\cdot\,)}$, it is convenient to choose
$\hat{f}(k)\coloneq (2\pi s)^{-1/2}\eul^{-|k|^2/4s}$, a Gaussian that is normalized in $L^2(\RR^2)$.
Then $2\pi\wh{\rho}(k)=(\hat{f}*\hat{f})(k)=\eul^{-|k|^2/8s}$. Taking also 
$\int_{\RR^2}|\eta|\eul^{-|\eta|^2/2s}\Id\eta/2\pi s=(\pi s/2)^{1/2}$ 
into account we arrive at \eqref{ubGauss1}.
\end{proof}

\begin{cor}\label{corubGauss1}
If $m_{\bos}=0$ and $\UV\in(0,\infty)$, then $H_{\UV}'$ is unbounded from below.
\end{cor}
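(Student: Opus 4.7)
The plan is to read \cref{corubGauss1} directly off of the upper bound \eqref{ubGauss1} by exploiting the two-dimensional infrared divergence of $1/|k|^2$. With $m_{\bos}=0$ and $\UV\in(0,\infty)$ fixed, \cref{lemtrialfct} applies for every $\sigma\in(0,\UV)$ and every $s>0$, giving
\begin{align*}
\inf\sigma(H_{\UV}')
&\le \frac{\pi^{1/2}s^{1/2}g^2N^2}{2^{1/2}}
- g^2N^2\int_{B_{\UV}\setminus B_{\sigma}}\frac{\eul^{-|k|^2/4sg^4N^2}}{|k|^2}\,\Id k.
\end{align*}
Fix an arbitrary $s>0$, so that the first term on the right-hand side is a finite constant independent of $\sigma$. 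Computing the remaining integral in polar coordinates gives
\begin{align*}
\int_{B_{\UV}\setminus B_{\sigma}}\frac{\eul^{-|k|^2/4sg^4N^2}}{|k|^2}\,\Id k
&= 2\pi\int_{\sigma}^{\UV}\frac{\eul^{-r^2/4sg^4N^2}}{r}\,\Id r
\ge 2\pi \eul^{-\UV^2/4sg^4N^2}\ln(\UV/\sigma),
\end{align*}
which tends to $+\infty$ as $\sigma\downarrow 0$ because $1/r$ is not integrable near zero.

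Since the right-hand side of the displayed upper bound therefore tends to $-\infty$ as $\sigma\downarrow 0$, one concludes $\inf\sigma(H_{\UV}')=-\infty$, i.e.\ $H_{\UV}'$ is unbounded from below. There is no real obstacle: the whole content of \cref{corubGauss1} is the two-dimensional infrared singularity of $\omega^{-2}$ at $m_{\bos}=0$, which is precisely what prevents an energy renormalization in the massless case and motivates the standing assumption $m_{\bos}>0$ throughout the rest of the paper.
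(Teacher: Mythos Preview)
Your proof is correct and follows exactly the same approach as the paper, which simply notes that the right-hand side of \eqref{ubGauss1} diverges to $-\infty$ as $\sigma\downarrow 0$ when $m_{\bos}=0$. You have merely spelled out the polar-coordinate computation making the logarithmic infrared divergence explicit.
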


\begin{proof}
In the case $m_{\bos}=0$, the right hand side of \eqref{ubGauss1} goes to $-\infty$ as $\sigma\downarrow0$.
\end{proof}

\subsection{Derivation of the upper bound}
We now add the renormalization energy to both sides of \cref{ubGauss1}
and derive bounds on the so-obtained right hand side. The  estimates found in this way
then allow us to prove \cref{thm:upbound}. 
Throughout the remainder of this section, we will again work under the assumption $m_\bos>0$.
\begin{cor}\label{corubGauss2}
Assume that $g^2N>m_{\bos}>0$ and choose $\UV_{g,N}>0$ such that $\UV_{g,N}^2+m_{\bos}^2=g^4N^2$. 
Let $s>0$. Then, for every $\ve\in(0,1)$, there exists $c_*>1$, depending only on $\ve$, $m_{\p}$ and $m_{\bos}\vee1$,
such that
\begin{align}\nonumber
\inf\sigma(H_{\UV_{g,N}}'+NE_{\UV_{g,N}}^{\ren})
&\le \frac{\pi ^{1/2}s^{1/2}g^2N^2}{2^{1/2}}-\frac{2\pi}{\eul^{1/4s}}g^2N(N-1)\ln\bigg(\frac{g^2N}{m_{\bos}}\bigg)
\\\nonumber
&\quad-\frac{\pi g^2N}{\eul^{1/4s}}\cdot\frac{2-2\ve}{2-\ve}\cdot[0\vee\ln(c_*^{-1}g^2N)]+\frac{\pi}{4s}g^2N
\\\label{ubGauss2}
&\quad-\chi_{\{0\}}(m_{\p})\frac{\pi g^2N}{\eul^{1/4s}}\ln\bigg(\frac{1\wedge(g^2N)}{\eul m_{\bos}}\bigg).
\end{align}
\end{cor}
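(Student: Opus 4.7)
The plan is to apply \cref{lemtrialfct} with $\sigma=0$ and $\UV=\UV_{g,N}$ (legitimate since $g^2N>m_{\bos}$ makes $\UV_{g,N}>0$), then add the energy counter term $NE^{\ren}_{\UV_{g,N}}$ to both sides. The task reduces to bounding above
\begin{align*}
\mc{R}\coloneq NE^{\ren}_{\UV_{g,N}}-g^2N^2 I_1,\qquad I_1\coloneq \int_{B_{\UV_{g,N}}}\frac{\eul^{-|k|^2/4sg^4N^2}}{\omega(k)^2}\,\Id k,
\end{align*}
by the sum of the last three or four (depending on whether $m_{\p}=0$) terms on the right hand side of \cref{ubGauss2}. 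I split $-g^2N^2I_1=-g^2N(N-1)I_1-g^2NI_1$. For the $(N-1)$ piece, the key observation is that $|k|^2\le\UV_{g,N}^2\le g^4N^2$ on $B_{\UV_{g,N}}$, so $\eul^{-|k|^2/4sg^4N^2}\ge\eul^{-1/4s}$; combined with the explicit computation $\int_{B_{\UV_{g,N}}}\Id k/\omega^2=\pi\ln(1+\UV_{g,N}^2/m_{\bos}^2)=2\pi\ln(g^2N/m_{\bos})$, this produces exactly the term $-\frac{2\pi g^2N(N-1)\ln(g^2N/m_{\bos})}{\eul^{1/4s}}$.

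For the remaining piece $-g^2NI_1+NE^{\ren}_{\UV_{g,N}}$, I apply the algebraic identity
\begin{align*}
\frac{1}{\omega(\omega+\psi)}-\frac{\eul^{-|k|^2/4sg^4N^2}}{\omega^2}=\frac{1-\eul^{-|k|^2/4sg^4N^2}}{\omega(\omega+\psi)}-\eul^{-|k|^2/4sg^4N^2}\cdot\frac{\psi}{\omega^2(\omega+\psi)}.
\end{align*}
The first summand is bounded above by $\pi g^2N/(4s)$ via $1-\eul^{-x}\le x$, $|k|^2\le\omega^2$, $\omega+\psi\ge\omega$ and $\mathrm{vol}(B_{\UV_{g,N}})\le\pi g^4N^2$; this produces the $+\pi g^2N/(4s)$ error term. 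For the second summand I lower bound $\eul^{-|k|^2/4sg^4N^2}\ge\eul^{-1/4s}$ and then lower bound the resulting integral. Choosing $c_*>1$, depending only on $\ve$, $m_{\p}$ and $m_{\bos}\vee 1$, so that $\psi(k)\ge(1-\ve)\omega(k)$ for all $|k|\ge c_*$—which is verified by squaring $\sqrt{|k|^2+m_{\p}^2}-m_{\p}\ge(1-\ve)\sqrt{|k|^2+m_{\bos}^2}$ and noting that the leading-order inequality $\ve|k|^2\gtrsim m_{\p}|k|$ holds for $|k|$ large in terms of $\ve$, $m_{\p}$, $m_{\bos}$—one obtains $\psi/(\omega+\psi)\ge(1-\ve)/(2-\ve)$ on $B_{\UV_{g,N}}\setminus B_{c_*}$. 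Integration then yields
\begin{align*}
\int_{B_{\UV_{g,N}}}\frac{\psi}{\omega^2(\omega+\psi)}\,\Id k\ge\frac{2\pi(1-\ve)}{2-\ve}\bigl[0\vee\ln(c_*^{-1}g^2N)\bigr],
\end{align*}
which, upon multiplication by $-g^2N\eul^{-1/4s}$, produces precisely the $-\frac{\pi g^2N}{\eul^{1/4s}}\cdot\frac{2-2\ve}{2-\ve}[0\vee\ln(c_*^{-1}g^2N)]$ term.

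For the case $m_{\p}=0$, the additional contribution $-\chi_{\{0\}}(m_{\p})\frac{\pi g^2N}{\eul^{1/4s}}\ln\frac{1\wedge(g^2N)}{\eul m_{\bos}}$ is extracted by refining the lower bound to include the inner ball $B_{1\wedge\UV_{g,N}}$ that was discarded above. Since $\psi(k)=|k|$ in this case, the identity $\omega-|k|=m_{\bos}^2/(\omega+|k|)$ yields $\frac{|k|}{\omega^2(\omega+|k|)}=\frac{|k|(\omega-|k|)}{\omega^2m_{\bos}^2}$, which is elementarily integrable; after explicit evaluation, simplification using $(R+\sqrt{R^2+m_{\bos}^2})(\sqrt{R^2+m_{\bos}^2}-R)=m_{\bos}^2$ and careful tracking of the logarithmic leading order as $m_{\bos}\downarrow0$, one extracts the claimed $\pi\ln\bigl(\tfrac{1\wedge\UV_{g,N}}{\eul m_{\bos}}\bigr)$ contribution. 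The main obstacle will be bookkeeping: verifying that $c_*$ can be chosen uniformly to depend only on the parameters stated and simultaneously deliver the delicate $(2-2\ve)/(2-\ve)$ prefactor, and matching the constant in the $m_{\p}=0$ calculation against the precise form $\ln(1\wedge g^2N)/(\eul m_{\bos})$ rather than the naive asymptotic $\ln(1/(2m_{\bos}))$ that the closed form first produces.
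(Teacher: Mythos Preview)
Your approach is essentially identical to the paper's: apply \cref{lemtrialfct} with $\sigma=0$, split off $N(N-1)$ copies using $\eul^{-|k|^2/4sg^4N^2}\ge\eul^{-1/4s}$ on $B_{\UV_{g,N}}$, rewrite the remaining $N$ copies together with $NE^{\ren}_{\UV_{g,N}}$ via the same algebraic identity, and pick a threshold radius (your $c_*$, the paper's $R_*$) where $\psi\ge(1-\ve)\omega$. The only deviation is in the $m_{\p}=0$ endgame, where the paper avoids your exact identity and instead uses the simpler inequality $|k|\ge\omega(k)-m_{\bos}$ to write $\int |k|\omega^{-3}\ge\int\omega^{-2}-m_{\bos}\int\omega^{-3}$, which makes the bookkeeping you flag (extracting exactly $\ln\bigl((1\wedge g^2N)/(\eul m_{\bos})\bigr)$) essentially immediate; note also that the annular integral gives $\pi\ln\bigl(g^4N^2/(c_*^2+m_{\bos}^2)\bigr)$ rather than $2\pi\ln(g^2N/c_*)$, so you must absorb $m_{\bos}$ into $c_*$ (legitimate since $c_*$ depends on $m_{\bos}\vee1$), exactly as the paper does.
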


\begin{proof}
We shall apply \eqref{ubGauss1} with $\sigma=0$ and $\UV=\UV_{g,N}$.
Observing that $\UV_{g,N}/g^2N<1$, we can estimate $N(N-1)$ copies of the integral on the right hand side
of \eqref{ubGauss1} as
\begin{align*}
\int_{B_{\UV_{g,N}}}
\frac{\eul^{-|k|^2/4sg^4N^2}}{|k|^2+m_{\bos}^2}\Id k
&\ge \frac{\pi}{\eul^{1/4s}}\int_0^{\UV_{g,N}}\frac{2r}{r^2+m_{\bos}^2}\Id r
=\frac{2\pi}{\eul^{1/4s}}\ln\bigg(\frac{g^2N}{m_{\bos}}\bigg).
\end{align*}
To deal with the $N$ remaining copies and the term $NE_{\UV_{g,N}}^{\ren}$, we use
\begin{align*}
&g^2N\int_{B_{\UV_{g,N}}}
\frac{\eul^{-|k|^2/4sg^4N^2}}{|k|^2+m_{\bos}^2}\Id k-NE_{\UV_{g,N}}^{\ren}
\\
&=g^2N\int_{B_{\UV_{g,N}}}
\frac{\eul^{-|k|^2/4sg^4N^2}\psi(k)}{\omega(k)^2(\omega(k)+\psi(k))}\Id k
-g^2N\int_{B_{\UV_{g,N}}}\frac{1-\eul^{-|k|^2/4sg^4N^2}}{\omega(k)(\omega(k)+\psi(k))}\Id k
\\
&\ge \frac{g^2N}{\eul^{1/4s}}\int_{B_{\UV_{g,N}}}
\frac{\psi(k)}{\omega(k)^2(\omega(k)+\psi(k))}\Id k
-\frac{1}{4sg^2N}\int_{B_{\UV_{g,N}}}\frac{|k|^2}{\omega(k)(\omega(k)+\psi(k))}\Id k.
\end{align*}
Since $\omega(k)\le(|k|^2+(1\vee m_{\bos})^2)^{1/2}$,
we find some $R_{*}\ge1$ depending only on $\ve$, $m_{\p}$ and $1\vee m_{\bos}$ such that 
$\psi(k)\ge (1-\ve)\omega(k)$ as soon as $|k|\ge R_{*}$. 
Since $(0,\infty)\ni a\mapsto a/(\omega(k)+a)$ is increasing and we always have the bound $\omega(k)\ge\psi(k)$,
we can thus continue the above estimation as follows,
\begin{align*}
&g^2N\int_{B_{\UV_{g,N}}}
\frac{\eul^{-|k|^2/4sg^4N^2}}{|k|^2+m_{\bos}^2}\Id k-NE_{\UV_{g,N}}^{\ren}
\\
&\ge \frac{g^2N}{2\eul^{1/4s}}\int_{B_{R_*\wedge\UV_{g,N}}}
\frac{\psi(k)}{\omega(k)^3}\Id k
+\frac{g^2N}{\eul^{1/4s}}\cdot\frac{1-\ve}{2-\ve}\int_{B_{\UV_{g,N}}\setminus B_{R_*}}\frac{1}{\omega(k)^2}\Id k
-\frac{\mathrm{vol}(B_{\UV_{g,N}})}{4sg^2N}.
\end{align*}
In the case $m_{\p}>0$, we simply estimate the integral over $B_{R_*\wedge\UV_{g,N}}$ from below by $0$.
For $m_{\p}=0$ it is equal to the expression in the first line of
\begin{align*}
&\int_{B_{R_*\wedge\UV_{g,N}}}\frac{|k|}{\omega(k)^3}\Id k
\\
&\ge \int_{B_{R_*\wedge\UV_{g,N}}}\frac{1}{\omega(k)^2}\Id k
-\int_{B_{R_*\wedge\UV_{g,N}}}\frac{m_{\bos}}{\omega(k)^3}\Id k
\\
&=\pi\ln\bigg(\frac{(R_*\wedge\UV_{g,N})^2+m_{\bos}^2}{m_{\bos}^2}\bigg)
-2\pi m_{\bos}\bigg(\frac{1}{m_{\bos}}-\frac{1}{\sqrt{(R_*\wedge\UV_{g,N})^2+m_{\bos}^2}}\bigg),
\end{align*}
where we used $|k|\ge\omega(k)-m_{\bos}$.
If $R_*\wedge\UV_{g,N}=\UV_{g,N}$, then $(R_*\wedge\UV_{g,N})^2+m_{\bos}^2=g^4N^2$.
Otherwise, $(R_*\wedge\UV_{g,N})^2+m_{\bos}^2\ge1+m_{\bos}^2>1$ because $R_*\ge1$.
Hence, the term containing the logarithm in the previous bound is $\ge2\pi\ln([1\wedge(g^2N)]/m_{\bos})$,
while the second term is obviously $\ge -2\pi$.
Finally, we calculate
\begin{align*}
\int_{B_{\UV_{g,N}}\setminus B_{R_*}}\frac{1}{\omega(k)^2}\Id k
&=\pi\ln\bigg(\frac{g^4N^2}{R_*^2+m_{\bos}^2}\bigg)\chi_{(R_*^2+m_{\bos}^2,\infty)}(g^4N^2),
\end{align*}
where we obtain a lower bound on the right hand side by putting $m_{\bos}\vee1$ in place of $m_{\bos}$ in it.
Putting all remarks above together we arrive at \eqref{ubGauss2}.
\end{proof}
\begin{proof}[\textbf{Proof of \cref{thm:upbound}}]
In \cref{corubGauss2} we proved an upper bound on the minimal energy of the relativistic Nelson operator
$H_\sigma$ with an ultraviolet cutoff $\sigma\coloneq\UV_{g,N}$ 
chosen such that $\sigma^2+m_{\bos}^2=g^4N^2$. This upper
bound already contains all leading contributions to the upper bound for the renormalized operator $H$
asserted in \cref{thm:upbound} (when $\ve$ and $s$ are chosen appropriately as in the end of this proof).
Hence, what remains to do, is to bound the minimal energy of $H_{\sigma}$ from below by the one of
$H$ modulo error terms that won't harm these leading contributions. By means of \cref{corforspecu}(ii)
and its analogue for $H_\sigma$
we shall translate this task to the comparison of integrals involving exponentials of complex actions,
which can be done thanks to our exponential moment bounds of \cref{sec:basicproc}.

Besides $\sigma>0$ satisfying $\sigma^2+m_{\bos}^2=g^4N^2$
we also pick some measurable $f:\RR^{2N}\to\RR$ with $0\le f\le1$ and $\int_{\RR^{2N}}f(x)\Id x=1$. 
Furthermore, we pick some $p\in(1,\infty)$ and let $p'$ denote its conjugated exponent.
We again employ the splitting \eqref{splitu} of $u_t^N(x)=u_{0,\infty,t}^N(x)$ with our new, 
$p$-independent choice of $\sigma$, however.
Applying H\"{o}lder's inequality and afterwards \eqref{bdcN}, Lemma~\ref{lemexpmomentm}(iii)
and Lemma~\ref{lemexpmomentw} (all with $\UV=\infty$ and $\sigma$ as chosen above) we find
\begin{align*}
\EE\big[\eul^{\tilde{u}_{\sigma,t}^N(x)}f(X_t^x)\big]
&\le\|\eul^{c_{\sigma,\infty,t}^N(x)}\|_{L^\infty(\PP)}
\EE\big[\eul^{-2pw_{\sigma,\infty,t}^N(x)}f(X_t^x)\big]^{1/2p}
\\
&\quad\cdot
\EE\big[\eul^{-2pm_{\sigma,\infty,t}^N(x)}f(X_t^x)\big]^{1/2p}
\EE\big[\eul^{p'u_{t}^N(x)}f(X_t^x)\big]^{1/p'}
\\
&\le 
b^{N}c_\alpha^{1/2p}\eul^{tc'p(N-1)(m_{\p}+g^2N)+tc\alpha p g^2N^2\eul^{8\pi\alpha p}}
\\
&\quad\cdot\EE\big[\eul^{p'u_{t}^N(x)}f(X_t^x)\big]^{1/p'},
\end{align*}
for all $\alpha>1$ with $c_\alpha\coloneq\alpha/(\alpha-1)$ and universal constants $b,c,c'>0$. This entails
\begin{align*}
&\frac{\eul^{-tc'p(N-1)(m_{\p}+g^2N)-tc\alpha pg^2N^2\eul^{8\pi\alpha p}}}{b^{N}c_\alpha^{1/2p}}
\int_{\RR^{2N}}f(x)\EE\big[\eul^{\tilde{u}_{\sigma,t}^N(x)}f(X_t^x)\big]\Id x
\\
&\le\bigg(\int_{\RR^{2N}}f(x)\EE\big[\eul^{p'u_{t}^N(x)}f(X_t^x)\big]\Id x\bigg)^{1/p'},
\end{align*}
where we used H\"{o}lders inequality and $\|f\|_1=1$. 
Let $H(\sqrt{p'}g)$ denote the renormalized relativistic Nelson operator with coupling constant $\sqrt{p'}g$.
Then the complex action corresponding to it is $p'u_{t}^N(x)$, and we infer from the previous inequality 
and \cref{corforspecu}(ii) that
\begin{align}\nonumber
\frac{1}{p'}\inf\sigma(H(\sqrt{p'}g))
&=-\lim_{t\to\infty}\frac{1}{p't}\ln \bigg(\int_{\RR^{2N}}f(x)\EE\big[\eul^{p'u_{t}^N(x)}f(X_t^x)\big]\Id x\bigg)
\\\label{upbd2000}
&\le c'p(N-1)(m_{\p}+g^2N)+c\alpha pg^2N^2\eul^{8\pi\alpha p}+\inf\sigma(H_\sigma).
\end{align}
Here $H_\sigma$ again has coupling constant $g$ as usual, and we used the relation
\begin{align*}
\inf\sigma(H_\sigma)&=-\lim_{t\to\infty}\frac{1}{t}
\ln\bigg(\int_{\RR^{2N}}f(x)\EE\big[\eul^{\tilde{u}_{\sigma,t}^N(x)}f(X_t^x)\big]\Id x\bigg),
\end{align*}
which in analogy to \cref{corforspecu}(ii) follows from the Feynman--Kac formula for $H_\sigma$
obtained in \cref{thm:FK-UV} and the proofs of \cite[Theorem~8.3 and Corollaries~8.5\,\&\,8.6]{MatteMoller.2018}.
Combining \eqref{upbd2000} with \eqref{ubGauss2} 
(recall that $\UV_{g,N}$ defined in Corollary~\ref{corubGauss2} is equal to our present choice of $\sigma$),
multiplying by $p'$ and passing to the limit $\alpha\downarrow1$ we obtain
\begin{align*}
\inf\sigma(H(\sqrt{p'}g))
&\le c'p(N-1)(p'm_{\p}+p'g^2N)+cpp'g^2N^2\eul^{8\pi p}
 +\frac{\pi ^{1/2}s^{1/2}p'g^2N^2}{2^{1/2}}
 \\
 &\quad-\frac{2\pi}{\eul^{1/4s}}p'g^2N(N-1)\ln\bigg(\frac{g^2N}{m_{\bos}}\bigg)
\\
&\quad-\frac{\pi p'g^2N}{\eul^{1/4s}}\cdot\frac{2-2\ve}{2-\ve}\cdot[0\vee\ln(c_*^{-1}g^2N)]+\frac{\pi}{4s}p'g^2N
\\
&\quad-\chi_{\{0\}}(m_{\p})\frac{\pi p'g^2N}{\eul^{1/4s}}\ln\bigg(\frac{1\wedge(g^2N)}{\eul m_{\bos}}\bigg),
\end{align*}
for all $s>0$ and $\ve\in(0,1)$ with $c_*>1$ as described in Corollary~\ref{corubGauss2}.
The constants appearing here do not depend on $g$. We have, however, used that $g^2N>m_{\bos}$.
Assuming $g^2N>m_{\bos}p'$ we may thus put $g/\sqrt{p'}$ in place of $g$ in the above bound.
Given $\theta\in(0,1)$, we conclude afterwards by choosing $p=p'=2$, $s\ge1$ large enough
as well as $\ve\in(0,1)$ small enough such that $(2-2\ve)/(2-\ve)\eul^{1/4s}\ge\theta$; we also observe the
elementary bounds $1\wedge(g^2N/2)\ge(1\wedge(g^2N))/2$ and
$0\vee\ln(g^2N/2c_*)\ge\ln(1\vee(g^2N))-\ln(2c_*)$.
\end{proof}

\appendix

\section{The relativistic Kato class}\label{app:Kato}

\noindent
In this \lcnamecref{app:Kato}, we briefly discuss some implications of our standing assumption that the external potential $V$ is Kato decomposable with respect to the process $X$, i.e. that $V_-\in \mc K_X$ and $\chi_K V_+ \in \mc K_X$ for any compact $K\subset \RR^{2N}$. Here, the Kato class $\mc K_X$ is as defined in \cref{def:Kato}.

The following statements are corollaries of a theorem originally proven in \cite{Khasminskii.1959}. Especially, we apply \cref{V2} with the choice $f=pV_-$, $p\ge 1$.
\begin{lem}[{\cite[Proposition 3.8]{ChungZhao.1995}}]
	Assume $f\in \mc K_X$. Then there exist $b,c>0$ such that
	\begin{align}\label{V2}
		\sup_{x\in\RR^{2N}}\EE\left[ \eul^{\int_0^t f(X_s^x)}\Id s\right] \le b\eul^{ct}, \quad t>0.
	\end{align}
	Further,
	\begin{align}\label{eq:KXt0}
		\lim_{t\downarrow 0} \sup_{x\in\RR^{2N}}\EE\left[\eul^{\int_0^t f(X_s^x)\Id s}\right] = 1.
	\end{align}
\end{lem}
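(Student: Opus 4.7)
The plan is to prove the classical Khasminskii bound by first establishing a uniform exponential moment bound on a short time interval $[0,t_0]$ and then extending to arbitrary $t>0$ via a covering argument based on the Markov property. Set
\[
\phi(t)\coloneq\sup_{x\in\RR^{2N}}\EE\bigg[\int_0^t f(X_s^x)\,\Id s\bigg].
\]
By definition of $\mc K_X$ we have $\phi(t)\to 0$ as $t\downarrow 0$, so I may fix $t_0>0$ with $\phi(t_0)\le 1/2$, say.

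The core estimate is the factorial moment bound
\[
\sup_{x\in\RR^{2N}}\EE\bigg[\bigg(\int_0^{t_0} f(X_s^x)\,\Id s\bigg)^{\!n}\,\bigg]\le n!\,\phi(t_0)^n,\quad n\ge 0,
\]
which I would prove by induction on $n$. Using symmetry, rewrite the left hand side as $n!$ times an integral over the simplex $\Delta_n=\{0\le s_1\le\cdots\le s_n\le t_0\}$ of the product $f(X_{s_1}^x)\cdots f(X_{s_n}^x)$. Conditioning on $\fr F_{s_1}$ and using that $(X_{s_1+u}-X_{s_1})_{u\ge 0}$ is independent of $\fr F_{s_1}$ with the same law as $X$, the inner $(n-1)$-fold simplex integral becomes an expectation of the same type starting from the random point $X_{s_1}^x$, with time horizon $t_0-s_1\le t_0$. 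The induction hypothesis applied \emph{uniformly in the starting point} then bounds this inner expectation by $(t_0-s_1)^{n-1}\phi(t_0)^{n-1}/(n-1)!\cdot n!$ up to simplex factors, and integrating out $s_1$ against $f(X_{s_1}^x)$ contributes another factor $\phi(t_0)$. Summing the exponential series yields
\[
M\coloneq\sup_{x}\EE\big[\eul^{\int_0^{t_0} f(X_s^x)\,\Id s}\big]\le \frac{1}{1-\phi(t_0)}<\infty.
\]

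To pass to arbitrary $t>0$, I partition $[0,t]$ into $k\coloneq\lceil t/t_0\rceil$ consecutive subintervals of length at most $t_0$ and apply the Markov property at the successive endpoints $jt_0$, $j=1,\dots,k-1$: each conditional subfactor $\EE^{\fr F_{jt_0}}[\eul^{\int_{jt_0}^{(j+1)t_0} f(X_s^x)\,\Id s}]$ equals $\EE[\eul^{\int_0^{t_0}f(y+X'_s)\,\Id s}]|_{y=X_{jt_0}^x}$ with $X'$ an independent copy of $X$, hence is bounded by $M$ uniformly on $\Omega$. Iterating and taking expectations gives
\[
\sup_x \EE\big[\eul^{\int_0^t f(X_s^x)\,\Id s}\big]\le M^k\le M\cdot M^{t/t_0}=b\,\eul^{ct},
\]
with $b\coloneq M$ and $c\coloneq t_0^{-1}\ln M$, which is \eqref{V2}. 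For \eqref{eq:KXt0}, the same moment bound applied with $t$ in place of $t_0$ gives $\sup_x\EE[\eul^{\int_0^t f(X_s^x)\,\Id s}]\le(1-\phi(t))^{-1}$ once $\phi(t)<1$, while $f\ge 0$ forces the left hand side to be at least $1$. Sending $t\downarrow 0$ with $\phi(t)\to 0$ yields the limit. The only delicate step is the uniform-in-$x$ Markov iteration producing the factorial moment bound; everything else is straightforward summation and a covering argument.
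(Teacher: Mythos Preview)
The paper does not supply its own proof of this lemma; it simply cites \cite[Proposition~3.8]{ChungZhao.1995} and remarks that the result goes back to Khasminskii. Your proposal is precisely the standard Khasminskii argument that underlies that reference, and it is correct. The one sentence that reads awkwardly is the bound ``$(t_0-s_1)^{n-1}\phi(t_0)^{n-1}/(n-1)!\cdot n!$ up to simplex factors'': what you actually get from the induction hypothesis is that the conditional $(n-1)$-fold simplex integral is bounded by $\phi(t_0)^{n-1}$ uniformly in $X_{s_1}^x$ (using $t_0-s_1\le t_0$ and $f\ge0$), after which integrating $\EE[f(X_{s_1}^x)]$ over $s_1\in[0,t_0]$ contributes the remaining factor $\phi(t_0)$. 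With that clarification, the argument is complete.
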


Combining \cref{eq:KXt0} with the estimate $|\eul^z-1|^p\le \eul^{p|z|}-1$, we directly obtain the following bound.
\begin{cor}[{\cite[Proposition~3.9]{ChungZhao.1995}}]
	Let $f:\RR^{2N}\to\RR$ be measurable with $f_\pm \in \mc K_X$. Then
	\begin{align}\label{eq:boundexpVcomplete}
		\lim_{t\downarrow 0} \sup_{x\in\RR^{2N}} \EE\left[\left|\eul^{\int_0^t f(X_s^x)\Id s}-1\right|^p\right] = 0,
		\quad p\ge 1.
	\end{align}
\end{cor}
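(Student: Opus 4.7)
The plan is to reduce the claim to the already established limit \cref{eq:KXt0}, following the hint in the sentence preceding the corollary. Concretely, I would first verify the elementary pointwise inequality
\begin{align*}
    |\eul^{z}-1|^{p} \le \eul^{p|z|}-1, \qquad z\in\RR,\ p\ge 1.
\end{align*}
For $z\ge 0$ one has $|\eul^{z}-1|=\eul^{z}-1$ and, setting $u=\eul^{z}-1\ge 0$, the claim reduces to $u^{p}\le (1+u)^{p}-1$, which follows from $(1+u)^{p}\ge 1+u^{p}$ (valid for $u\ge 0$, $p\ge 1$, e.g.\ by comparing derivatives). For $z<0$ one first checks $|1-\eul^{z}|\le \eul^{-z}-1$ (equivalent to $\cosh(z)\ge 1$) and then reduces to the previous case in the same way.

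Given this inequality, I would estimate pointwise on $\Omega$
\begin{align*}
    \bigg|\eul^{\int_0^t f(X_s^x)\Id s}-1\bigg|^p
    \le \eul^{p\int_0^t|f|(X_s^x)\Id s}-1
    \le \eul^{p\int_0^t(f_+(X_s^x)+f_-(X_s^x))\Id s}-1.
\end{align*}
Since $\mc K_X$ is closed under multiplication by positive scalars and under finite sums (both being immediate from the defining condition \eqref{def:Kato}), the function $p(f_+ +f_-)$ belongs to $\mc K_X$.

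Taking expectations and the supremum over $x\in\RR^{2N}$ yields
\begin{align*}
    \sup_{x\in\RR^{2N}}\EE\bigg[\bigg|\eul^{\int_0^t f(X_s^x)\Id s}-1\bigg|^{p}\bigg]
    \le \sup_{x\in\RR^{2N}}\EE\Big[\eul^{\int_0^t p(f_+ +f_-)(X_s^x)\Id s}\Big]-1,
\end{align*}
and the right hand side tends to $0$ as $t\downarrow 0$ by \eqref{eq:KXt0} applied to $p(f_+ +f_-)\in\mc K_X$. There is no real obstacle here; the only mildly nontrivial step is the pointwise inequality for $|\eul^{z}-1|^{p}$, and once that is in place the corollary follows mechanically from the Khasminskii-type result \eqref{eq:KXt0}.
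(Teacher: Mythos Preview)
Your proposal is correct and follows exactly the approach sketched in the paper: the sentence preceding the corollary already states that the result follows by combining \eqref{eq:KXt0} with the estimate $|\eul^z-1|^p\le \eul^{p|z|}-1$, and you have supplied a clean verification of that pointwise inequality together with the routine closure-under-sums-and-scalars observation for $\mc K_X$.
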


Finally, following \cite[Lemma~C.4]{BroderixHundertmarkLeschke.2000}, we prove an analog of \cref{eq:boundexpVcomplete} for Kato decomposable functions, i.e., functions in which $f_+$ merely belongs to the {\em local} Kato class.

\begin{lem}
	Let $f:\RR^{2N}\to\RR$ be measurable with $f_-\in \mc K_X$. Further, assume $\chi_K f_+\in \mc K_X$ holds for all compact $K\subset \RR^{2N}$. Then, for any compact $K\subset \RR^{2N}$,
	\begin{align}\label{eq:boundexpVminus1}
		\lim_{t\downarrow0}\sup_{x\in K}\EE\left[\left|\eul^{-\int_0^t f(X_s^x)\Id s}-1\right|^p\right] = 0, \quad p\ge 1.
	\end{align}
\end{lem}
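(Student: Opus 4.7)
My plan is to approximate $f$ by a function whose \emph{both} positive and negative parts lie in $\mc K_X$ — obtained by localizing $f_+$ to a large compact neighborhood of $K$ — apply the already-proved convergence \cref{eq:boundexpVcomplete} to that surrogate, and absorb the error via a stopping-time argument exploiting right-continuity of the paths of $X$ at $s=0$.

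Concretely, fix a compact $K\subset\RR^{2N}$ and $R>0$, and set $K_R\coloneq\{y\in\RR^{2N}\,|\,\dist(y,K)\le R\}$, which is again compact. Define the surrogate potential $\tilde f\coloneq\chi_{K_R}f_+-f_-$. By hypothesis, both $\tilde f_+=\chi_{K_R}f_+$ and $\tilde f_-=f_-$ lie in $\mc K_X$, so \cref{eq:boundexpVcomplete} applies to $\tilde f$. Introduce the exit time $\tau_R\coloneq\inf\{s\ge 0\,|\,|X_s|>R\}$; for $x\in K$ and all $s\in[0,t]$ on the event $\{\tau_R>t\}$, the point $X^x_s$ lies in $K_R$, and hence $f(X^x_s)=\tilde f(X^x_s)$ there.

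With this at hand I would split
\begin{align*}
\EE\big[|\eul^{-\int_0^t f(X^x_s)\Id s}-1|^p\big]
\le \EE\big[|\eul^{-\int_0^t \tilde f(X^x_s)\Id s}-1|^p\big]
+\EE\big[|\eul^{-\int_0^t f(X^x_s)\Id s}-1|^p\chi_{\{\tau_R\le t\}}\big],
\end{align*}
where the estimate of the first summand uses that the two integrands coincide on $\{\tau_R>t\}$. The first summand tends to $0$ as $t\downarrow 0$ uniformly in $x\in\RR^{2N}$ by \cref{eq:boundexpVcomplete}. For the second, the elementary bound $|\eul^{-\int_0^t f(X^x_s)\Id s}-1|^p\le 2^p\eul^{p\int_0^t f_-(X^x_s)\Id s}$ combined with the Cauchy--Schwarz inequality estimates it by
\begin{align*}
2^p\EE\big[\eul^{2p\int_0^t f_-(X^x_s)\Id s}\big]^{1/2}\PP(\tau_R\le t)^{1/2}.
\end{align*}
Since $f_-\in\mc K_X$, \cref{V2} applied to $2pf_-$ bounds the first factor uniformly in $x\in\RR^{2N}$ for $t\in[0,1]$; the second factor depends only on the law of $X$ (not on $x$) and tends to $0$ as $t\downarrow 0$ because $X_0=0$ and the paths of $X$ are right-continuous at $0$.

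The main technical point — and essentially the only step requiring care — is the bookkeeping around the surrogate $\tilde f$: one must verify that $\tilde f_\pm\in\mc K_X$ and that on $\{\tau_R>t\}$ the pathwise integrals of $f$ and $\tilde f$ along $X^x_s$ coincide for every $x\in K$. Once this is in place, the proof reduces to the known Kato-class convergence \cref{eq:boundexpVcomplete} and the standard exit-time estimate for c\`adl\`ag L\'evy paths, both of which are uniform in $x$ (the latter trivially, since $\tau_R$ is a functional of $X$ alone).
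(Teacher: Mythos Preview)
Your proof is correct and follows essentially the same strategy as the paper's: localize $f_+$ to a compact set, apply \cref{eq:boundexpVcomplete} to the resulting Kato-class surrogate, and control the remainder via the probability that the path leaves the localization region. The paper works with the indicator $\Delta_{R,t}[x,X]$ of $\{|X^x_s|\le R,\ s\in[0,t]\}$ and a ball $B_R\supset K$, then bounds $\PP(\Delta_{R,t}=0)$ via L\'evy's maximal inequality and stochastic continuity; you instead use the exit time $\tau_R$ of $X$ from $B_R$ together with the $R$-fattening $K_R$ and conclude $\PP(\tau_R\le t)\to 0$ directly from right-continuity of paths, which is a modest simplification.
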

\begin{proof}
	Given $x\in\RR^{2N}$ and $\gamma:[0,\infty)\to\RR^{2N}$ Borel measurable, let
	\begin{align*}
		\Delta_{R,t}[x,\gamma] \coloneq \begin{cases}
			1 &\mbox{if}\ |x+\gamma_s|\le R,\ s\in[0,t],\\
			0 & \mbox{else}.
		\end{cases}
	\end{align*}
	Then, by the Cauchy-Schwarz inequality, we have
	\begin{align}\label{eq:expfestimate}
		\begin{aligned}
			\EE\left[\left|\eul^{-\int_0^t f(X_s^x)\Id s}-1\right|^p\right]
			\le\, &
			\EE\left[1-\Delta_{R,t}[x,X]\right]^{1/2}\EE\left[\left|\eul^{-\int_0^t f(X_s^x)\Id s}-1\right|^{2p}\right]^{1/2}\\
			& + \EE\left[(1-\Delta_{R,t}[x,X])\left|\eul^{-\int_0^t f(X_s^x)\Id s}-1\right|^p\right].
		\end{aligned}
	\end{align}
	By the assumptions and \cref{eq:boundexpVcomplete}, we have
	\begin{align*}
		\lim_{t\downarrow 0}\sup_{x\in\RR^{2N}}\EE\left[(1-\Delta_{R,t}[x,X])\left|\eul^{-\int_0^t f(X_s^x)\Id s}-1\right|^p\right] = 0, \quad R>0.
	\end{align*}
	Further, since $f_-\in \mc K_X$, $\EE[|\eul^{-\int_0^t f(X_s^x)\Id s}-1|^{2p}]^{1/2}$ is uniformly bounded in $t\in[0,1]$ and $x\in\RR^{2N}$, by \cref{V2}.
	Finally, using L\'evy's maximal inequality similar to \cite{BroderixHundertmarkLeschke.2000}, also see \cite[Theorem~3.6.5 and \textsection~7]{Simon.1979}, we find
	\begin{align*}
		\EE\left[1-\Delta_{R,t}[x,X]\right] &= \PP\bigg( \sup_{s\in[0,t]  } |X_s^x|\ge R  \bigg) 
		\le \PP\bigg( \sup_{s\in[0,t]} |X_s|\ge R/2  \bigg) \\&\le 2 \PP\left(|X_t|\ge R/2\right), \quad |x|\le R/2.
	\end{align*}
	By the stochastic continuity of $X$, the right hand side converges to zero as $t\downarrow 0$, cf. \cite[p.~43]{Applebaum.2009}.
	Given a compact set $K\subset \RR^{2N}$, the statement now follows,  by choosing $R>0$ such that $K\subset B_{R/2}$ and combining the above observations with \cref{eq:expfestimate}.
\end{proof}

\section{Processes appearing in the creation/annihilation terms}\label{sec:basicproc.UtN}

\noindent
In this \lcnamecref{sec:basicproc.UtN} we derive some basic results on the processes $U^{N,\pm}(x)$
that directly follow from their definition in \cref{defUplusminus,def:UtN}.

In the main text we repeatedly make use of
the following bounds in the time-dependent norm $\|\cdot\|_t\ge\|\cdot\|$ on $L^2(\RR^2)$ given by \cref{deftnorm}:

\begin{lem}
Let $0\le\sigma<\UV\le\infty$, $t>0$ and $x\in\RR^{2N}$. Then 
\begin{align}\label{tnormU}
\|\chi_{B_\UV\setminus B_\sigma}U_{t}^{N,\pm}(x)\|_{t/2}^2\le 
\frac{6\pi}{1-\ve}\cdot\frac{g^2N^2t^{\ve}}{\omega(\sigma)^{1-\ve}}\qquad\mbox{for all}\ \ve\in[0,1).
\end{align}
Further, we have
\begin{align}\label{tnormwU}
		\|\omega \chi_{B_\UV\setminus B_\sigma}U_{t}^{N,\pm}(x)\|_{t/2}^2 \le 6\pi g^2 N^2 \omega(\Lambda), 
		\quad \UV<\infty.
\end{align}
\end{lem}
\begin{proof}
For $a,\ve\ge0$ such that $a+\frac\ve2\le 1$ and $\ve<1$, we find
\begin{align}
\|\omega^{-a}\chi_{B_\UV\setminus B_\sigma}U_{t}^{N,\pm}(x)\|^2
&\le g^2N^2\int_{B_\sigma^c}\int_0^t\int_0^t\frac{\eul^{-s\omega(k)-u\omega(k)}}{\omega(k)^{1+2a}}\Id u\,\Id s\,\Id k
\nonumber\\
&=g^2N^2\int_{B_\sigma^c}\frac{(1-\eul^{-t\omega(k)})^2}{\omega(k)^{3+2a}}\Id k
\nonumber\\
&\le g^2N^2\int_{B_\sigma^c}\frac{t^{2a+\ve}}{\omega(k)^{3-\ve}}\Id k
\nonumber\\
&=\frac{2\pi}{1-\ve}\cdot\frac{g^2N^2t^{2a+\ve}}{\omega(\sigma)^{1-\ve}}. \label{eq:easyboundU}
\end{align}
Applying this estimate with both $a=0$ and $a=1/2$, we arrive at \eqref{tnormU}.
By a similar argument, we have
\begin{align*}
	\|\omega^a\chi_{B_\UV\setminus B_\sigma}U_{t}^{N,\pm}(x)\|^2
	\le \frac{2\pi}{\ve-1}g^2N^2t^{\ve-2a}\omega(\UV)^{\ve-1}, \quad a\ge 0,\ve\in(1,\infty)\cap[2a,2a+2].
\end{align*}
Applying this estimate with $\ve=2$ and both $a=1/2$ and $a=1$ yields \cref{tnormwU}.
\end{proof}

The next two \lcnamecrefs{remUpmcont} ensure in particular the continuity and adaptedness of $U^{N,\pm}(x)$
for both choices of the sign.

\begin{lem}\label{remUpmcont}
Let $\UV\in(0,\infty]$. Then at every fixed elementary event, the map
\begin{align}\label{Upmcont}
[0,\infty)\times\RR^{2N}\ni(t,x)\longmapsto \chi_{B_{\UV}}U_{t}^{N,\pm}(x)\in L^2(\RR^2)\quad\text{is continuous.}
\end{align}
\end{lem}

\begin{proof}
In view of \cref{eq:bounde-somv},
the assertion is clear when the minus sign is chosen in \eqref{Upmcont}.
Let $\gamma:[0,\infty)\to\RR^2$ be Borel measurable. Then
dominated convergence ensures that $I_{j,\UV}:[0,\infty)^2\to L^2(\RR^2)$ 
is continuous for every $\UV\in(0,\infty)$, where
\begin{align*}
I_{\UV}(t,\tau)&\coloneq \int_0^t\eul^{-|\tau-s|\omega-\ii K\cdot \gamma_{s}}v_{\UV}\Id s,\quad t,\tau\ge0.
\end{align*}
Hence, $[0,\infty)\ni t\mapsto \chi_{B_{\UV}} U^+_{t}[\gamma]=I_{\UV}(t,t)$ is continuous 
and so is the map in \eqref{Upmcont} with plus sign as long as $\UV<\infty$.
By virtue of \eqref{tnormU}, with $(\UV,\infty,0)$ put in place of $(\sigma,\UV,\ve)$ there, we further know that,
pointwise on $\Omega$, the norm
$\|\chi_{B_{\UV}}U_{t}^{N,+}(x)-U_{t}^{N,+}(x)\|$ goes to $0$ as $\UV\to\infty$
uniformly in $(t,x)\in[0,\infty)\times\RR^{2N}$. Hence, the statement in \eqref{Upmcont} 
holds for $\UV=\infty$ as well.
\end{proof}

\begin{lem}\label{lem:Upmadapted}
Let $x\in\RR^{2N}$. Then $(U^{N,\pm}_{t}(x))_{t\ge0}$ is predictable.
\end{lem}

\begin{proof}
By virtue of \eqref{Upmcont} it suffices to show that both processes are adapted.
In view of \eqref{def:Ucutoff} we only have to show that $(U_t^\pm[X_{j,\bullet}])_{t\ge0}$ are adapted.

So fix $t>0$ and $j\in\{1,\ldots,N\}$. Then
$(\chi_{[0,t)}(s)\eul^{-(t-s)\omega-\ii K\cdot X_{j,s}}v)_{s\ge0}$ is adapted with
right-continuous paths, hence progressively measurable. In particular the integrand in
$U_{t}^+[X_{j,\bullet}]=\int_{[0,t)}\eul^{-(t-s)\omega-\ii K\cdot X_{j,s}}v\Id s$
is $\fr{B}([0,t])\otimes\fr{F}_t$-measurable, whence $U_{t}^+[X_{j,\bullet}]$ itself
is $\fr{F}_t$-measurable. Likewise,
$(\chi_{(0,\infty)}(s)\eul^{-s\omega-\ii K\cdot X_{j,s-}}v)_{s\ge0}$ is adapted with
left-continuous paths, hence progressively measurable (predictable in fact). 
Furthermore, for every  $\vo\in\Omega$ the left-continuous path $s\mapsto X_{j,s-}(\vo)$ differs
from $s\mapsto X_{j,s}(\vo)$ at most at countably many points, which permits to write
$U_{t}^-[X_{j,\bullet}]=\int_{(0,t]}\eul^{-s\omega-\ii K\cdot X_{j,s-}}v\Id s$ on $\Omega$.
As before we conclude that $U_{t}^-[X_{j,\bullet}]$ is $\fr{F}_t$-measurable. 
\end{proof}

\section{A useful exponential estimate}\label{app:exp}

\noindent
Here we infer an exponential moment bound on some L\'{e}vy type
stochastic integral from a corresponding exponential tail estimate from \cite{Applebaum.2009,Siakalli.2019}. 
We use this bound to treat the martingale contribution to the complex action in \cref{subsec:m}.

To that end, we assume that $\nu_*$ is some L\'evy measure on $\RR^d$ with $d\in\NN$ and 
that $\wt N_*$ is the martingale-valued measure on $[0,\infty)\times\RR^d$ corresponding to
some c\`{a}dl\`{a}g L\'{e}vy process with characteristics $(0,0,\nu_*)$ on the filtered probability space
$(\Omega,\fr{F},(\fr{F}_t)_{t\ge0},\PP)$ satisfying the usual hypothses. 
Further, we assume that $h:[0,\infty)\times \RR^d\times\Omega \to \RR$ is predictable such that, for all $t\ge0$,
\begin{align*}
\sup_{s\in[0,t]}\sup_{z\in\RR^d}|h(s,z)|<\infty\quad\text{and}\quad
\int_0^t\int_{\RR^d}h(s,z)^2\Id \nu_*(z)\,\Id s<\infty,\quad\mbox{$\PP$-a.s.,}
\end{align*}
where $h(s,z)\coloneq h(s,z,\cdot)$ as usual. These are the conditions required in \cite{Applebaum.2009,Siakalli.2019} 
to construct the corresponding c\`{a}dl\`{a}g stochastic integral process $Y$ and to apply the It\^{o} formula for $Y$, where
\begin{align*}
Y_t&\coloneq \int_{(0,t]\times\RR^d}h(s,z)\Id \wt{N}_*(s,z),\quad t\ge0.
\end{align*}
Moreover, these conditions and a bound analogous to \eqref{Taylorexph} ensure that the process $Z_\alpha$ is
well-defined $\PP$-a.s. by
\begin{align*}
Z_{\alpha,t}&\coloneq\frac{1}{\alpha}\int_0^t\int_{\RR^d}(\eul^{\alpha h(s,z)}-1-\alpha h(s,z))\,\Id\nu_*(z)\Id s,\quad t\ge0.
\end{align*}
\begin{lem}\label{lemexpbdtildeN}
In the situation described in the previous paragraph, let $\alpha>1$ and assume in addition that
\begin{align*}
\sup_{s\in[0,t]}Z_{\alpha,s}\le f_\alpha(t),\quad\text{$\PP$-a.s.},
\end{align*}
for all $t\ge0$ and some deterministic function $f_\alpha:[0,\infty)\to\RR$. Then
\begin{align*}
\EE\Big[\sup_{s\in[0,t]}\eul^{Y_s}\Big]&\le
\frac{\alpha}{\alpha-1} \eul^{f_\alpha(t)},\quad t\ge0.
\end{align*}
\end{lem}

\begin{proof}
We know from \cite[Theorem~3.2.5]{Siakalli.2019} (see also \cite[Theorem~5.2.9]{Applebaum.2009}) that
\begin{align}\label{AppSia}
\PP\bigg(\sup_{s\in[0,t]}(Y-Z_\alpha)_s>\tau\bigg)&\le\eul^{-\alpha \tau},\quad \tau,t>0.
\end{align}
Furthermore, it is clear that
\begin{align*}
\EE\Big[\sup_{s\in[0,t]}\eul^{Y_s}\Big]
&\le\EE\bigg[\exp\bigg(\sup_{s\in[0,t]}(Y-Z_{\alpha})_s\bigg)\bigg]\eul^{f_\alpha(t)}.
\end{align*}
For every measurable function $q:\Omega\to\RR$, the layer cake representation
(see, e.g., \cite[Theorem 1.13]{LiebLoss.2001}) entails, however,
\begin{align*}
\int_{\Omega}\eul^{q}\Id\PP&
\le1+\int_{\{q>0\}}(\eul^{q}-1)\Id\PP
=1+\int_0^\infty \PP(q>\tau)\,\eul^{\tau}\Id\tau.
\end{align*}
The exponential tail estimate \eqref{AppSia} thus implies
\begin{align*}
\EE\bigg[\exp\bigg(\sup_{s\in[0,t]}(Y-Z_\alpha)_s\bigg)\bigg]
&\le 1+\int_0^\infty\eul^{-(\alpha-1)\tau}\Id\tau=1+\frac{1}{\alpha-1}.
\end{align*}
Putting all these remarks together we arrive at the asserted bound.
\end{proof}

\section{Bounds on the relativistic semigroup}\label{app:lplq}

\noindent
In this appendix, we consider L\'{e}vy processes in dimension $d\in\NN$ with $d\ge2$, whose
L\'{e}vy symbols are given by some negative constant times the relativistic dispersion relation
\begin{align}\label{defpsid}
	\psi_d(\xi)\coloneq(|\xi|^2+m_{\p}^2)^{1/2}-m_{\p}, \quad \xi\in\RR^d.
\end{align} 
As in the main text, $m_{\p}\ge0$. After briefly introducing the Bessel functions also playing a role in the main part of the article,  we use the second subsection to derive some weighted
$L^p$ to $L^q$ bounds for the semigroup associated with these L\'{e}vy processes, in a straightforward fashion.
In the third subsection, we extend some of these bounds to $N$ independent copies of 
the L\'{e}vy processes.
As we briefly observe in the last subsection, the $L^p$ to $L^\infty$ bounds obtained in the second one
can be combined with Carmona's derivation of his bounds on Kac averages for Brownian
motion \cite{Carmona.1979}, thus yielding bounds on Kac averages for the considered L\'{e}vy processes.
Without doubt all results of this appendix are essentially well-known, but we couldn't find a reference
containing precisely the bounds we need.

In the whole \cref{app:lplq}, unexplained symbols like $c_\nu$, $c_{d,p}$, $c_{d,p,q}$, $c_{d,p}'$ and so on denote positive
constants solely depending on the quantities displayed in their subscripts. Their values might change from
one estimate to another.

\subsection{Brief remarks on some Bessel functions}\label{app:Bessel}

In the main text we repeatedly employed Bessel functions and standard upper bounds on their behavior. For the convenience of the reader, we collect the essential formulas here. Standard literature includes the extensive monography \cite{Watson.1944}.

The Bessel function of the first kind and order $0$ is given by
\begin{align}\label{def:J0-int}
	J_0(s) = \int_0^{2\pi} e^{\ii s \sin t}\frac{\Id t}{2\pi}=J_0(-s),\quad s\in\RR.
\end{align}
In view of this formula and the asymptotic expansion of $J_0(s)$ at $s\to \infty$ we find
some constant $c>0$ such that
\begin{align}\label{eq:J0bound}
	|J_0(r)| \le 1 \wedge \frac{c}{\sqrt{r}},\quad r>0.
\end{align}
Let $\nu\in(0,\infty)$. 
Then the modified Bessel function of the third kind and order $\nu$
is given by the integral formula
\begin{align}\label{forKnur}
	K_\nu(r) = \frac 12 \int_0^\infty t^{\nu-1}\eul^{-r(t+t^{-1})/2}\Id t,\quad r>0.
\end{align}
Estimating $t^{-1}\ge0$ and computing the resulting integral we find the first upper bound in 
\begin{align}\label{bdBesselK}
	0<K_{\nu}(r)&\le
	\begin{cases}
		\displaystyle \frac{c_\nu}{r^\nu}, & r\in(0,1],\\[1em]
		\displaystyle \frac{c_\nu}{r^{1/2}}\eul^{-r}, & r>1.
	\end{cases}
\end{align}
The second one follows from the well-known asymtotics of $K_\nu(r)$ as $r\to\infty$.

\subsection{Weighted \texorpdfstring{$L^p$}{Lp} to \texorpdfstring{$L^q$}{Lq} bounds}

\noindent
In this subsection, we always assume that $X$ is a $d$-dimensional L\'{e}vy process with
symbol $-\psi_d$ given by \eqref{defpsid}. Hence, the law of $X_t$ has a density $\rho_{m_{\p},t}$,
which is equal to $(2\pi)^{-d/2}$ times the inverse unitary Fourier transform of $\eul^{-t\psi_d}$. 
The latter can be computed using the subordination identity, Fourier transforms of Gaussians, elementary
substitutions and \cref{forKnur}. The well-known result is
\begin{align}\label{lawXd0}
	\rho_{0,t}(y)&=\frac{\Gamma((d+1)/2)}{\pi^{(d+1)/2}}\cdot\frac{t}{(t^2+|y|^2)^{(d+1)/2}},\quad y\in\RR^d,
\end{align}
when $m_{\p}=0$, and 
\begin{align}\label{forrhomp}
	\rho_{m_{\p},t}(y)&=\frac{2m_{\p}^{(d+1)/2}}{(2\pi)^{(d+1)/2}}\cdot\frac{t\eul^{m_{\p}t}}{(t^2+|y|^2)^{(d+1)/4}}\cdot
	K_{(d+1)/2}\left(m_{\p}(t^2+|y|^2)^{1/2}\right),
\end{align}
for $m_{\p}>0$; 
cf. also, e.g., \cite[\textsection 7.1]{LiebLoss.2001}. 
For all $t>0$ and $L\ge0$, we abbreviate
\begin{align}\label{defUpsilont}
	\Upsilon_t&\coloneq \{y\in\RR^d|\,m_{\p}(t^2+|y|^2)^{1/2}\le 1\},
	\\\label{defrhoLmt}
	\rho_{L,m_{\p},t}(y)&\coloneq\eul^{L|y|}\rho_{m_{\p},t}(y),\quad y\in\RR^d,
	\\\label{defrho12}
	\rho_{L,m_{\p},t}^{(1)}&\coloneq\chi_{\Upsilon_t}\rho_{L,m_{\p},t},
	\quad
	\rho_{L,m_{\p},t}^{(2)}\coloneq\chi_{\Upsilon_t^c}\rho_{L,m_{\p},t}.
\end{align}

\begin{lem}\label{lemrhoexp1}
	Assume that $m_{\p}>0$. Let $t>0$ and $L\in[0,m_{\p})$. 
	Then the following bounds hold for all $p\in[1,\infty]$,
	\begin{align}\label{LpLinfty1}
		\|\rho_{L,m_{\p},t}^{(1)}\|_p&\le c_{d,p}t^{-d/p'},
		\\\label{LpLinfty2}
		\|\rho_{L,m_{\p},t}^{(2)}\|_p&\le c_{d,p} \eul^{Lt}m_{\p}^{d/2}(m_{\p}-L)^{-d/2p}t^{-d/2p'}.
	\end{align}
	In the case $p=\infty$, \eqref{LpLinfty1} and \eqref{LpLinfty2} also hold for $L=m_{\p}$.
	In particular,
	\begin{align}\label{LpLinftyX}
		\EE[\eul^{L|X_t|}]&=\int_{\RR^d}\rho_{L,m_{\p},t}(y)\Id y\le
		c_{d}\big(1+\eul^{Lt}m_{\p}^{d/2}(m_{\p}-L)^{-d/2}\big).
	\end{align}
\end{lem}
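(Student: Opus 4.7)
The plan is to treat $\rho^{(1)}_{L,m_\p,t}$ and $\rho^{(2)}_{L,m_\p,t}$ separately, corresponding to the small- and large-argument asymptotics of the Bessel function $K_{(d+1)/2}$ entering \cref{forrhomp}. The estimate \eqref{LpLinftyX} will then follow by specialising \eqref{LpLinfty1}--\eqref{LpLinfty2} to $p=1$ and noting $\rho_{L,m_\p,t}=\rho^{(1)}_{L,m_\p,t}+\rho^{(2)}_{L,m_\p,t}$.

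For $\rho^{(1)}$, on $\Upsilon_t$ one has $m_\p t\le 1$ and $|y|\le 1/m_\p$, so that both $\eul^{m_\p t}$ and $\eul^{L|y|}$ (recall $L\le m_\p$) are bounded by the constant $\eul$. Combined with the small-argument bound on $K_{(d+1)/2}$ in \cref{bdBesselK}, this reduces \cref{forrhomp} to
\begin{align*}
\rho^{(1)}_{L,m_\p,t}(y)\le c_d\,\chi_{\Upsilon_t}(y)\,\frac{t}{(t^2+|y|^2)^{(d+1)/2}},
\end{align*}
i.e. $\rho^{(1)}$ is dominated by a constant multiple of the massless density $\rho_{0,t}$ from \eqref{lawXd0}. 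The scaling $\rho_{0,t}(y)=t^{-d}\rho_{0,1}(y/t)$ yields $\|\rho_{0,t}\|_p=t^{-d/p'}\|\rho_{0,1}\|_p$, and $\|\rho_{0,1}\|_p<\infty$ for all $p\in[1,\infty]$ since the integrand decays like $|y|^{-(d+1)}$ at infinity. This proves \eqref{LpLinfty1} and also handles the borderline case $p=\infty$, $L=m_\p$.

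For $\rho^{(2)}$, the large-argument bound in \cref{bdBesselK} is available on $\Upsilon_t^c$, and combined with the elementary inequality
\begin{align*}
L|y|+m_\p t-m_\p\sqrt{t^2+|y|^2}\le Lt-(m_\p-L)\bigl(\sqrt{t^2+|y|^2}-t\bigr),
\end{align*}
which follows from $|y|\le\sqrt{t^2+|y|^2}$ after regrouping, it gives the pointwise estimate
\begin{align*}
\rho^{(2)}_{L,m_\p,t}(y)\le c_d\,m_\p^{d/2}\,\eul^{Lt}\,\frac{t\,\eul^{-(m_\p-L)(\sqrt{t^2+|y|^2}-t)}}{(t^2+|y|^2)^{(d+2)/4}}.
\end{align*}
For $p=\infty$ the right hand side is maximised at $y=0$, giving \eqref{LpLinfty2} directly (also when $L=m_\p$, using the convention $(m_\p-L)^0=1$). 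For finite $p$ I would raise to the $p$-th power, pass to polar coordinates, and substitute $u=\sqrt{t^2+r^2}-t$ to reduce matters to controlling
\begin{align*}
\int_0^\infty[u(u+2t)]^{(d-2)/2}(u+t)^{1-p(d+2)/2}\eul^{-p(m_\p-L)u}\,\Id u.
\end{align*}
Splitting at $u=t$, on $\{u\le t\}$ I use $(u+t)^{1-p(d+2)/2}\le t^{1-p(d+2)/2}$ (the exponent is non-positive for $p\ge 1$, $d\ge 2$) and $u(u+2t)\le 3ut$, then extend to $[0,\infty)$ to pick up $\Gamma(d/2)/(p(m_\p-L))^{d/2}$; on $\{u\ge t\}$ I use $u(u+2t)\le 3u^2$ and $u+t\le 2u$, and the resulting power-exponential integral is bounded by the target times the universally bounded factor $[(m_\p-L)t]^{d/2-1}\eul^{-p(m_\p-L)t}$.

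The main technical obstacle is matching the precise prefactors $(m_\p-L)^{-d/2p}$ and $t^{-d/2p'}$: these emerge from the interplay of the two crude estimates above, one sharp when $(m_\p-L)t\lesssim 1$ and the other when $(m_\p-L)t\gtrsim 1$. In the atypical case $d-1-p(d+2)/2\ge 0$, which arises only for $p$ very close to $1$ with $d\ge 4$, the second estimate is replaced by $\int_t^\infty u^k\eul^{-p(m_\p-L)u}\,\Id u\le\Gamma(k+1)/(p(m_\p-L))^{k+1}$ applied over all of $[0,\infty)$, whose dependence on $(m_\p-L)$ and $t$ after multiplication by the $t^p$ prefactor is again controlled by a bounded function of $(m_\p-L)t$ times the target.
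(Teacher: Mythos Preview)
Your argument is correct and follows the same strategy as the paper's proof: use the two regimes of \cref{bdBesselK} to split, dominate $\rho^{(1)}$ by the massless density \eqref{lawXd0} and scale, and for $\rho^{(2)}$ pass to polar coordinates, exploit the exponential inequality on the Bessel argument, and split the radial integral into near and far pieces. The only difference is cosmetic: the paper substitutes $|y|=rt$, splits at $r=1$, and uses $\sqrt{1+r^2}-1\ge (r^2\wedge r)/(2\sqrt{2})$ to reduce to a Gaussian integral on $[0,1]$ and an exponential one on $[1,\infty)$, whereas your substitution $u=\sqrt{t^2+|y|^2}-t$ makes the exponential factor exact from the outset and shifts the splitting to $u=t$.

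One small slip: on $\{u\ge t\}$ you need a \emph{lower} bound on $u+t$, since the exponent $1-p(d+2)/2$ is negative; it is $u+t\ge u$ (not $u+t\le 2u$) that yields $(u+t)^{1-p(d+2)/2}\le u^{1-p(d+2)/2}$.
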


\begin{proof}
	If $m_{\p}(t^2+|y|^2)^{1/2}\le 1$, then $m_{\p}t\le1$, $L|y|\le L/m_{\p}\le 1$. Hence, 
	\eqref{forrhomp} and \eqref{bdBesselK} entail $\rho_{L,m_{\p},t}(y)\le c_d/t^{d}$, which is the case $p=\infty$ of
	\eqref{LpLinfty1}.
	If otherwise $m_{\p}(t^2+|y|^2)^{1/2}>1$,  then \eqref{forrhomp}, \eqref{bdBesselK} and $L\le m_{\p}$ yield
	\begin{align*}
		\rho_{L,m_{\p},t}(y)\le\frac{c_dm_{\p}^{d/2}}{t^{d/2}}\cdot\sup_{r\ge0}\eul^{-m_{p}t(1+r^2)^{1/2}+m_{\p}t+Lrt}.
	\end{align*}
	To obtain \eqref{LpLinfty2} with $p=\infty$, we apply the bound
	\begin{align}\label{Lmpbd}
		-m_{\p}\sqrt{1+r^2}+m_{\p}+Lr&\le-(m_{\p}-L)\big(\sqrt{1+r^2}-1\big)+L,
	\end{align}
	whose right hand side is $\le L$. For finite $p\ge1$, the bound \eqref{LpLinfty1} follows from
	\begin{align*}
		\int_{\Upsilon_t}\rho_{L,m_{\p},t}(y)^{p}\Id y
		&\le c_{d,p}\int_{\RR^d}\frac{t^{p}\eul^{p+pL/m_{\p}}}{(t^2+|y|^2)^{(d+1)p/2}}\Id y,
	\end{align*}
	upon substituting $y=tz$ on the right hand side. Furthermore,
	\begin{align*}
		\int_{\Upsilon_t^c}\rho_{L,m_{\p},t}(y)^{p}\Id y
		&\le c_{d,p}m_{\p}^{dp/2}\int_{\RR^d}
		\frac{t^{p}\eul^{pm_{\p}t+pL|y|}}{(t^2+|y|^2)^{(d+2)p/4}}\cdot
		\eul^{-pm_{\p}(t^2+|y|^2)^{1/2}}\Id y
		\\
		&=c_{d,p}'m_{\p}^{dp/2}t^{d-dp/2}\int_{0}^\infty
		\frac{r^{d-1}\eul^{-pm_{\p}t(1+r^2)^{1/2}+pm_{\p}t+pLrt}}{(1+r^2)^{(d+2)p/4}}\Id r.
	\end{align*}
	To dominate the integral in the last line above we first apply \eqref{Lmpbd}.
	Afterwards we use the bound
	\begin{align*}
		\sqrt{1+r^2}-1&=\frac{r^2}{\sqrt{1+r^2}+1}\ge\frac{r^2}{2\sqrt{1+r^2}}\ge \frac{1}{2\sqrt{2}} (r^2\wedge r),\quad r\ge0.
	\end{align*}
	Together with $r^{d/2}/(1+r^2)^{(d+2)p/4}\le1$ for $r\ge1$, this permits to get
	\begin{align*}
		&\int_{0}^\infty
		\frac{r^{d-1}}{(1+r^2)^{(d+2)p/4}}\cdot
		\eul^{-pm_{\p}t(1+r^2)^{1/2}+pm_{\p}t+pLrt}\Id r
		\\
		&\le \eul^{pLt}\int_{0}^1r^{d-1}\eul^{-p(m_{\p}-L)tr^2/2\sqrt{2}}\Id r
		+ \eul^{pLt}\int_{1}^\infty r^{d/2-1}\eul^{-p(m_{\p}-L)tr/2\sqrt{2}}\Id r
		\\
		&= \eul^{pLt}((m_{\p}-L)t)^{-d/2}\int_0^{\sqrt{(m_{\p}-L)t}}s^{d-1}\eul^{-ps^2/2\sqrt{2}}\Id s
		\\
		&\quad+ \eul^{pLt}((m_{\p}-L)t)^{-d/2}\int_{(m_{\p}-L)t}^\infty u^{d/2-1}\eul^{-pu/2\sqrt{2}}\Id u
		\\
		&\le c_{d,p} \eul^{pLt}((m_{\p}-L)t)^{-d/2},
	\end{align*}
	where we estimated both integrals by integrals over $(0,\infty)$ in the last step. 
\end{proof}

In the following we shall repeatedly employ the bound
\begin{align}\label{convbd1}
	\|h*f\|_q&\le\|h*f\|_\infty^{1-p/q}\|h*f\|_p^{p/q}\le \|h\|_{p'}^{1-p/q}\|h\|_1^{p/q}\|f\|_p,
\end{align}
which holds for all $h\in L^{1}(\RR^\nu)\cap L^{p'}(\RR^\nu)$, $f\in L^p(\RR^\nu)$,
$p\in[1,\infty]$ and $q\in[p,\infty]$ in any dimension $\nu\in\NN$. If $q=\infty$,
then the $L^\infty$-norms on the left hand sides of \eqref{convbd1} as well as of
\cref{pqbdm0,relLpLinfty} below can be replaced by actual suprema.

\begin{lem}\label{lemrelLpLinftym0}
	Suppose that $m_{\p}=0$ and let $p\in[1,\infty]$ and $q\in[p,\infty]$. Then
	\begin{align}\label{pqbdm0}
		\|\rho_{0,t}*f\|_q&\le c_{d,p,q}t^{-d(p^{-1}-q^{-1})}\|f\|_p,\quad f\in L^p(\RR^d),\, t>0.
	\end{align}
\end{lem}

\begin{proof}
	In view of \eqref{convbd1} and $\|\rho_{0,t}\|_1=1$ it suffices to verify that
	$\|\rho_{0,t}\|_{p'}=c_{d,p}t^{-d/p}$ for $p\in[1,\infty)$, which follows by scaling ($y=tz$).
\end{proof}

\begin{lem}\label{lemrelLpLinfty}
	Let $m_{\p}>0$, $p,\tilde{p}\in[1,\infty]$ and $q\in[p\vee\tilde{p},\infty]$. Then 
	\begin{align}\nonumber
		\|\rho_{L,m_{\p},t}*f\|_q&\le c_{d,\tilde{p},q}t^{-d(\tilde{p}^{-1}-q^{-1})}\|f\|_{\tilde{p}}
		\\\label{relLpLinfty}
		&\quad+c_{d,p,q}\eul^{Lt}\bigg(\frac{m_{\p}}{m_{\p}-L}\bigg)^{d/2}
		\bigg(\frac{m_{\p}-L}{t}\bigg)^{d(p^{-1}-q^{-1})/2}\|f\|_p,
	\end{align}
	for all $f\in L^p(\RR^d)\cap L^{\tilde{p}}(\RR^d)$, $L\in[0,m_{\p})$ and $t>0$.
\end{lem}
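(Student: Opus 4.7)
The plan is to use the decomposition $\rho_{L,m_{\p},t}=\rho^{(1)}_{L,m_{\p},t}+\rho^{(2)}_{L,m_{\p},t}$ from \cref{defrho12}, so that by the triangle inequality it suffices to bound $\|\rho^{(i)}_{L,m_{\p},t}*f\|_q$ for $i=1,2$ separately. Each of these two convolutions will be estimated via the weighted Young-type inequality \cref{convbd1}, but with a \emph{different} exponent adapted to the piece at hand: $\tilde p$ for the short-range piece $\rho^{(1)}$, which behaves on $\Upsilon_t$ like a massless density and admits cheap polynomial bounds independent of $L$, and $p$ for the long-range piece $\rho^{(2)}$, which is exponentially small at infinity but carries all the growth factors $\eul^{Lt}$ and $m_{\p}/(m_{\p}-L)$ coming from the large-argument asymptotics of the Bessel function.

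For the short-range piece I would run \cref{convbd1} with $\tilde p$ in place of $p$ and insert the two cases $\|\rho^{(1)}_{L,m_{\p},t}\|_{\tilde p'}\le c_{d,\tilde p}\,t^{-d/\tilde p}$ and $\|\rho^{(1)}_{L,m_{\p},t}\|_1\le c_d$ of \cref{LpLinfty1}; the resulting power of $t$ equals $-(1-\tilde p/q)\,d/\tilde p=-d(1/\tilde p-1/q)$, reproducing the first term on the right-hand side of \cref{relLpLinfty}.

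For the long-range piece I would apply \cref{convbd1} with exponent $p$ and use \cref{LpLinfty2} with $p$ replaced by $p'$ to control $\|\rho^{(2)}_{L,m_{\p},t}\|_{p'}$ and with $p=1$ to control $\|\rho^{(2)}_{L,m_{\p},t}\|_1$. After collecting exponents and simplifying via the elementary identity $(1-p/q)/p'+p/q=1-1/p+1/q$, the $(m_{\p}-L)$-factor reorganizes as $(m_{\p}-L)^{-d/2}\cdot(m_{\p}-L)^{d(1/p-1/q)/2}$ and, together with $m_{\p}^{d/2}\eul^{Lt}$ and the resulting $t$-exponent $-d(1/p-1/q)/2$, produces exactly the second term in \cref{relLpLinfty}. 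Summing the two bounds finishes the proof. The only genuine work is this exponent bookkeeping, and I do not expect a real obstacle, since the two pieces have already been estimated in \cref{lemrhoexp1}.
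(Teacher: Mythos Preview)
Your proposal is correct and follows essentially the same approach as the paper: split $\rho_{L,m_{\p},t}=\rho^{(1)}_{L,m_{\p},t}+\rho^{(2)}_{L,m_{\p},t}$, apply \cref{convbd1} with exponent $\tilde p$ to the first piece and $p$ to the second, and then insert \cref{LpLinfty1} (with $\tilde p'$ and $1$) and \cref{LpLinfty2} (with $p'$ and $1$). Your exponent bookkeeping is accurate and reproduces exactly what the paper does.
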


\begin{proof}
	Again using the notation \eqref{defrhoLmt} and \eqref{defrho12} and applying \eqref{convbd1}
	for two choices of exponents $p$ and $\tilde{p}$ we find
	\begin{align*}
		\|\rho_{L,m_{\p},t}*f\|_q
		&\le\|\rho_{L,m_{\p},t}^{(1)}\|_{\tilde{p}'}^{1-\tilde{p}/q}
		\|\rho_{L,m_{\p},t}^{(1)}\|_1^{\tilde{p}/q}\|f\|_{\tilde{p}}
		+\|\rho_{L,m_{\p},t}^{(2)}\|_{p'}^{1-p/q}\|\rho_{L,m_{\p},t}^{(2)}\|_1^{p/q}\|f\|_{p}.
	\end{align*}
	Combining this bound with \eqref{LpLinfty1} and \eqref{LpLinfty2} 
	(with $1,\tilde{p}'$ and $1,p'$ put in place of $p$, respectively)
	we arrive at \eqref{relLpLinfty} for $m_{\p}>0$.
\end{proof}

By the previous two lemmas and the rotation invariance of $\rho_{L,m_{\p},t}$, 
\begin{align}\label{eq:convolution}
	\EE[\eul^{L|X_t|}f(X_t^x)]&=(\rho_{L,m_{\p},t}*f)(x)=
	(\rho_{L,m_{\p},t}^{(1)}*f)(x)+(\rho_{L,m_{\p},t}^{(2)}*f)(x),
\end{align}
for all $t>0$, $x\in\RR^{d}$, $f\in L^p(\RR^d)$ and $p\in[1,\infty]$, if $0\le L<m_{\p}$ or $L=m_{\p}=0$.
The previous two lemmas combined with the case $L=0$ of \cref{eq:convolution} further imply the following result.

\begin{cor}\label{correlLpLinfty}
	Considering arbitrary $m_{\p}\ge0$ and $p\in[1,\infty]$, we always have
	\begin{align}\label{relLpLinfty2}
		\sup_{x\in\RR^d}\EE[|f(X_t^x)|]
		&\le c_{d,p}t^{-d/2p}(\|f\|_{2p}+m_{\p}^{d/2p}\|f\|_p),\quad t>0,
	\end{align}
	provided that $f\in L^p(\RR^d)\cap L^{2p}(\RR^d)$, if $m_{\p}>0$, and $f\in L^{2p}(\RR^d)$, if $m_{\p}=0$.
\end{cor}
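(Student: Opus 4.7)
The plan is to extract the statement directly from the two preceding lemmas by a judicious choice of exponents, which is essentially all the work that needs to be done.

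First, I would rewrite the supremum side using the convolution identity. Taking $L=0$ in \eqref{eq:convolution} and replacing $f$ by $|f|$ yields
\begin{align*}
\sup_{x\in\RR^d}\EE[|f(X_t^x)|]
&=\sup_{x\in\RR^d}(\rho_{0,m_{\p},t}*|f|)(x)
\le \|\rho_{0,m_{\p},t}*|f|\|_\infty,
\end{align*}
which reduces the problem to an $L^{p}\to L^\infty$ bound on convolution by $\rho_{0,m_{\p},t}$. The norms $\|\,|f|\,\|_p=\|f\|_p$ and $\|\,|f|\,\|_{2p}=\|f\|_{2p}$ may then be used interchangeably.

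Next I would specialize the appropriate lemma according to whether the particle is massive or not. For $m_{\p}>0$ I would invoke \cref{lemrelLpLinfty} with the choices $L=0$, $q=\infty$ and $\tilde{p}=2p$. This choice is dictated by wanting the same time-scaling factor $t^{-d/2p}$ to emerge from both summands on the right-hand side of \eqref{relLpLinfty}: the first summand contributes $c_{d,p}t^{-d/\tilde{p}}\|f\|_{\tilde{p}}=c_{d,p}t^{-d/2p}\|f\|_{2p}$, while the second summand becomes $c_{d,p}(m_{\p}/t)^{d/2p}\|f\|_p=c_{d,p}m_{\p}^{d/2p}t^{-d/2p}\|f\|_p$. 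Adding the two and absorbing constants yields exactly the right hand side of \eqref{relLpLinfty2}.

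For $m_{\p}=0$ the second summand is zero, and \cref{lemrelLpLinftym0} applied with $q=\infty$ and exponent $2p$ directly gives $\|\rho_{0,t}*|f|\|_\infty\le c_{d,p}t^{-d/2p}\|f\|_{2p}$, which matches the asserted bound in this case. Since there are no obstacles here beyond matching exponents, the only point worth a brief comment is that $p=\infty$ is covered because \cref{lemrhoexp1} explicitly allows $p=\infty$ (and then $2p=\infty$ too, so both norms on the right-hand side reduce to $\|f\|_\infty$).
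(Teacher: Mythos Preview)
Your proposal is correct and follows essentially the same route as the paper, which simply says that the corollary follows directly from \cref{lemrelLpLinftym0,lemrelLpLinfty} combined with the $L=0$ case of \eqref{eq:convolution}. You have just spelled out the choice of exponents ($q=\infty$, $\tilde{p}=2p$) that makes this work.
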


\subsection{Weighted \texorpdfstring{$L^p$}{Lp} to \texorpdfstring{$L^q$}{Lq} bounds for \texorpdfstring{$N$}{N} relativistic particles}

\noindent
We also need a variant of our weighted $L^p$ to $L^q$ bounds applying to a vector 
$X=(X_1,\ldots,X_N)$ comprising $N$ independent L\'{e}vy processes $X_j$,
each having L\'{e}vy symbol $-\psi_d$ with $m_{\p}\ge0$.  To state the next result we write
\begin{align*}
	\rho_{L,m_{\p},t}^{\otimes_N}(y)\coloneq\rho_{L,m_{\p},t}(y_1)\cdots\rho_{L,m_{\p},t}(y_N),
	\quad y=(y_1,\ldots,y_N)\in(\RR^d)^N.
\end{align*}
The independence of $X_1,\ldots,X_N$ entails (recall our definition \eqref{1norm2} of $|\cdot|_1$)
\begin{align}\label{relLpLqN1}
	\EE[\eul^{L|X_t^x|_1}|f(X_t^x)|]&= (\rho_{L,m_{\p},t}^{\otimes_N}*|f|)(x),\quad x\in\RR^{dN},
	\,f\in L^p(\RR^{dN}).
\end{align}

\begin{lem}\label{lemrelLpLinftyN}
	In the situation described in the preceding paragraph let $t>0$, $L\in[0,m_{\p})$ for $m_{\p}>0$ and
	$L=0$ for $m_{\p}=0$,
	$p\in[1,\infty]$ and $q\in[p,\infty]$. Then
	\begin{align}\label{relLpLinftyN}
		\|\rho_{L,m_{\p},t}^{\otimes_N}*f\|_q&\le c_{d,p,q}^N\Theta_{d,p,q}(m_{\p},L,t)^N\|f\|_p,
	\end{align}
	for all $f\in L^p(\RR^{dN})$, where
	\begin{align*}
		\Theta_{d,p,q}(m_{\p},L,t)&\coloneq
		\frac{1}{t^{d(p^{-1}-q^{-1})}}+\eul^{Lt}\bigg(\frac{m_{\p}}{m_{\p}-L}\bigg)^{d/2}
		\bigg(\frac{m_{\p}-L}{t}\bigg)^{d(p^{-1}-q^{-1})/2},
	\end{align*}
	for $m_{\p}>0$, and $\Theta_{d,p,q}(0,0,t)\coloneq t^{-d(p^{-1}-q^{-1})}$.
\end{lem}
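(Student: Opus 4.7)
The passage from $N=1$ to general $N$ is handled by a single application of Young's convolution inequality in $\RR^{dN}$. Fix $r\in[1,\infty]$ by $1/r\coloneq 1-1/p+1/q$, which lies in $[0,1]$ precisely because $q\ge p$. Young's inequality together with Fubini's theorem applied to the tensor-product density gives
\begin{align*}
\|\rho_{L,m_{\p},t}^{\otimes_N}*f\|_q \le \|\rho_{L,m_{\p},t}^{\otimes_N}\|_r \|f\|_p = \|\rho_{L,m_{\p},t}\|_r^N \|f\|_p.
\end{align*}
Hence, after raising to the $N$-th power, the problem reduces to the one-particle bound $\|\rho_{L,m_{\p},t}\|_r \le c_{d,p,q}\Theta_{d,p,q}(m_{\p},L,t)$, with a constant depending only on $d$, $p$, and $q$.

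For $m_{\p}=0$ (and thus $L=0$), the explicit formula \eqref{lawXd0} together with the scaling substitution $y=tz$ yields $\|\rho_{0,t}\|_r = c_{d,r}t^{-d/r'} = c_{d,r}t^{-d(1/p-1/q)}$, which equals $c_{d,r}\Theta_{d,p,q}(0,0,t)$ by the very definition of $\Theta_{d,p,q}$ in that case. For $m_{\p}>0$, one applies \cref{lemrhoexp1} with $s=r$ to the splitting $\rho_{L,m_{\p},t}=\rho_{L,m_{\p},t}^{(1)}+\rho_{L,m_{\p},t}^{(2)}$, so that
\begin{align*}
\|\rho_{L,m_{\p},t}\|_r \le c_{d,r}\bigl(t^{-d/r'}+\eul^{Lt}m_{\p}^{d/2}(m_{\p}-L)^{-d/(2r)}t^{-d/(2r')}\bigr).
\end{align*}
Inserting the identities $1/r'=1/p-1/q$ and $1/(2r)=1/2-(1/p-1/q)/2$ into the right-hand side, one rewrites $(m_{\p}-L)^{-d/(2r)}t^{-d/(2r')}=(m_{\p}-L)^{-d/2}\bigl((m_{\p}-L)/t\bigr)^{d(1/p-1/q)/2}$, which together with the first term reproduces exactly the two summands defining $\Theta_{d,p,q}(m_{\p},L,t)$.

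The only real obstacle is a short algebraic verification: with $r$ pinned down by Young's inequality, the exponents of $(m_{\p}-L)$ and $t$ furnished by \cref{lemrhoexp1} must combine to produce the asymmetric factor $(m_{\p}/(m_{\p}-L))^{d/2}\bigl((m_{\p}-L)/t\bigr)^{d(1/p-1/q)/2}$ appearing in $\Theta_{d,p,q}$. Beyond this bookkeeping and the (trivial) check that the endpoint cases $r=1$ (corresponding to $p=q$) and $r=\infty$ (corresponding to $p=1$, $q=\infty$) are also covered by \cref{lemrhoexp1}, no further analytic input is required.
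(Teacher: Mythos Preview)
Your proof is correct and reaches the same bound, but the route differs from the paper's. The paper first decomposes the tensor product density in $\RR^{dN}$ into $2^N$ pieces $\rho_{L,m_{\p},t}^{(i_1,\ldots,i_N)}$ (one for each choice of $\rho^{(1)}$ versus $\rho^{(2)}$ in each factor), applies the interpolation-type convolution bound \eqref{convbd1} to each piece, and then recombines via the factorization $\|\rho^{(i_1,\ldots,i_N)}\|_{\tilde p}=\prod_j\|\rho^{(i_j)}\|_{\tilde p}$, which collapses the $2^N$-fold sum into $\bigl(\sum_{i=1}^2\|\rho^{(i)}\|_{p'}^{1-p/q}\|\rho^{(i)}\|_1^{p/q}\bigr)^N$. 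You instead apply Young's inequality once to the full tensor product, use $\|\rho^{\otimes_N}\|_r=\|\rho\|_r^N$, and split into the two pieces $\rho^{(1)}+\rho^{(2)}$ only at the one-particle level. Both arguments feed into \cref{lemrhoexp1} and yield the same final expression for $\Theta_{d,p,q}$; your version is marginally more direct since it avoids the $2^N$-term sum, while the paper's version makes the role of the two regimes (the ball $\Upsilon_t$ and its complement) more visible at the $N$-particle level.
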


\begin{proof}
	We put
	$\rho_{L,m_{\p},t}^{(i_1,...,i_N)}(y)\coloneq \rho_{L,m_{\p},t}^{(i_1)}(y_1)\cdots\rho_{L,m_{\p},t}^{(i_N)}(y_N)$
	for $i_1,\ldots,i_N\in\{1,2\}$. On account of \eqref{convbd1},
	\begin{align*}
		\|\rho_{L,m_{\p},t}^{\otimes_N}*f\|_q&\le \sum_{i_1,...,i_N=1}^2
		\|\rho_{L,m_{\p},t}^{(i_1,...,i_N)}*f\|_q
		\\
		&\le \sum_{i_1,...,i_N=1}^2
		\|\rho_{L,m_{\p},t}^{(i_1,...,i_N)}\|_{p'}^{1-p/q}\|\rho_{L,m_{\p},t}^{(i_1,...,i_N)}\|_1^{p/q}\|f\|_p
		\\
		&=\bigg(\sum_{i=1}^2\|\rho_{L,m_{\p},t}^{(i)}\|_{p'}^{1-p/q}\|\rho_{L,m_{\p},t}^{(i)}\|_1^{p/q}\bigg)^N\|f\|_p.
	\end{align*}
	Here we used that $\|\rho_{L,m_{\p},t}^{(i_1,...,i_N)}\|_{\tilde{p}}=
	\|\rho_{L,m_{\p},t}^{(i_1)}\|_{\tilde{p}}\cdots\|\rho_{L,m_{\p},t}^{(i_N)}\|_{\tilde{p}}$
	for all $\tilde{p}\in[1,\infty]$
	in the last equality. We conclude exactly as in the proofs of Lemma~\ref{lemrelLpLinftym0} 
	and Lemma~\ref{lemrelLpLinfty}, again using \eqref{LpLinfty1} and  \eqref{LpLinfty2} for $m_{\p}>0$.
\end{proof}

\begin{cor}\label{corwLpLqNW}
Let $\scr{H}$ be a separable Hilbert space and let $Y:\RR^{dN}\times\Omega\to\LO(\scr{H})$
be strongly product measurable. Write $Y(x)\coloneq Y(x,\cdot)$ and assume that
\begin{align*}
C(\theta)&\coloneq  \sup_{x\in\RR^{dN}}\EE[\|Y(x)\|^{\theta}]^{1/\theta}<\infty,\quad \theta\in[1,\infty).
\end{align*}
Let $t\ge0$, $p\in(1,\infty]$, $q\in[p,\infty]$ and $\Psi\in L^p(\RR^{dN},\scr{H})$. Finally, let
$G$, $L$ and $r$ be given as in \cref{prop:LpLq}. Then $Y(x)(\eul^{-G}\Psi)(X_t^x)$ is $\PP$-integrable
for all $x\in\RR^{dN}$ and the function $\Phi:\RR^{dN}\to\scr{H}$ defined by
\begin{align*}
\Phi(x)&\coloneq\eul^{G(x)}\EE[Y(x)(\eul^{-G}\Psi)(X_t^x)],\quad x\in\RR^{dN},
\end{align*}
is in $L^q(\RR^{dN},\scr{H})$ with
	\begin{align*}
\|\Phi\|_q
&\le c_{d,p,q}^Na(m_{\p},rL)^{\frac{dN}{2}}
\bigg[\frac{1}{t^2}\vee\frac{m_{\p}-L}{t}\bigg]^{\frac{dN}{2}(\frac{1}{p}-\frac{1}{q})}
\eul^{NLt}C(p'r')\|\Psi\|_p,
\end{align*}
where $p'$ and $r'$ denote the dual exponents of $p$ and $r$, respectively.
In the above the essential supremum in $\|\Phi\|_\infty$ can be replaced by a supremum.
Furthermore, $a(m_{\p},rL)\coloneq m_{\p}/(m_{\p}-rL)$, if $m_{\p}>0$, and
$a(0,rL)\coloneq1$.
\end{cor}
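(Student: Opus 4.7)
The plan is to derive a pointwise bound on $\|\Phi(x)\|_{\scr H}$ of the form ``constant $\cdot\bigl((\rho_{sL,m_{\p},t}^{\otimes N}*\|\Psi\|^s)(x)\bigr)^{1/s}$'' and then extract the $L^q$-norm via \cref{lemrelLpLinftyN}. The $L$-Lipschitz assumption $|G(x+y)-G(x)|\le L|y|_1$ yields $\eul^{G(x)-G(X_t^x)}\le\eul^{L|X_t|_1}$, so
\begin{align*}
\|\Phi(x)\|_{\scr H}\le\EE\big[\eul^{L|X_t|_1}\|Y(x)\|\,\|\Psi(X_t^x)\|_{\scr H}\big].
\end{align*}
Applying H\"older's inequality with exponents $p'r'$ and $s\coloneq p'r'/(p'r'-1)$ separates the $\|Y\|$-factor and gives
\begin{align*}
\|\Phi(x)\|_{\scr H}\le C(p'r')\,\EE\big[\eul^{sL|X_t|_1}\|\Psi(X_t^x)\|_{\scr H}^s\big]^{1/s}.
\end{align*}
(Finiteness of $C(p'r')$ in the corner case $p=\infty$, $r'=1$ reduces to $C(1)\le C(q)$ for some $q>1$, by Jensen.)

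Two arithmetic facts render this estimate usable. First, $s\le r$: since $s=1/(1-1/(p'r'))$ is decreasing in $p'r'\ge r'$ with equality only at $p=\infty$, $sL\le rL<m_{\p}$, so $\rho_{sL,m_{\p},t}$ from \eqref{defrhoLmt} is a finite density. By independence of $X_1,\dots,X_N$ the inner expectation then equals $(\rho_{sL,m_{\p},t}^{\otimes N}*\|\Psi\|_{\scr H}^s)(x)$. Second, $p/s=p-(p-1)/r'\ge 1$ because $r'\ge 1$, so \cref{lemrelLpLinftyN} applies with input exponent $p/s$ and output exponent $q/s\ge p/s$. Taking the $L^q$-norm in $x$ and invoking that lemma gives
\begin{align*}
\|\Phi\|_q\le C(p'r')\,c_{d,p/s,q/s}^{N/s}\,\Theta_{d,p/s,q/s}(m_{\p},sL,t)^{N/s}\,\|\Psi\|_p.
\end{align*}

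To match the statement it remains to rewrite $\Theta^{N/s}$. Using $(A+B)^{N/s}\le 2^{N/s}(A^{N/s}+B^{N/s})$ together with $(1/p_1-1/q_1)/s=1/p-1/q$ for $p_1=p/s$, $q_1=q/s$, the two terms reduce to $t^{-dN(1/p-1/q)}$ and $\eul^{NLt}(m_{\p}/(m_{\p}-sL))^{dN/(2s)}((m_{\p}-sL)/t)^{dN(1/p-1/q)/2}$. Bounding $t^{-dN(1/p-1/q)}\le(t^{-2})^{dN(1/p-1/q)/2}$; $(m_{\p}/(m_{\p}-sL))^{dN/(2s)}\le a(m_{\p},rL)^{dN/2}$ (by $s\ge 1$ shrinking the exponent, $sL\le rL$ growing the base, and base $\ge 1$); and $((m_{\p}-sL)/t)^{dN(1/p-1/q)/2}\le((m_{\p}-L)/t)^{dN(1/p-1/q)/2}$ (by $sL\ge L$) produces the asserted form. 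For $q=\infty$, every estimate is pointwise in $x$, hence the supremum replaces the essential supremum. The degenerate case $m_{\p}=0$ forces $L=0$, $r=\infty$, $r'=1$, $s=p$, and the argument collapses to \cref{lemrelLpLinftyN} with input exponent $1$ and $L=0$. Pointwise integrability of $Y(x)(\eul^{-G}\Psi)(X_t^x)$ is a byproduct of the convolution representation combined with Young's inequality.

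The main delicacy is the interlocked verification of $s\le r$ and $p/s\ge 1$, both of which rest on the normalizations $1/p+1/p'=1/r+1/r'=1$. Once these are established the remainder of the proof is bookkeeping through the exponents in $\Theta$.
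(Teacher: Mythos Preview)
Your proof is correct and follows essentially the same strategy as the paper: bound $\eul^{G(x)-G(X_t^x)}\le\eul^{L|X_t|_1}$, separate $\|Y(x)\|$ via H\"older, and feed the remaining weighted convolution into \cref{lemrelLpLinftyN}. The only cosmetic difference is that the paper uses a three-way H\"older splitting $p^{-1}+(rp')^{-1}+(r'p')^{-1}=1$, distributing the exponential weight as $\eul^{L|X_t|_1/p}$ (paired with $\|\Psi\|$) and $\eul^{L|X_t|_1/p'}$ (bounded separately via \eqref{LpLinftyX}), so that the convolution lemma is invoked with input exponent $1$ rather than your $p/s$; this avoids your verifications $s\le r$ and $p/s\ge1$ at the cost of one extra factor to estimate.
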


\begin{proof}
Successively applying the bound $|G(x)-G(X_t^x)|\le L|X_t|_1$ and
H\"{o}lder's inequality with $p^{-1}+(rp')^{-1}+(r'p')^{-1}=1$, we find
\begin{align*}
\EE\big[\eul^{G(x)}\|Y(x)(\eul^{-G}\Psi)(X_t^x)\|_{\scr{H}}\big]
&\le C(p'r')\EE[\eul^{rL|X_t|_1}]^{1/p'r}\big\{\EE[\eul^{L|X_t|_1}\|\Psi(X_t^x)\|_{\scr{H}}^p]^{1/p}\big\},
\end{align*}
for all $x\in\RR^{dN}$ when $p<\infty$. 
If $p=\infty$, the expression in curly brackets $\{\cdots\}$
should be replaced by $\|\Psi\|_\infty$.
Applying \eqref{relLpLqN1} and \eqref{relLpLinftyN} (with $f=1$ and $p=q=\infty$), we find
\begin{align}\label{expbdLXt}
\EE[\eul^{r L|X_t|_1}]^{1/p'r}&\le c_d^{N/p'r}\eul^{NLt/p'}\bigg(\frac{m_{\p}}{m_{\p}-r L}\bigg)^{dN/2p'r}.
\end{align}
This finishes the proof in the case $p=\infty$, whence we assume $p<\infty$ from now on.
Since $\|\Psi(\cdot)\|_{\scr{H}}^p$ is in $L^1(\RR^{2N})$
with $L^1$-norm $\|\Psi\|_p^p$, we can also apply
\eqref{relLpLqN1} and \eqref{relLpLinftyN} (with $1$ and $q/p$ put in place
of $p$ and $q$, respectively) to the function given by
$h(x)\coloneq \EE[\eul^{L|X_t|_1}\|\Psi(X_t^x)\|_{\scr{H}}^p]$. 
Since $\|h^{1/p}\|_q=\|h\|_{q/p}^{1/p}$, this leads to
\begin{align}\label{hqbd}
\|h^{1/p}\|_q&\le c_{d,p,q}^N\eul^{NLt/p}\bigg(\frac{m_{\p}}{m_{\p}-L}\bigg)^{\frac{dN}{2p}}
\bigg[\frac{1}{t^2}\vee\frac{m_{\p}-L}{t}\bigg]^{\frac{dN}{2}(\frac{1}{p}-\frac{1}{q})}\|\Psi\|_p.
\end{align}
Combining the above H\"{o}lder estimate with \cref{expbdLXt,hqbd} we arrive at the asserted bound for finite $p$.
\end{proof}

\subsection{Carmona's bound on Kac averages in the relativistic case}

\noindent
Next, we observe the relativistic variant \eqref{relLpLinfty3} of a bound derived 
by Carmona for Brownian motions. The crucial aspect of \eqref{relLpLinfty3} is the
comparatively explicit dependence on the potential $v$ of the exponent on the right hand side.
This becomes important in the derivation of our bounds on the minimal energy of the
relativistic Nelson operator.

\begin{thm}
	Let $a>0$ and $Y$ be a $d$-dimensional L\'{e}vy process with L\'{e}vy symbol $-a\psi_d$,
	where $\psi_d$ is given by \eqref{defpsid}.
	Furthermore, let $p>d/2$ and assume that the Borel measurable function $v:\RR^d\to[0,\infty]$ is 
	$2p$-integrable, if $m_{\p}=0$, and
	both $p$- and $2p$-integrable, if $m_{\p}>0$. Then
	\begin{align}\nonumber
		&\sup_{x\in\RR^d}\EE\bigg[\exp\bigg(\int_0^tv(x+Y_s)\Id s\bigg)\bigg]
		\\\label{relLpLinfty3}
		&\le c_{d,p}'\exp\big(c_{d,p}a^{-d/(2p-d)}(m_{\p}^{d/2p}\|v\|_p+\|v\|_{2p})^{1/(1-d/2p)}t\big).
	\end{align}
\end{thm}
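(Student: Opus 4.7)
The plan is to combine the weighted $L^p$-$L^\infty$ bound from \cref{correlLpLinfty} with the classical Khasminskii argument, using the stationary independent increments of $Y$ to boost a short-time bound to the full interval $[0,t]$.

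First, observe that $Y$ has the same law as $(X_{as})_{s\ge 0}$, where $X$ is the Lévy process with symbol $-\psi_d$ whose transition density has already been computed. Applying \cref{relLpLinfty2} (with dimension $d$) at time $as$ therefore yields
\begin{align*}
\sup_{x\in\RR^d}\EE[v(x+Y_s)]
&\le c_{d,p}(as)^{-d/2p}\bigl(\|v\|_{2p}+m_{\p}^{d/2p}\|v\|_p\bigr),
\quad s>0.
\end{align*}
Writing $A\coloneq \|v\|_{2p}+m_{\p}^{d/2p}\|v\|_p$ and integrating in $s$ (which is permissible since $p>d/2$), I obtain
\begin{align*}
\sup_{x\in\RR^d}\EE\Big[\int_0^\tau v(x+Y_s)\Id s\Big]
&\le \frac{c_{d,p}}{1-d/2p}\cdot a^{-d/2p}A\,\tau^{1-d/2p},\quad \tau>0.
\end{align*}

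Next, choose $\tau>0$ so that the right-hand side equals $1/2$. This gives the explicit choice
\begin{align*}
\tau &= \kappa_{d,p}\,a^{d/(2p-d)}A^{-1/(1-d/2p)},
\end{align*}
for a dimensional constant $\kappa_{d,p}>0$, using the identity $\tfrac{d/2p}{1-d/2p}=\tfrac{d}{2p-d}$. By Khasminskii's lemma (cf.\ the standard exponential-moment consequence recorded in \eqref{V2}, applied to the non-negative Borel function $v$ and the process $Y$ on the small time interval $[0,\tau]$, with constant $1/2$), this implies
\begin{align*}
\sup_{x\in\RR^d}\EE\Big[\exp\Big(\int_0^\tau v(x+Y_s)\Id s\Big)\Big]\le 2.
\end{align*}

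Finally, I upgrade this short-time bound to arbitrary $t\ge 0$ by iterating with the (temporally homogeneous, spatially translation-invariant) Markov property of $Y$. Setting $n\coloneq \lceil t/\tau\rceil$ and conditioning on $\fr{F}_{(k-1)\tau}$ for $k=n,n-1,\dots,1$, the stationary independent increments of $Y$ give
\begin{align*}
\sup_{x\in\RR^d}\EE\Big[\exp\Big(\int_0^{n\tau} v(x+Y_s)\Id s\Big)\Big]
&\le 2^n
\le 2\exp\!\big((\ln 2)\,t/\tau\big).
\end{align*}
Monotonicity in $t$ and insertion of the explicit value of $1/\tau$ then produce the asserted inequality with $c_{d,p}=(\ln 2)/\kappa_{d,p}$ and $c'_{d,p}=2$.

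The only non-routine point is verifying that the constants line up correctly: the exponent $1/(1-d/2p)$ on $A$ and the exponent $d/(2p-d)$ on $a^{-1}$ in the final bound both arise from the single scaling $\tau\propto A^{-1/(1-d/2p)}a^{d/(2p-d)}$, and checking this is just the algebraic identity $\tfrac{d/2p}{1-d/2p}=\tfrac{d}{2p-d}$. Everything else is classical: the weighted transition-density estimate of \cref{app:lplq} for the short-time expectation, Khasminskii's lemma for the first exponential moment, and the Markov property to extend from $[0,\tau]$ to $[0,t]$.
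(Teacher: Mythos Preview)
Your proposal is correct and follows essentially the same route as the paper, which simply refers to Carmona's argument: the short-time $L^p$-$L^\infty$ transition estimate \eqref{relLpLinfty2} feeds into Khasminskii's lemma on a suitably chosen interval $[0,\tau]$, and then the Markov property iterates this to $[0,t]$. The only cosmetic imprecision is that \eqref{V2} is the long-time consequence of Khasminskii rather than the lemma itself, but you state and use the correct short-time version (bound $\le 2$ when the expected integral is $\le 1/2$), so this does not affect the argument.
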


\begin{proof}
	Thanks to \eqref{relLpLinfty2} we can almost literally follow Carmona's proofs in
	\cite[Theorem~2.1 and Remark~3.1]{Carmona.1979} to obtain \eqref{relLpLinfty3} with $a=1$.
	The general case then follows from the equivalence of $Y$ and $(X_{at})_{t\ge0}$ and
	elementary substitutions; here $X$ again has L\'{e}vy symbol $-\psi_d$.
\end{proof}


\subsection*{Acknowledgements}
BH wants to thank Aalborg Universitet, where this work was initiated, for their kind hospitality.
OM is grateful for support by the Independent Research Fund Denmark via the 
project grant ``Mathematical Aspects of Ultraviolet Renormalization'' (8021-00242B).
The authors thank an anonymous referee for the constructive comments.


\bibliographystyle{halpha-abbrv}
\bibliography{../../Literature/00lit}

\end{document}